\newtheorem{theorem}{Theorem}[section]
\newtheorem{lemma}[theorem]{Lemma}
\newtheorem{definition}[theorem]{Definition}
\newtheorem{corollary}[theorem]{Corollary}
\newtheorem{conjecture}[theorem]{Conjecture}
\newtheorem{remark}[theorem]{Remark}
\crefname{procedure}{Procedure}{Procedures}
\newcommand{\unweightedmaxcut}[0]{$\mathrm{unweighted}$ $\mathrm{Max}$-$\mathrm{Cut}$\xspace}
\newcommand{\weightedmaxcut}[0]{$\mathrm{Min}$-$\mathrm{UnCut}$\xspace}
\newcommand{\maxcutforreal}[0]{$\mathrm{Max}$-$\mathrm{Cut}$\xspace}
\newcommand{\gapweightedmaxcut}[0]{$(1 - \varepsilon, 1 - \varepsilon^c) \text{-} \mathrm{gap}$ \maxcutforreal}
\newcommand{\threesat}[0]{$3 \text{-} \mathrm{SAT}$\xspace}
\newcommand{\ksat}[0]{$k \text{-} \mathrm{SAT}$\xspace}
\newcommand{\cvptwo}[0]{$\mathrm{CVP}_2$\xspace}
\newcommand{\cvpsubp}[0]{$\mathrm{CVP}_p$\xspace}
\newcommand{\gammacvptwo}[0]{$\gamma \text{-}$\cvptwo}
\newcommand{\gammacvp}[0]{$\gamma \text{-}$\cvpsubp}
\newcommand{\uniquelabel}[0]{$\mathrm{UNIQUE \text{-} LABEL}$\xspace}
\newcommand{\weightedmaxtwosat}[0]{\ensuremath{\mathrm{Max}\text{-}2\text{-}\mathrm{SAT}}\xspace}
\newcommand{\unweightedmaxtwosat}[0]{\ensuremath{\mathrm{unweighted}\text{ }\weightedmaxtwosat}\xspace}
\newcommand{\uniquegames}[0]{$\mathrm{Unique\text{ }Games}$\xspace}
\newcommand{\twontwocvpip}[0]{$2^{n^2}$-$\mathrm{CVP}^{\mathrm{IP}}$}
\newcommand{\orlang}[1]{$\mathrm{OR}(\text{#1})$\xspace}
\newcommand{\gammaANN}[0]{$\gamma \text{-} \mathrm{ANN}$\xspace}
\newcommand{\gammacvpunspec}[0]{$\gamma \text{-} \mathrm{CVP}$\xspace}
\newcommand{\binarycvp}[1][p]{$\mathrm{CVP}^{\{0,1\}}_{#1}$\xspace}
\newcommand{\gammabinarycvp}[1][p]{$\gamma \text{-}$\binarycvp[#1]}
\newcommand{\maxtwolintwo}[0]{\textup{Max-2-Lin(2)}\xspace}
\newcommand{\esgap}[0]{$(\varepsilon, \varsigma)\text{-}\mathrm{gap}$\xspace}
\newcommand{\minusesgap}[0]{$(1-\varepsilon, 1-\varsigma)\text{-}\mathrm{gap}$\xspace}
\newcommand{\eecgap}[0]{$(1-\varepsilon, 1-\varepsilon^c)\text{-}\mathrm{gap}$\xspace}
\newcommand{\esgamma}[0]{$\sqrt[p]{\varsigma/\varepsilon}$\xspace}
\newcommand{\tmop}[1]{\ensuremath{\operatorname{#1}}}
\newcommand{\tmtextit}[1]{\text{{\itshape{#1}}}}
\newcommand{\timecompgammacvp}[0]{\ensuremath{T_{\tmop{CVP}, \gamma} (n)}\xspace}
\newcommand{\timecompesgapmc}[0]{\ensuremath{T_{\tmop{MC}, \varepsilon, \varsigma} (n)}\xspace}
\title{On the (Classical and Quantum) Fine-Grained Complexity of Approximate CVP and Max-Cut}
\author{Jeremy Ahrens Huang}
\author{Young Kun Ko}
\author{Chunhao Wang}
\affil[]{Department of Computer Science and Engineering, Pennsylvania State University}
\affil[]{Email: \{jeremyah,ykko,cwang\}@psu.edu}
\date{}
\begin{document}
%\pagestyle{fancy}
%\fancyhead{}
%\fancyfoot[R]{Draft: 2024-09-25}

% FOCS Reviewer Comment Tracker: https://docs.google.com/document/d/1F1oFuPpNMtt-I9QnbglN-kweeLw9Ouhj2DInBoDKSPA/edit?tab=t.0

\maketitle

\begin{abstract}

%Current post-quantum lattice-based cryptography assumes an exponential lower-bound on the approximate Closest Vector Problem ($\gamma\text{-}\mathrm{CVP}_2$) with approximation factors $\gamma$ greater than $\sqrt n / \log n$. Towards the goal of providing evidence supporting this assumption, 
We show a linear-size reduction from gap Max-2-Lin(2) (a generalization of the approximate Maximum Cut, or gap $\mathrm{Max}$-$\mathrm{Cut}$, problem) to $\gamma\text{-}\mathrm{CVP}_p$ for $\gamma = \mathrm{O}(1)$ and finite $p \geq 1$, as well as a no-go theorem against poly-sized non-adaptive quantum reductions from $k$-$\mathrm{SAT}$ to $\mathrm{CVP}_2$. This implies three headline results:

(i) Faster algorithms for $\gamma\text{-}\mathrm{CVP}_p$ are also faster algorithms for Max-2-Lin(2) and Max-Cut. Depending on the approximation regime, even a $2^{0.78n}$-time or $2^{0.3n}$-time algorithm would improve upon the state-of-the-art algorithm such as Williams' 2004 algorithm [\textit{Theoretical Computer Science} 2005] or Arora, Barak, and Steurer's 2010 algorithm [$\textit{Journal of the ACM}$ 2015]. This provides evidence that $\gamma\text{-}\mathrm{CVP}_p$ for $\gamma = \mathrm{O}(1)$ requires exponential time, improving upon the previous exponential lower-bound for $\gamma\text{-}\mathrm{CVP}_2$ with $\gamma < 3$ by Bennett, Golovnev, and Stephens-Davidowitz [$\textit{FOCS}$ 2017].

(ii) A new almost $2^{(1/2 + \varepsilon/4\varsigma + o(1)) n}$-time classical algorithm and a new almost $2^{(1/3 + \varepsilon/6\varsigma + o(1)) n}$-time quantum algorithm for $(1-\varepsilon, 1-\varsigma)$-gap Max-Cut. This algorithm is faster than the algorithm of Arora, Barak and Steuer [$\textit{Journal of the ACM}$ 2015], as well as the algorithm of Williams [$\textit{Theoretical Computer Science}$ 2004], 
and the algorithm of Manurangsi and Trevisan [\textit{APPROX/RANDOM} 2018] 
when $c_0 \varepsilon < \varsigma < c_1 \varepsilon$ for some constants $c_0, c_1$. 

(iii) If the Quantum Strong Exponential Time Hypothesis (QSETH) can be used to show a $2^{\delta n}$-time lower-bound for $\mathrm{Max}$-$\mathrm{Cut}$, Max-2-Lin(2), or $\mathrm{CVP}_2$ for any constant $\delta > 0$, it must be via an adaptive quantum reduction unless $\mathrm{NP} \subseteq \mathrm{pr}\text{-}\mathrm{QSZK}$. This illuminates some difficulties in characterizing the hardness of approximate constraint satisfaction problems and shows that the post-quantum security of lattice-based cryptography likely cannot be supported by QSETH. This result complements the no-go results of Aggarwal and Kumar [$\textit{FOCS}$ 2023], who showed that the classical security of lattice-based cryptography likely cannot be supported by the classical Strong Exponential Time Hypothesis (SETH).

%by using a quantum accounting of their instance compression for $\mathrm{CVP}_2$
%using Aggarwal and Kumar's instance compression for $\mathrm{CVP}_2$ [$\textit{FOCS}$ 2023] and Drucker's quantum instance compression upper-bounds [$\textit{SIAM J. Computing}$ 2015]. This implies three headline results

\end{abstract}

\newpage

\setcounter{tocdepth}{2}
\tableofcontents

\newpage

\section{Introduction}

%We show a reduction from one hard problem (\esgap \weightedmaxcut) to another (\gammacvpunspec) with a remarkable combination of properties: the reduction \emph{exactly} preserves both the size and the approximation factor of any instance given to it. This allows us to establish a tight (\emph{fine-grained}) relationship between the time complexities of hard problems with large approximation factors. To the best of our knowledge, such a reduction is unprecedented in the literature and our result establishes that studying the fine-grained complexity of hard problems with large approximation factors is feasible\footnote{It is difficult to confidently establish the absence of a result in the whole complexity literature. However, we are confident that ours is the first size and approximation factor preserving reduction from a Constraint Satisfaction Problem to a Lattice Problem.}. 

The Approximate Closest Vector Problem (\gammacvpunspec) is a hard approximate problem whose asymptotic time complexity \timecompgammacvp is not well understood for approximation parameters $3 \leqslant \gamma < 2^n$. The Approximate Minimum Un-Cut Problem (\esgap \weightedmaxcut), also known as the Approximate Maximum Cut Problem, is another hard approximate problem whose asymptotic time complexity \timecompesgapmc is not well understood for approximation parameters $\varepsilon < \varsigma \leqslant \varepsilon^2$. We show a reduction from \esgap \weightedmaxcut to \gammacvpunspec with a remarkable combination of properties: the reduction precisely preserves both the size and the approximation factor of any instance given to it. By possessing both properties at once, our reduction establishes a tight relationship between the time complexity of \gammacvpunspec 
%(i.e., $\timecompgammacvp$) 
and of \esgap \weightedmaxcut 
%(i.e., $\timecompesgapmc$) 
in the approximation regimes of interest. As we will explain in \cref{subsec:technical-difficulties}, finding a reduction possessing both properties at once was a longstanding technical hurdle in the way of establishing such a relationship.

Specifically, our main theorem (\cref{thm:weighted-main-theorem}) states that there is a reduction from \esgap \weightedmaxcut of size $n$ to \esgamma-\cvpsubp of size $n$ (where $p \geqslant 1$ is a constant parameter from CVP). This implies that

\vspace*{-5mm}
\begin{align*}
    \timecompesgapmc &\leqslant \timecompgammacvp
    \intertext{whenever the following (simplified) inequality between the approximation parameters is satisfied:}
    \gamma &\leqslant \sqrt[p]{\varsigma / \varepsilon}.
\end{align*}

We use \cref{thm:weighted-main-theorem} to make three major advancements in the time complexity of CVP and \weightedmaxcut:

First, \cref{thm:weighted-main-theorem} directly gives new exponential-time conditional lower bounds (CLBs) on \gammacvpunspec for $\gamma = \mathrm{O}(1)$ in both the classical and quantum settings. These CLBs are a win-win-win situation: either there are very strong lower bounds on \gammacvpunspec, or there are extremely exciting new algorithms for \esgap \weightedmaxcut, or there are both strong lower bounds on \gammacvpunspec and exciting new algorithms for \esgap \weightedmaxcut. This win-win-win situation is the defining property of fine-grained reductions (see \cref{subsec:reduction-size-general} and \cref{sec:fine-grained-complexity} more on fine-grained complexity).
%Since we fervently hope that \gammacvpunspec requires exponential time for greater than poly-logarithmic $\gamma$, and since we have been in a 15 year dry spell for \weightedmaxcut algorithms, all the ways this CLB could resolve are exciting! 
Previous exponential-time CLBs were only for CVP with $\gamma < 3$ or $\gamma = 1+\mathrm{O}(1/\mathrm{poly}(n))$ in the classical setting from the fine-grained reductions of \cite{BGS17, ABGS21}. 

%This means that either (1) there are no $2^{\delta n}$-time algorithms for polylog-CVP for some large $\delta$ like $\delta = 1/3$, or (2) there are super-fast (e.g. sub-exponential) algorithms for Approximate \weightedmaxcut and Unique Games, or (3) there is a middle ground where polylog-CVP still requires exponential time and there are still new faster exponential-time algorithms for \weightedmaxcut. 

%This means that we don't live in a world where CVP and \weightedmaxcut are \emph{both} boring. 
% This tells us that one of three exciting possibilities is true: on one extreme poly-logarithmically approximate CVP could require $2^{\delta n}$ time for some large constant $\delta$ like $\delta = 1/3$, on the other extreme there could too fast to be true (e.g. sub-exponential) algorithms for \weightedmaxcut and Unique Games, or 

Second, in combination with a new quantum algorithm presented in this paper and a classical algorithm from \cite{KS20} for special instances of \gammacvpunspec, \cref{thm:weighted-main-theorem} implies two new algorithms for \esgap \weightedmaxcut, one classical and one quantum, which we present in \cref{thm:classical-max-cut-alg} and \cref{thm:quantum-max-cut-alg}. Our classical algorithm is faster than both the fastest exact algorithm \cite{Wil05} and the previous fastest approximation algorithm \cite{MT18}\footnote{Note that a \weightedmaxcut algorithm is listed as an open question in the published version of this article and is present in the arXiv version which was submitted later.} for all instances where $c_1 \varepsilon \geqslant \varsigma > c_0 \varepsilon$ for some constants $c_1, c_0$. Our quantum algorithm is the first to beat Grover search for \esgap \weightedmaxcut!

Third, in combination with a new no-go theorem against \emph{quantum} reductions from \ksat to CVP presented in this paper (\cref{thm:quantum-cvp2-incompressible}) and the no-go theorems against classical reductions in \cite{AK23}, we show that time-complexity lower bounds conditioned on \ksat for \weightedmaxcut, including exact weighted \weightedmaxcut, are unlikely in \emph{both} the quantum and classical settings, assuming no complexity-theoretic disasters occur, for reasons we will explain in \cref{subsec:results-quant-cvp-mc-nogo}. This result could help explain why some desired results in the complexity of approximate problems, such as a proof of the Unique Games Conjecture, have been difficult to obtain. 

%Next we will explain why \emph{exact} reduction size, including numerical factors, is paramount for establishing a tight relationship between the time complexities of two hard problems.  

Next we will define \gammacvpunspec and \esgap \weightedmaxcut, state our main theorem, and then we will explain each advancement and its related work in more detail.

%\footnote{We prove that any non-adaptive quantum reduction from \ksat on $n$ variables to \weightedmaxcut on $\mathrm{poly}(n)$ variables would imply NP $\subseteq$ pr-QSZK (that there quantum statistical zero knowledge proofs for all problems in NP). It is known that NP $\nsubseteq$ pr-QSZK relative to a random oracle \cite{MW18}, and it is known that the polynomial hierarchy would collapse if there were \tmtextit{classical} statistical zero knowledge proofs for all NP problems. While our theorem leaves open the possibility of adaptive reductions from \ksat to \weightedmaxcut, adaptive reductions are more difficult to show and we think it is likely that they will also be ruled out in the quantum setting as they have been in the classical setting.}

%\footnote{The classical result, which we give in \cref{thm:classical-mc-no-turing} and \cref{thm:no-randomized-reductions-from-seth}, is this: the existence of any classical adaptive (Turing) reduction from with only one-sided error or any classical non-adaptive reduction (including randomized reductions with two-sided error) from \ksat on $n$ variables to \weightedmaxcut on $\mathrm{poly}(n)$ variables implies the collapse of the polynomial-time hierarchy. The classical result could also have been obtained indirectly via a corollary in \cite{BGS17}, but this seems to have gone unnoticed in the literature.}

\subsection{\gammacvpunspec, \weightedmaxcut, and our main theorem}

First we will define \gammacvpunspec. Given a lattice $\mathcal{L}$, a target vector $\bm{t}$, and a distance $r$, \gammacvpunspec is the problem of deciding if $\mathcal{L}$ is $r$ close to $\bm{t}$. If $\mathcal{L}$ is not $r$ close to $\bm{t}$, then it is promised to be more than $\gamma r$ far from $\bm{t}$. A lattice $\mathcal{L} \subset \mathbb{R}^d$ is a set of all the linear combinations of linearly independent basis vectors $\bm{b}^{(1)}, \ldots, \bm{b}^{(n)} \in \mathbb{R}^d$ with integer coefficients,
\begin{align*}
  \mathcal{L}=\mathcal{L} (\bm{b}^{(1)}, \ldots, \bm{b}^{(n)}) \coloneqq
   \left\{ \sum_{i = 1}^n c_i  \bm{b}^{(i)} | c_i \in \mathbb{Z} \right\} .
\end{align*}
Here $n$ is known as the rank of the lattice $\mathcal{L}$ and $d$ is known as the ambient dimension of the lattice $\mathcal{L}$; they are main parameters governing the input size of \gammacvpunspec. The usual way we specify a lattice $\mathcal{L}$ in the input to \gammacvpunspec is by a choice of basis vectors for $\mathcal{L}$, which we represent as a matrix $B \coloneqq \left(\begin{smallmatrix} | &  & |\\ \bm{b}^{(1)} & \cdots & \bm{b}^{(n)}\\ | &  & | \end{smallmatrix}\right)$ that relates every lattice point $\bm{v} \in \mathcal{L}(B)$ to its coordinates $\bm{y} \in \mathbb{Z}^n$ by $\bm{v} = B \bm{y}$. The distance $\mathrm{dist}(\mathcal{L}, \bm{t})$ from a lattice $\mathcal{L}$ to a target $\bm{t}$ is given by the distance of the closest point in $\mathcal{L}$ to $\bm{t}$ in the $\ell_p$-norm. From now on we will call the problem \gammacvp and the distance function $\mathrm{dist}_p$ to emphasize that different $\ell_p$-norms give us different problems. \gammacvp is known to be NP-complete for $\gamma \leqslant n^{c/ \log \log n}$ (where $0 < c < \frac{1}{2}$ is a constant) \cite{DKRS03}, to NOT be NP-complete (unless P=NP) for all $\gamma > \sqrt{n}$ \cite{AR05}, and to be in P for $\gamma \geqslant 2^n$ \cite{SCH87}.

%See \cref{sec:lattice-problems} for more detail on CVP-type problems. 

Next we will define \esgap \weightedmaxcut. Given a weighted list of pairs between binary variables and two constants $0 \leqslant \varepsilon < \varsigma \leqslant 1$, \esgap \weightedmaxcut is the problem of deciding if there is an assignment to the variables such that at most a $\varepsilon$-fraction (by weight) of the variable pairs have equal values. If there is no such assignment, then it is promised that for every possible variable assignment at least a $\varsigma$-fraction of pairs have equal values. This makes \esgap \weightedmaxcut a weighted 2-CSP (constraint satisfaction problem) where either at most an $\varepsilon$-fraction of the constraints are unsatisfiable or at least a $\varsigma$-fraction of the constraints are unsatisfiable. \weightedmaxcut is a long-studied problem—the exact case was on Karp's original list of 21 NP-complete problems \cite{Kar72}. \esgap \weightedmaxcut is known to be NP-hard for constant $0 < \varepsilon < 1$ and $\varsigma \leqslant \sqrt{\varepsilon}$ under a widely accepted conjecture about the hardness of approximation \cite{KKMO07}, but in P when $\varepsilon \leqslant \frac{1}{\log n}$ or when $\varsigma > C \sqrt{\varepsilon}$ (for some constant $C$) \cite{CMM06a, CMM06b}. Note that CSPs are sometimes framed in terms of maximizing satisfied constraints instead of minimizing unsatisfied constraints, in which case the gap is given as the the fraction of satisfied constraints. For example, \minusesgap \maxcutforreal is an alternate name which refers to same problem as \esgap \weightedmaxcut. 

Our main theorem actually works with a slight generalization of \esgap \weightedmaxcut known as \minusesgap \maxtwolintwo, which allows each constraint to specify whether the variables should be equal or unequal. Now we are ready to state our main theorem, which is proven in \cref{sec:main-reduction}:
%See \cref{sec:max-cut} for more detail. 
\begin{restatable}[Main theorem]{theorem}{restatemaintheorem}
    \label{thm:weighted-main-theorem}
    There is a linear-time classical reduction and a linear-time quantum reduction from \minusesgap \maxtwolintwo to \esgamma-\cvpsubp with $p \in [1,\infty)$ that uses at most one basis vector per variable. 
\end{restatable}

This reduction has the remarkable property that, for any fixed instance size, the approximation factor $\gamma$ of the output instance increases \emph{without limit} as the gap of the input instance increases. This means our reduction can produce \gammacvp instances of any size with any approximation factor! Previous reductions to \gammacvp had some inherent upper limit on the $\gamma$ of the instances they can produce with $n$ basis vectors for each $n$. See \cref{subsec:technical-difficulties} for a summary of the difficulties involved in overcoming this longstanding limitation. Note also that our reduction maps \minusesgap \maxtwolintwo instances with $n$ variables to \esgamma-\cvpsubp instances with \emph{exactly} $n$ basis vectors; this fact is critical for the tightness of our conditional time bounds on \gammacvp, which we will describe next. After that, we will describe our two other major advancements and their related works. 

%This implies that any \minusesgap \maxtwolintwo instance with $n$ variables can be reduced to a \esgamma-\cvpsubp instance with exactly $n$ basis vectors; this fact is critical for the tightness of our conditional lower bounds on \gammacvp, which we will describe in the next section. After that, we will describe our other two major advancements and their related works. 

\subsection{Our conditional lower bound for \gammacvp and related works}
\label{subsec:prev-results}
\subsubsection{Previous work}

%Recently, the time complexity of \gammacvptwo in particular has received a lot of attention (\cite{BGS17, ABGS21, AK23}) because all algorithms for \gammacvptwo are also algorithms for breaking the lattice-based post-quantum cryptosystems currently in widespread use \cite{Ste24} via the reductions in \cite{GMSS99} and \cite{Reg09}. 

The way that we show a conditional lower bound (CLB) for the time-complexity of \gammacvp is via a reduction that links the time-complexity of \gammacvp to the time complexity of a well-studied problem. This gives us a win-win situation: the lower bound on \gammacvp conditioned on an assumption about the well-studied problem holds unless there is a breakthrough on the well-studied problem. The assumption is necessary since without it we'd be proving that $\mathrm{P} \neq \mathrm{NP}$. Reductions from well-studied problems to \gammacvp have a long history: \cite{Emd81} showed a reduction to exact \cvpsubp and a sequence of works including \cite{ABSS97, Din02, DKRS03, HR07}\footnote{Note that while some of these reductions as written are to a slightly different problem $\gamma$-SVP, by \cite{GMSS99} they are also reductions to \gammacvp and the same principles and difficulties apply.} gave reductions to \gammacvp for greater $\gamma$, with the greatest approximation factor ($\gamma = n^{c / \log \log n}$ for some constant $c > 0$) first achieved by \cite{DKRS03}. These reductions showed that \gammacvp is NP-hard for $\gamma \leqslant n^{c / \log \log n}$. 

However, these reductions only provide a very weak link from the time complexity of their well-studied problems to the time complexity of \gammacvp, because they produce instances of \gammacvp with too many basis vectors. A strong link would require a reduction that uses $n$ basis vectors or less—even using $2n$ basis vectors would meaningfully weaken the link (see \cref{subsec:reduction-size-general} for why). These works use $O(n^k)$ many basis vectors for some (often large and unspecified) constants $k>1$, and the reductions in \cite{ABSS97}\footnote{If reading \cite{ABSS97}, be aware that the problem they call "Label Cover" is very different from (and much easier than) the problem that Label Cover usually refers to today (\cite{MR08}).} for \gammacvp with super-constant $\gamma$ even use super-polynomially many basis vectors! So at best, these results can establish a $2^{\sqrt[k]{n}}$-time conditional lower bound on \gammacvp. For practical values of $n$ these bounds can be even weaker than they appear: at just $k=33$, the inequality $2^{\sqrt[k]{n}} < n$ holds true for $1 < n \leqslant 10^{80}$ (that's the number of atoms in the visible universe).

%For example, if we reduce a $\theta(2^n)$-time problem of size $n$ to instances of \gammacvp with $n^k$ basis vectors, then the reduction only implies a $2^{\sqrt[k]{n}}$-time lower-bound on \gammacvp. These bounds weaken quickly for practical values of $n$ as $k$ grows: at just $k=33$, we have $2^{\sqrt[k]{n}} < n$ for $1 < n \leqslant 10^{80}$ (that's the number of atoms in the visible universe). To show an exponential-time CLB on \gammacvp we need a reduction from a $2^{\delta n}$-time problem of size $n$ to \gammacvp with $O(n)$ basis vectors (for some constant $\delta > 0$); to show a $2^{\delta n}$-time CLB on \gammacvp we need a reduction to \gammacvp with exactly $n$ basis vectors. The pursuit of such specific CLBs is known as \emph{fine-grained} complexity.
%The above reductions establishing the NP-hardness of \gammacvp produce instances with a quantity of basis vectors that is a high-degree polynomial in the input size of their well-studied problems.

The fine-grained complexity of \gammacvptwo has recently gained the interest of cryptographers because breaking the lattice-based post-quantum cryptosystems \cite{Reg09, Pei16, GMSS99} currently in widespread use \cite{Ste24} can be done by solving instances of \gammacvptwo. A small change in its quantum time complexity (e.g. from $2^{0.5n}$ to $2^{0.05n}$) can make cracking a secret meaningfully faster, and the previously discussed reductions can give very little evidence for or against such a change. This motivated \cite{BGS17} to show the first result in the fine-grained complexity of \gammacvp: a reduction from \minusesgap \unweightedmaxtwosat or \unweightedmaxcut on $n$ variables to \gammacvp on $n$ basis vectors for $p \in [1, \infty)$ where 
\begin{align}
    \label{eqn:bgs-gamma}
    \gamma = \sqrt[p]{\frac{1 + \varsigma (3^p - 1)}{1 + \varepsilon (3^p - 1)}} < 3.
\end{align}
\minusesgap \unweightedmaxtwosat is a 2-CSP like \minusesgap \unweightedmaxcut, but with constraints that are satisfied when at \emph{least} one variable is 1 instead of when \emph{exactly} one of the two variables is 1. \cite{BGS17} also showed another reduction which was improved by \cite{ABGS21} to go from \ksat on $n$ variables to almost exact \cvpsubp ($\gamma = 1 + 1/\mathrm{poly}(n)$) on $n$ basis vectors, but only for non-even $p$. From now on we will refer to the former reduction as the \cite{BGS17} reduction and the latter reduction as the \cite{ABGS21} reduction for the convenience of the reader. \cite{BGS17} also gave a reduction from exact \weightedmaxcut on $n$ variables to exact \cvpsubp on $n$ basis vectors. 

While there have been no better reductions to \gammacvp with even $p$ or larger $\gamma$ since \cite{BGS17} and \cite{ABGS21}, there is a good reason why. \cite{ABGS21} and \cite{AK23} found that, unless the polynomial hierarchy collapses, any classical reduction to \gammacvp with even $p$ or larger $\gamma$ will likely need to both be from a problem dissimilar to \ksat and use different techniques from the \cite{BGS17} and \cite{ABGS21} reductions. We also find similar restrictions against such quantum reductions in \cref{subsec:results-quant-cvp-mc-nogo}. This is why our reduction is from \maxtwolintwo and uses different geometric gadgets to implement the \maxtwolintwo constraints. 

We summarize the CLBs on \gammacvptwo in \cref{fig:fg-cvpvgamma}. 
% a non-explicit, $p$-dependent $\gamma$ that is always less than 1.125
%Note that the reduction from \ksat in \cite{BGS17}/\cite{ABGS21} is not shown here since it does not apply to \gammacvptwo.
%Note that the blue line also describes one of the shapes that the true time-complexity of \gammacvptwo must take for lattice-based post-quantum cryptography to remain secure. 

\begin{figure}
  \centering
  \includegraphics[width=1\textwidth]{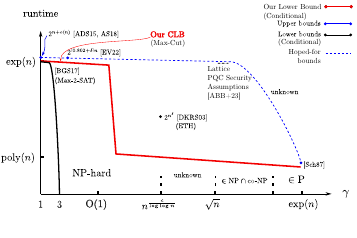}
  \vspace*{-10mm}
  \caption{(Color online.) The classical time-complexity of \gammacvptwo as a function of the approximation factor $\gamma$. 
  Conditional lower bounds: 
  (solid black line) \cite{BGS17} gives an exponential-time lower bound from $\gamma = 1$ to $\gamma \approx 1.5$, which quickly transitions to a $O(1)$-time lower bound and ends at $\gamma = 3$. This corresponds to the transition into the $O(1)$-regime of \minusesgap \unweightedmaxtwosat as $\varsigma$ tends towards 1. $\gamma = 3$ corresponds to $\varsigma = 1$, which is equivalent to deciding if there is a variable assignment that satisfies at least one constraint (there always is). 
  (solid red line) We give an exponential-time lower bound from $\gamma = 1$ through $\gamma = O(1)$, which then transitions to a polynomial-time lower bound that continues past $\gamma = \mathrm{exp}(n)$. This corresponds to the transition away from the exponential-time regime of \esgap \weightedmaxcut which begins as the approximation ratio $\varsigma/\varepsilon$ exceeds $O(1)$, and the transition into the $\mathrm{poly}(n)$-time regime of \esgap \weightedmaxcut which starts when $\varsigma$ increases past $\sqrt{\varepsilon}$ or $\varepsilon$ decreases past $1/\log n$. 
  (black dot) \cite{DKRS03} gives a $2^{n^\delta}$-time bound when $\gamma = n^{c/\log \log n}$. 
  (dashed black line) lattice based post-quantum cryptosystems assume an exponential-time lower bound for $\gamma = \mathrm{poly}(n)$. The line is intended to correspond to the regimes listed in \cite{ABB+23}. 
  Upper bounds, left to right: \cite{ADS15,AS18} give a $2^{n + o(n)}$-time algorithm for $\gamma$ near 1. \cite{EV22} gives an exponential-time algorithm for $\gamma$ less than some unspecified constant. \cite{SCH87} gives a polynomial-time algorithm for exponential $\gamma$.}
  \label{fig:fg-cvpvgamma}
\end{figure}

\subsubsection{Our contribution}

We show a reduction from \minusesgap \maxtwolintwo on $n$ variables to \gammacvp with $n$ basis vectors for $p \in [1, \infty)$ with $\gamma = $ \esgamma in both the quantum and classical settings. This reduction is described in \cref{sec:main-reduction}. It allows us to give the general CLB on \gammacvp in \cref{cor:equal-time} which follows directly from \cref{thm:weighted-main-theorem}. It implies that if one believes a $2^{\delta n}$-time lower bound for some specific constant $\delta > 0$ for \minusesgap \maxtwolintwo or \esgap \weightedmaxcut, then one must also believe a $2^{\delta n}$-time lower bound for \esgamma-\cvpsubp. 

\begin{restatable}{corollary}{restateequaltimecorollary} \label{cor:equal-time}
    Any $O(f(n))$-time classical algorithm for \esgamma-\cvpsubp, where $f(n)$ is lower-bounded by the dimension of the lattice and $p \in [1,\infty)$, is an $O(f(n))$-time algorithm for \minusesgap \maxtwolintwo. The same holds for quantum algorithms.
\end{restatable}

Our reduction makes several improvements over the previous state-of-the-art fine-grained reduction to \gammacvp by \cite{BGS17}: it considers the quantum setting in addition to the classical setting, it works for both weighted and unweighted input instances, and there is no hard limit on how large $\gamma$ is. Using our reduction, we can give interesting CLBs on \gammacvp for approximation factors up to $\gamma = O(1)$.

The increase in the range of approximation factors our reduction works for is very significant. Not only does this tell us something new about the time complexity of many \gammacvp problems that we did not previously have good CLBs for (since \cite{BGS17} only works for $\gamma < 3$), but it also applies a much broader scope of existing and future research on lattice problems to the gap CSPs \weightedmaxcut and \maxtwolintwo. For example, while our reduction gives us new faster classical and quantum algorithms for \esgap \weightedmaxcut and \maxtwolintwo in \cref{subsec:results-mc-algs} using algorithms for general $O(1)$-\cvpsubp these improvements cannot be applied to \minusesgap \unweightedmaxtwosat because the \cite{BGS17} reduction does not work for general $O(1)$-\cvpsubp.

%The fact that our CLB on \gammacvp works for a broad range of approximation factors $\gamma$ allows us the luxury of choosing between multiple conditions depending on the approximation regime we are interested in. We can give an exponential-time lower bound on $\sqrt[2p]{\log n}$-\cvpsubp on the condition that the regimes of \minusesgap \maxtwolintwo that current algorithms require exponential time to solve will continue to require exponential time to solve in both the classical and quantum settings. This means that a sub-exponential-time algorithm for $\sqrt[2p]{\log n}$-\cvpsubp must also revolutionize the long-studied field of gap Max-CSPs in addition to revolutionizing the field of lattice problems. We can also give an exponential-time lower bound on $O(1)$-\cvpsubp on the condition that the regime where \minusesgap \maxtwolintwo is NP-hard according to the Unique Games Conjecture requires exponential-time. 

In \cref{tab:bgs-headtohead} we compare our lower bounds to those given by \cite{BGS17} under a stronger version of the above conditions and the conditions used in \cite{BGS17}. The conditions are stronger because we use \unweightedmaxcut instead of \maxtwolintwo since \cite{BGS17} wouldn't work with \minusesgap \maxtwolintwo. In \cref{tab:bounds} we show the maximum possible $\gamma$ with interesting time bounds for some other CLBs.

\begin{table}[h]
\centering
\begin{tabular}{|c|c|c|c|c|c|c|c|}
\hline
\multicolumn{4}{|c|}{Condition} & \multicolumn{2}{|c|}{Our CLB} & \multicolumn{2}{|c|}{\cite{BGS17}'s CLB} \\
\hline
Name & $\varepsilon$ & $\varsigma$ & Time & $\gamma$ & Time & $\gamma$ & Time \\
\hline
\unweightedmaxcut & $O(1)$ & $\sqrt{\varepsilon}$ & $2^{\Omega(n)}$ & $O(1)$ & $2^{\Omega(n)}$ & $\sqrt{2}$ & $2^{\Omega(n)}$ \\
\hline
\cite{Wil05} Optimal $\dagger$ $\star$ & O(1) & $c_0 \varepsilon$ & $2^{\omega n /3}$ & $\sqrt{c_0}$ & $2^{\omega n /3}$ & $\min(\sqrt{2}, \sqrt{c_0})$ & $2^{\omega n /3}$ \\
Gap-ETH\tablefootnote{Gap-ETH can be applied to \minusesgap \unweightedmaxtwosat and \esgap \unweightedmaxcut via the first two reductions in \cite{GJS76}. However, these reductions restrict the size of the gap of their output instances. Since the resulting $\gamma$ is so small, we decided that, as in \cite{BGS17}, the effort required to calculate it exactly is not worthwhile.} $\dagger$ & 0 & $O(1)$ & $2^{\Omega(n)}$ & $<1.115$ & $2^{\Omega(n)}$ & $<1.115$ & $2^{\Omega(n)}$  \\
\hline
\end{tabular}
\caption{Comparison of \cite{BGS17}'s and our CLBs for $\gamma$-approximate \cvptwo, demonstrating that our reduction allows many lower bound conditions to apply to a much wider range of \gammacvp problems than before. Note that while we list \unweightedmaxcut here to allow comparison with \cite{BGS17}, our CLBs give the same time bounds under the much weaker condition that the equivalent gap regime for \minusesgap \maxtwolintwo (a weighted problem) requires $2^{\Omega(n)}$ time. $\dagger$ These are the conditions used in \cite{BGS17} and assume a classical setting. $\star$ \cite{Wil05} is only optimal when $\varsigma \leqslant c_0 \varepsilon$ for some constant $c_0$; thereafter the algorithm presented in \cref{subsec:results-mc-algs} is faster. The time complexity of \cite{Wil05} depends on the matrix multiplication constant $2 \leqslant \omega$.}
\label{tab:bgs-headtohead}
\end{table}

\begin{table}[]
\centering
\begin{tabular}{|c|c|c|c|c|c|}
\hline
Result & Condition & Lower Bound & Maximum $\gamma$ & p  \\
\hline
{\bf Our Result} & gap \maxtwolintwo exponential & $2^{\Omega(n)}$ & $O(1)$ & finite \\
\cite{BGS17} & gap Max-2-SAT exponential & $2^{\Omega(n)}$ & $<3$ & finite \\
%{\bf Our Result} & \cite{Wil05} Optimal & $2^{\omega/3}$ & unspecified constant & finite \\
%\cite{BGS17} & \cite{Wil05} Optimal & $2^{\omega/3}$ & unspec. constant < 3 & finite \\
\cite{ABGS21} & gap-SETH\tablefootnote{a gap variant of SETH} & $2^{n}$ & $<1 + 1/\mathrm{poly}(n)$ & non-even \\
%{\bf Our Result} & gap-ETH\tablefootnote{a gap variant of ETH} & $2^{\Omega(n)}$ & $<1.5$ & finite \\
%\cite{BGS17} & gap-ETH & $2^{\Omega(n)}$ & $<1.125$ & finite \\
\cite{DKRS03}\tablefootnote{$0<\delta,c<1/2$} & ETH & $2^{n^\delta}$ & $O(n^{c/\log \log n})$ & finite \\
\hline
\end{tabular}
\caption{Some conditional lower bound results for $\gamma$-approximate \cvpsubp, presented so as to maximize the value of $\gamma$ while still maintaining interesting time lower bounds.}
\label{tab:bounds}
\end{table}

The main reason our CLB applies to \gammacvp with large approximation factors while also giving tight time bounds is that our reduction uses a novel geometric gadget to represent \maxtwolintwo constraints (described in \cref{sec:main-reduction}) that only uses two basis vectors, while allowing the ratio between the "cost" when a constraint is unsatisfied and the cost when a constraint is satisfied to be fully parameterized. We label this parameter as $\iota$, and our gadget forces the \gammacvp instance to "pay" a cost of $\iota$ when a constraint is unsatisfied instead of a cost of $1$ when it is satisfied. In the previous state of the art reduction due to \cite{BGS17} this ratio is 3, and it cannot be increased as the family of geometric gadgets they use is inherently limited to a small constant dependent on the choice of norm and the number of variables in a constraint. 

 The value $\gamma$ given by the unweighted version of our reduction, shown in \cref{eqn:our-unweighted-gamma}, helps illustrate the significance of the ratio $\iota$.
\begin{align}
    \label{eqn:our-unweighted-gamma}
    \gamma = \sqrt[p]{ \frac{1 + \varsigma(\iota^p-1)}{1 + \varepsilon(\iota^p-1)}}
    %\approx \sqrt[p]{\varsigma/\varepsilon
\end{align}
Comparing it to the value of $\gamma$ for \cite{BGS17} given in \cref{eqn:bgs-gamma} clearly shows that fixing the ratio at 3 constrains the contribution of the \esgap to the approximation ratio while allowing the ratio to be an arbitrarily large parameter which can depend on $n$ and $p$ allows us to maximize the contribution of the gap to $\gamma$. But it can go further: when the ratio is constant, that places a constant upper bound on the possible values of $\gamma$, even when we make the input problem trivial by setting $\varsigma=1$. But when the ratio is a parameter we can get any arbitrary $\gamma$ we want by varying $\iota$ and setting $\varepsilon \ll \varsigma$ (or even $\varepsilon=0$) and keeping $\varsigma \ll 1$, which gives us a non-trivial, polynomial-time conditional lower bound for \gammacvp for all $\gamma$. This showcases our reduction's unlimited approximation-ratio preserving property, which as far as we can tell is unprecedented in the literature. 

\subsubsection{Subsequent work}

Recently \cite{AK25} showed an algorithm (related to \cite{Wil05}) for exact \cvpsubp which can solve the \cvpsubp instances created by our and \cite{BGS17}'s reductions in the same amount of time as \cite{Wil05} solves their input Max-2-CSP instances. This means that the \cite{Wil05}-based CLBs for exact \cvpsubp shown by us and by \cite{BGS17} are perfectly tight—any reduction to \cvpsubp on fewer basis vectors would result in a faster algorithm for an exact Max-2-CSP than \cite{Wil05}.

That concludes our overview of CLBs for \gammacvp. Next we will discuss our new algorithms for \minusesgap \maxtwolintwo. 

\subsection{Our \weightedmaxcut algorithms}
\label{subsec:results-mc-algs}

Exact \weightedmaxcut was one of Karp's first 21 NP-complete problems \cite{Kar72}. The current best classical algorithm for it is due to \cite{Wil05}, who gave an exponential-sized reduction from \weightedmaxcut and \weightedmaxtwosat to a graph problem in P and a way to solve that problem using matrix multiplication. This means that the time complexity of \cite{Wil05}'s algorithm is given in terms of the matrix multiplication exponent $2 \leqslant \omega$. The complexity of \cite{Wil05}'s algorithm is $O(2^{\omega n /3})$ which is approximately $O(2^{0.79 n})$ using the current best exponent $\omega \approx 2.371$ due to \cite{ADW+24}. The current best algorithm for \esgap \weightedmaxcut with constant approximation factors $\alpha = \varsigma/\varepsilon$ in the conjectured NP-hard regime is $\exp(n/2^{\Omega(\alpha^2)})$-time \cite{MT18}\footnote{Note that the \weightedmaxcut algorithm is absent from the published version of this article and present in the arXiv version which was submitted later.}. \cite{ABS15} give a sub-exponential algorithm for gap \maxtwolintwo in the region where $\varsigma = C \sqrt{\varepsilon}$ for some constant $C$ which gets slower as $C \rightarrow 1$, reaching $O(2^n)$ time when $C=1$. The current best quantum algorithm for exact \weightedmaxcut and \maxtwolintwo is naive search over the variable assignments using Grover's algorithm \cite{Gro96}, which takes $O(2^{n/2})$ time. For more about \esgap \weightedmaxcut, \maxtwolintwo, or the Unique Games Conjecture, see \cref{sec:max-cut} or the survey \cite{Kho10}.

%$\alpha$-approximate \weightedmaxcut in $\exp(n/2^{\Omega(\alpha^2)})$-time \cite{MT18}\footnote{Note that the \weightedmaxcut algorithm is absent from the published version of this article and present in the arXiv version which was submitted later.}. Here $\alpha = \varsigma/\varepsilon$ is the approximation ratio of a \esgap \weightedmaxcut instance. Note that even for fixed $c$, the ratio $\alpha$ for \gapweightedmaxcut still changes as a function of $\varepsilon$.

%\esgap \weightedmaxcut is a central problem in the hardness of approximation since it is the simplest Unique Game instance and is known to be NP-hard under the Unique Games Conjecture when $\varsigma \leqslant \sqrt{\varepsilon}$. Since it is also known to be in P when $\varsigma > C \sqrt{\varepsilon}$ for large enough constants $C < 1/\varepsilon$, there should be some transition region across a range of constants $C$ for instances where $\varsigma = C \sqrt{\varepsilon}$. \cite{ABS15} give a sub-exponential algorithm for gap \maxtwolintwo in this region which gets slower as $C \rightarrow 1$, reaching $O(2^n)$ time when $C=1$. There are no other algorithms for \esgap \weightedmaxcut or \maxtwolintwo when $\varsigma \leqslant \sqrt{\varepsilon}$, so despite the fact that these problems have a significant approximation gap, the previous fastest algorithms for them were the exact algorithms given by \cite{Wil05} and quantum naive search. 

We give a new $O\Bigl(2^{ \bigl( \frac{1}{3} + \frac{2}{6 \gamma^2 - 3} + o(1) \bigr) n}\Bigr)$-time quantum algorithm for a special case of \gammacvptwo and then use our reduction from \cref{thm:weighted-main-theorem} to give an almost $O(2^{n(\frac{1}{3} + \frac{\varepsilon}{6\varsigma} + o(1))})$-time algorithm for \minusesgap \maxtwolintwo (\cref{thm:quantum-max-cut-alg}). This is the first quantum algorithm to improve upon naive search for \minusesgap \maxtwolintwo. We also get an almost $O(2^{n(\frac{1}{2} + \frac{\varepsilon}{4\varsigma} + o(1))})$-time classical algorithm for \minusesgap \maxtwolintwo (\cref{thm:classical-max-cut-alg}) by starting with a $O(2^{ \left( \frac{1}{2} + \frac{1}{4 \gamma^2 - 2} + o(1) \right) n})$-time classical algorithm from \cite{KS20}. %This makes it the first ever approximation algorithm for \esgap \weightedmaxcut instances that are NP-hard under the Unique Games Conjecture, when $\varsigma \leqslant \sqrt{\varepsilon}$! 

Our classical algorithm is faster than \cite{ABS15} when $\varsigma \leqslant \sqrt{\varepsilon}$ and it is faster than \cite{Wil05} when $c_0 \varepsilon < \varsigma$ for some constant $c_0 \approx 3$ which depends on the sub-constant factors in the exponent and on $\omega$. Due to large implicit constants in \cite{MT18}, our algorithm is faster when $\varsigma < c_1 \varepsilon$ for some large constant $c_1$. That makes it the best known classical algorithm for \minusesgap \maxtwolintwo or \esgap \weightedmaxcut when $c_0 \varepsilon < \varsigma < c_1 \varepsilon$. Our quantum algorithm is faster than our classical algorithm, and is faster than naive search when $\varsigma > c_q \varepsilon$ for some constant $c_q$ dependent on the sub-constant factors in the exponent, so it is the best known algorithm for $c_q \varepsilon < \varsigma < c_1 \varepsilon$.

% Our classical algorithm is faster than \cite{ABS15} when $\varsigma \leqslant \sqrt{\varepsilon}$ and it is faster than \cite{Wil05} when $c_0 \varepsilon < \varsigma$ for some constant which depends on the sub-constant factors in the exponent, on $\omega$, and on some reasonable range for $n$. That makes it the best known classical algorithm for \minusesgap \maxtwolintwo or \esgap \weightedmaxcut when $c_0 \varepsilon < \varsigma \leqslant \sqrt{\varepsilon}$. Our quantum algorithm is faster than our classical algorithm, and is faster than naive search when $\varsigma > c_q \varepsilon$ for some constant $c_q$ dependent on the sub-constant factors in the exponent, so it is the best known algorithm for $c_q \varepsilon < \varsigma \leqslant \sqrt{\varepsilon}$.

When $\varepsilon = o(1)$, our quantum and classical algorithms have time complexities $O\Bigl(2^{ \bigl( \frac{1}{3}+ o(1) \bigr) n}\Bigr)$ and $O\Bigl(2^{ \bigl( \frac{1}{2}+ o(1) \bigr) n}\Bigr)$ respectively. So our algorithms are also faster than all other known algorithms when $\varepsilon = 1/f(n)$ for $f(n) \leqslant \log n$ and $\varsigma \leqslant \sqrt{\varepsilon}$. 

Next we will give an overview of our last major advancement: the no-go results against quantum reductions from \ksat to \cvptwo or \weightedmaxcut.

%By utilizing a reduction from \gammabinarycvp to \gammaANN \cite{KS20} and a classical algorithm for ANN \cite{AR15}, we can solve \minusesgap \maxtwolintwo in $O(2^{n(\frac{1}{2} + o(\frac{\varepsilon}{4\varsigma}) + o(1))})$-time (\cref{thm:classical-max-cut-alg}). By applying quantum speedups to the reduction, we achieve an $O(2^{n(\frac{1}{3} + o(\frac{\varepsilon}{6\varsigma}) + o(1))})$-time quantum algorithm for \minusesgap \maxtwolintwo (\cref{thm:quantum-max-cut-alg}). To the best of our knowledge, these are the first algorithms with approximation-based or quantum advantage for \gapweightedmaxcut and \maxtwolintwo for $c \geq 1/2$.

\subsection{Our no-go theorems}
\label{subsec:results-quant-cvp-mc-nogo}

After \cite{BGS17}, attempts were made to circumvent the barriers against extending those reductions so that they are able to produce \gammacvp instances with greater approximation factors $\gamma$ or with even $p$. Although those attempts failed, they also found good reasons for why the barriers are likely insurmountable and created a new area of fine-grained complexity in the process. The focus was first on extending the reduction from \ksat to \cvptwo; \cite{ABGS21} found that the geometric gadget used in the reduction ($(p,k)$-isolating parallelepipeds) to represent \ksat clauses does not exist for even $\ell_p$-norms when $k > p$, then \cite{AK23} found that (most) reductions from \ksat on $n$ variables to \cvptwo on $\mathrm{poly}(n)$ basis vectors would collapse the polynomial hierarchy! This also helped explain the difficulty in extending the reduction to work for larger $\gamma$: \cite{AK23} found that since for every $p$ there is a constant $\gamma_p$ such that $\gamma_p$-\cvpsubp can be reduced to \cvptwo (a result due to \cite{EV22}), any reduction which produces \gammacvp instances with $\gamma \geqslant \gamma_p$ could still be subject to no-go theorems for \cvptwo.

Although \unweightedmaxtwosat isn't \ksat, these no-go results can also help explain the limitations of the \cite{BGS17} reduction from \minusesgap \unweightedmaxtwosat to \gammacvp. If \ksat can't be reduced to \gammacvp with larger $\gamma$, then the geometric gadgets which could be used to represent \ksat must fail for larger $\gamma$ just like how \cite{ABGS21} found that they fail for even $p$. Since the \cite{BGS17} reduction from \unweightedmaxtwosat is reliant on the same geometric gadgets (isolating parallelepipeds) used in their reduction from \ksat, it makes sense that it would face limitations similar to what it would face if it were a reduction from \ksat. 

%More specifically, it is known that for any fixed constant $k$ there is another constant $k_0$ such that if there was any polynomial-time classical reduction with no false negatives (including Turing reductions) from \ksat to a set of other problems with total input size less than $n^{k_0}$, then the polynomial hierarchy collapses \cite{DV14}. A similar restriction also exists for for any polynomial-time non-adaptive reduction (meaning reductions that can use multiple instances of the output problem to solve the input problem, but can't use the answer to one of the output problems to help it specify another output problem) \cite{Dru15}. 

\cite{AK23} realized that since tight conditional lower bounds (CLBs) hinge on reductions with a specific numerical relationship between instance size of the input problem and the instance size of the output problem (also known as fine-grained reductions, which we explain in more detail in \cref{subsec:prev-results}), their no-go theorems mean that conditions about the time complexity of \ksat like the Strong Exponential Time Hypothesis (SETH) probably can't be applied to \cvptwo.

We show that there are no polynomial-sized, non-adaptive, \emph{quantum} polynomial-time reductions from \ksat to \cvptwo with two-sided error unless there are quantum statistical zero-knowledge proofs for all languages in $\mathrm{NP}$ (\cref{thm:quantum-cvp2-incompressible}). This result shows a substantial barrier against proving the fine-grained complexity of \cvptwo using the Quantum Strong Exponential Time Hypothesis and fills a gap in \cite{AK23}, which only showed no-go results for \emph{classical} reductions. Somewhat counter-intuitively, because we were able to avoid these limitations with our fine-grained reduction from \minusesgap \maxtwolintwo to \gammacvp, we can show that these limitations must apply to quantum fine-grained reductions from \ksat to \maxtwolintwo no matter how complex the weights are (\cref{thm:quantum-no-ksat-to-ug-reduction}). Note that since there is a trivial reduction from \minusesgap \maxtwolintwo, exact \weightedmaxcut, and \esgap \weightedmaxcut to \maxtwolintwo, the no-go theorem applies to them as well. 

Our reduction also allows us to directly rule out classical reductions from \ksat to \weightedmaxcut based on the results \cite{AK23}. In particular, we rule out all fine-grained reductions with one-sided error (\cref{thm:classical-mc-no-turing}) and all non-adaptive fine-grained reductions with two-sided error (\cref{thm:no-randomized-reductions-from-seth}). Surprisingly, these barriers seem to close off the most tempting approaches toward understanding the fine-grained complexity of Max-Cut. However, they might help explain the lack of results in this area. These results could also be derived indirectly from a corollary in \cite{BGS17} and \cite{AK23}, although this seems to have been unnoticed despite its significance for such a well-studied problem.

These results mean that there are now a group of problems with fine-grained reductions to \cvptwo in addition to the existing group of problems with fine-grained reductions from \ksat, and that there (probably) cannot be fine-grained reductions from the latter group to the former group. This suggests that the two groups can be considered as two distinct \emph{fine-grained} complexity classes. Note that these fine-grained complexity classes are independent of regular complexity classes since, as demonstrated by \cite{Wil05}'s algorithm described in \cref{subsec:results-mc-algs}, problems in P and problems that are NP-complete can reside in the same fine-grained complexity class. 

We also show that all ``natural'' reductions from $n$-variable 3-SAT to \weightedmaxcut must use $\frac{4}{3}(n-2)$ vertices (\cref{thm:natural-reductions-ksat-mc-four-thirds}) for a definition of ``natural'' given by \cite{ABGS21}. 

Now we have completed an overview of all of our advancements. Next we will give a brief overview of our reduction.

%\footnote{There are good reasons behind this second limitation: \cite{ABGS21} found that the geometric construct used in the reduction ($(p,k)$-isolating parallelepipeds) to represent \ksat clauses does not exist for even $\ell_p$-norms, and \cite{AK23} found that (most) reductions from \ksat on $n$ variables to \gammacvptwo on $\mathrm{poly}(n)$ basis vectors would collapse the polynomial-time hierarchy! This means that reductions from \ksat probably cannot be used to show CLBs for \gammacvptwo, so reductions from other well studied problems like \weightedmaxcut must be used instead.}. 

\subsection{Our Reduction}

We first show an abridged reduction from unweighted Max-Cut to \cvpsubp for $p \in [1,\infty)$ and how the approximation factor for the output \cvpsubp instances from the gap of the input Max-Cut instances, without needing to handle the complexity of the full proof for Max-2-Lin(2). For the full reduction, see \cref{sec:main-reduction}.

Given a Max-Cut instance as an ordered set of unordered pairs $E \subseteq[n]^2$ representing $m$ inequality constraints on $n$ variables $v_1,\dots,v_n$, we produce a basis $B$ with $n$ vectors on $2m$ dimensions and a target $\bm{t} 
\in \mathbb{R}^{2m}$. We represent $B$ as a matrix in $\mathbb{R}^{2m \times n}$ with the basis vectors as its columns. We construct $B$ and $\bm{t}$ as a series of $m$ vertically stacked blocks, where each block is 2 rows tall:

\[
B := \left(\begin{matrix}
    B_{1:2}\\
    B_{3:4}\\
    \vdots\\
    B_{2m-1 : 2m}
\end{matrix}\right), 
\bm{t}:=\left(\begin{matrix}
    \bm{t}_{1:2}\\
    \bm{t}_{3:4}\\
    \vdots\\
    \bm{t}_{2m-1 : 2m}
\end{matrix}\right).
\]
Each pair of blocks (one in $B$ and one in $\bm{t}$) makes up one gadget, and each gadget encodes one constraint from $E$; specifically, we construct the gadget $B_{2k-1 : 2k}, \bm{t}_{2k-1:2k}$ to implement the $k^\text{th}$ constraint of $E$ as follows:
\[
B_{2k-1 : 2k} := \left(\begin{matrix}
   \cdots & -1 & \cdots & 1 & \cdots \\
   \cdots & \iota & \cdots & \iota & \cdots
\end{matrix}\right), 
\bm{t}_{2k-1:2k}:=\left(\begin{matrix}
    0\\
    \iota
\end{matrix}\right),
\]
where only the $i^\text{th}$ and $j^\text{th}$ columns of $B_{2k-1 : 2k}$ are shown with $(i,j):=E_k$, $i<j$, and the remaining columns are filled with 0s. We show the columns in order with column $i$ first. $\iota > 1$ is a parameter we use to ensure that the distance to the target is greater when the constraint is unsatisfied than when the constraint is satisfied. We use two rows per constraint to ensure that the basis vectors for our lattice are linearly independent—otherwise our reduction wouldn't produce valid \gammacvp instances. We visualize the block in \cref{fig:fg-cvp} by plotting the $(2k-1)^\text{th}$ and $2k^\text{th}$ rows of the vectors $B_{\cdot,i}, B_{\cdot,j}, \bm{t},$ and $B_{\cdot,i} + B_{\cdot,j}$. $B_{\cdot,i}$ represents the $i^\text{th}$ column of $B$, which is also the $i^\text{th}$ basis vector.

\begin{center}
  \includegraphics[width=0.5\textwidth]{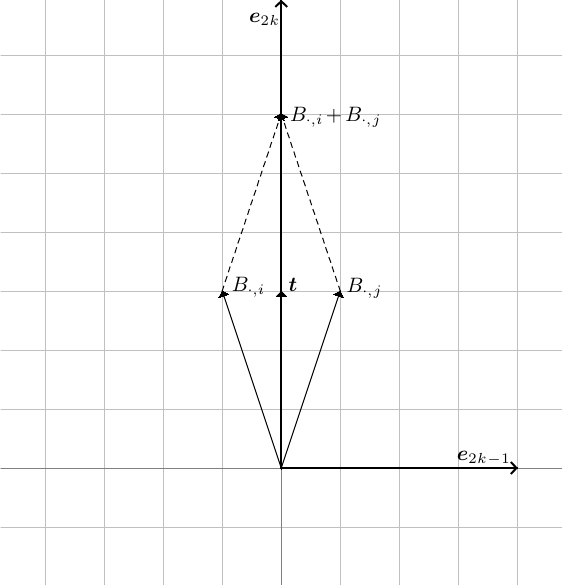}
  \captionof{figure}{visualization using $\iota=3$ of the $k^\text{th}$ gadget, which encodes the constraint $v_i \neq v_j$.}
  \label{fig:fg-cvp}
\end{center}

As you can see from \cref{fig:fg-cvp}, $B_{\cdot, i}$ and $B_{\cdot, i}$ are equidistant from the target $\bm{t}$, $\bm{0}$ and $B_{\cdot, i} + B_{\cdot, j}$ are also equidistant from $\bm{t}$, and $B_{\cdot, i}$ and $B_{\cdot, j}$ are closer to $\bm{t}$. If we interpret the coefficient of $B_{\cdot,i}$ as the assigned value of $v_i$, we find that the gadget matches the preferences of the constraint $v_i \neq v_j$ exactly: it prefers $v_i,v_j \in \{(0,1), (1,0)\}$ over $v_i,v_j \in \{(0,0), (1,1)\}$ and it has no preference between $(0,0)$ and $(1,1)$ and no preference between $(0,1)$ and $(1,0)$. 

\begin{theorem}[Informal]
    The above reduction reduces \esgap \unweightedmaxcut to \esgamma-\cvpsubp. 
\end{theorem}

\begin{proof}[Proof sketch] (See \cref{sec:main-reduction} for the full proof.)
    We first show that the reduction produces a lattice and target with distance at most $(m(1 + \varepsilon(\iota^p-1)))^{1/p}$ when given a set of constraints that is $(1-\varepsilon)$-satisfied by some variable assignment $v_1,\dots,v_n$ (this the completeness condition for our reduction). Consider the lattice point $\bm{u} := \sum_{i=1}^n v_i B_{\cdot,i}$. Then
    \[|\bm{u} - \bm{t}|_{2k-1:2k} = \begin{cases}
        \big(\begin{smallmatrix}0 \\ 1\end{smallmatrix}\big) & \text{the $k^\text{th}$ constraint is satisfied} \\
        \big(\begin{smallmatrix}\iota \\ 0\end{smallmatrix}\big) & \text{the $k^\text{th}$ constraint is not satisfied} \\
    \end{cases}\] by the construction of the $k^{th}$ gadget (where we use $||$ to denote element-wise absolute value), so by summing up the blocks corresponding to satisfied constraints and the blocks corresponding to unsatisfied constraints we get
    \[\mathrm{dist}_p(\bm{u}, \bm{t})^p = m(1-\varepsilon) + m \varepsilon \iota^p = m(1 + \varepsilon(\iota^p-1))\] as claimed (note the power of $p$ on the left hand side). 
    
    Next we show that the reduction produces a lattice and target with distance greater than $(m(1 + \varsigma(\iota^p-1)))^{1/p}$ when given a set of constraints that is less than $(1-\varsigma)$-satisfied by any variable assignment (this is the soundness condition for our reduction). Assume that there is always a lattice point with 0-1 coefficients that is at least as close to the target as every other lattice point (we omit proof of this fact in this sketch). We can construct a variable assignment $v_1,\dots,v_n$ from any lattice point with 0-1 coefficients $\bm{u}$ by choosing $v_i$ to be the coefficient of $B_{\cdot,i}$. Then by the argument given above for completeness the distance from $\bm{u}$ to $\bm{t}$ is given by the number of constraints satisfied by $v_1,\dots,v_n$ as
    \[\mathrm{dist}_p(\bm{u}, \bm{t})^p = (\text{\# of satisfied constraints}) + \iota^p (\text{\# of unsatisfied constraints}).\]
    Since every variable assignment satisfies fewer than $m(1-\varsigma)$ constraints, 
    \[\mathrm{dist}_p(\bm{u}, \bm{t})^p > m(1-\varsigma) + \iota^p m \varsigma = m(1 + \varsigma(\iota^p-1))\]
    for any lattice point $\bm{u}$ with 0-1 coefficients. Applying our assumption applies this bound to all lattice points as claimed. 

    By choosing $r:=(m(1 + \varepsilon(\iota^p-1)))^{1/p}$ and $\gamma r := (m(1 + \varsigma(\iota^p-1)))^{1/p}$ we get that 
    \[\gamma = \frac{(m(\varsigma(\iota^p-1) +1))^{1/p}}{(m(\varepsilon(\iota^p-1) + 1))^{1/p}} 
    = \sqrt[p]{ \frac{1 + \varsigma(\iota^p-1)}{1 + \varepsilon(\iota^p-1)}}
    \approx \sqrt[p]{\varsigma/\varepsilon}\] as required, when $\iota$ is sufficiently large. 
\end{proof}

Next we will give some context on the longstanding technical difficulty our reduction overcame.

\subsection{Difficulty of fine-grained reductions to \gammacvp with large $\gamma$}
\label{subsec:technical-difficulties}

Our reduction overcomes a longstanding technical difficulty for past reductions to lattice problems: producing an instance that has a large approximation factor while maintaining a small number of basis vectors. As you may have noticed in \cref{subsec:prev-results}, all past reductions to \gammacvp have either produced many basis vectors to get large $\gamma$ or produced few basis vectors with small $\gamma$. We will try to give some intuition as to why this double-bind occurred. 

One way to think about how reductions to \gammacvp with large $\gamma$ such as \cite{ABSS97, DKRS03, HR07}\footnote{Again, while some of these reductions as written are to a slightly different problem $\gamma$-SVP, by \cite{GMSS99} they are also reductions to \gammacvp and the principles described here are the same.} work is that they increase the number of dimensions polynomially while also increasing the number basis vectors polynomially. In general, increasing the number of basis vectors produced by a reduction from a CSP to \gammacvp decreases the distance to the target, while increasing the number of dimensions produced by a reduction increases the distance to the target. These reductions construct their additional basis vectors in a systematic manner in which reduces the distance much more if the input instance is good than if it is bad. When combined with the fact that the bad instances have a long distance to reduce from this provides a large approximation factor (remember that the approximation factor is the long distance divided by the short distance). However, we know from \cref{subsec:reduction-size-general} that this strategy doesn't produce fine-grained reductions because it uses too many basis vectors. 

Ideally, we would like to produce a reduction from a problem on $n$ variables to \gammacvp with exactly $n$ basis vectors. Such reductions to exact \cvpsubp were known at least as far back as \cite{ABSS97}, but extending them to work for larger $\gamma$ is difficult.

To describe these difficulties in constructing problems to \gammacvp, it is helpful to have a generalization for the problems we are reducing from. We will use Constraint Satisfaction Problems (CSPs), which neatly generalize all the well-studied problems mentioned in this paper. CSPs are about deciding if there is a combination of variable assignments that is good enough at satisfying a set of constraints (restrictions on variable assignments). An assignment is considered good enough if it satisfies a $1-\varepsilon$ fraction of the constraints; if no assignment is good enough, then it is promised that all assignments can only satisfy at most a $1-\varsigma$ fraction of constraints. The input size is given by the number of variables $n$ and the number of constraints $m$.

One difficulty is as follows: these reductions (including our own) always represent the CSP variables as basis vectors so that they can correspond the integer coefficients of the lattice points to the values of variables in the CSP\footnote{This is in contrast to, for example, \cite{DKRS03} where there were enough basis vectors to have them represent every possible combination of constraint and variable assignment.}. There are infinitely many integer coefficients but only a finite number of possible values for each variable, so to prove that reduction faithfully implements the CSP we must show that the valid integer coefficients that correspond to possible CSP variable values are always further away than the invalid coefficients. This is usually done by adding a ``leashing gadget'': one or more dimensions where the value of basis vectors and target forces lattice points to be far away when a basis vector is assigned an invalid integer coefficient. For example, to force a basis vector to only have valid coefficients 0 and 1 we might set the target to $\alpha$ in the leashing gadget's dimension and the basis vector to $2 \alpha$ for some very large $\alpha$. Since the basis vectors must be allowed at least two valid integer coefficients but the target can only have one value, the leashing gadgets must add some distance to every \gammacvp instance, which we can again call $\alpha$. To prove that the leashing gadget always works, $\alpha$ must exceed the gap between close and far instances of \gammacvp for all the other dimensions combined which caps the approximation factor at 2. 

This specific difficulty was first overcome by \cite{BGS17} which showed a \cvpsubp gadget (2-dimensional isolating parallelepipeds) that they proved will faithfully implement a CSP constraint (2-ary binary conjunction) even when ``unleashed'', breaking the $\gamma = 2$ barrier. However, the double-bind remained, due to another difficulty: the factor of the distances of far and close lattice points on an isolating parallelepiped are inherently limited by a small explicit constant. Even with the most favorable parameters, unsatisfied constraints are limited to 3-times the distance of satisfied constraints, limiting the overall approximation factor to $\gamma < 3$. 

\subsection{Why fine-grained reductions must be small}

\label{subsec:reduction-size-general}

Reductions from a Problem A to a Problem B are commonly used in both
``coarsed-grained'' complexity, to establish that B is in the same complexity
class as A, and ``fine-grained'' complexity, to establish a specific numerical
relationship between the complexities of A and B. In general, a reduction from
A to B is a procedure which solves an instance of A using the solutions to
instances of B. A reduction R from Problem A to Problem B which solves any
size-$n$ instance of A using the solution(s) to $c_R (n)$ many size-$s_R (n)$
instance(s) of $B$ in time $T_R (n)$ naturally gives the following upper-bound
on time complexity of A $T_A$ in terms of the time complexity of B $T_B$:
\begin{eqnarray*}
  T_A (n) & \leqslant & c_R (n) T_B (s_R (n)) + T_R (n) .
\end{eqnarray*}
What makes reductions so useful in complexity theory is that this inequality
is also a \tmtextit{lower-bound} on $T_B$ in terms of $T_A$. We can make this
clearer by rearranging the inequality and introducing a change of variables
$n' = s_R (n)$:
\begin{eqnarray*}
  \frac{T_A (s_R^{- 1} (n')) - T_R (s_R^{- 1} (n'))}{c_R (s_R^{- 1} (n'))} &
  \leqslant & T_B (n') .
\end{eqnarray*}
Note that $s_R$, commonly known as the size of the reduction R, is present in
every term.

A common practice is to give a polynomial-size, polynomial-time, many-to-one
reduction and assume that problem A requires super-polynomial time; plugging
these parameters into our inequality gives \ $T_B (n) \geqslant
\tmop{superpoly} (\tmop{poly}^{- 1} (n))$ which is a super-polynomial
lower-bound on $T_B$ (since $T_R (n) \in o (T_A (n))$ it does not contribute
to the asymptotic complexity of $T_B$). Thus polynomial-size reductions are an
effective way to place Problem B in the same complexity class as Problem A.

While such reductions can answer some coarse-grained complexity questions,
they are too large to convert a good algorithm for B into a good algorithm for
A or a good lower-bound for A into a good lower-bound for B. This is because
the large reduction size converts large changes on one side of the inequality
to small changes on the other side.

Let us demonstrate by a lower-bound example: assume that Problem A requires
$2^{n / 2}$-time and let C be a many-to-one reduction from A to B with size
$2^k n^k$ for some $k > 1$ in polynomial time, and let R also be a reduction
from A to B in polynomial time with size $2 n$. To emphasize the point, let R
also use $n^{10}$ many instances of B instead of just one. Then by our
inequality, reduction C establishes a $\Omega \left( 2^{\sqrt[k]{n} / 4}
\right)$ lower-bound on B while reduction R establishes a $\Omega (2^{n / 4}
n^{- 10})$ lower-bound. Since $k > 1$, it's clear that R gives a much, much
better asymptotic lower-bound than C does. Now imagine that a celebrated
result shows large improvement in the lower-bound for A, from $2^{n / 2}$-time
to $2^n$. How does that change our lower-bound results for B? The bound given
by C improves by only a factor of $2^{\sqrt[k]{n} / 4}$ (to $\Omega \left(
2^{\sqrt[k]{n} / 2} \right)$) while the bound given by R improves by a factor
of $2^{n / 4}$ (to $\Omega (2^{n / 2} n^{- 10})$)!

Next we will demonstrate the important of the exact constant factor of our
linear-sized reduction. Let reduction R stay as before, and consider another
reduction F which is exactly like R except that it has size $n$ instead of $2
n$ and takes time $2^{\sqrt[4]{n}}$ instead of $\tmop{poly} (n)$. Then
reduction F gives a lower bound of
\[ \left( 2^n - 2^{\sqrt[4]{n}} \right) n^{- 10} \in \Omega (2^n n^{- 10}) \]
on B instead of the $\Omega (2^{n / 2} n^{- 10})$ bound given by R! The above
comparisons demonstrate that a reduction which can give good fine-grained
complexity results (a fine-grained reduction) for potentially exponential-time
problems must be a linear-sized reduction, that the constant factor in the
reduction size is more important than the complexity of other aspects of the
reduction, and that the constant factor is ideally 1 or less.

\subsection{Open questions}
We leave the following questions to future research.
\begin{enumerate}
    \item Can the fine-grained hardness of \gammacvptwo be further extended to show that there are no $O(2^{n^{1-\delta}})$-time algorithms with $\delta > 0$ for \gammacvptwo with approximation factors further into the $\mathrm{poly}(\log n)$ regime and even beyond it towards $\sqrt{n}$ regime used by modern cryptography?
    \item Now that we have identified that fine-grained approximation-ratio preserving reductions are achievable, can we create other fine-grained approximation-ratio preserving reductions that help answer questions in the hardness of approximation?
    \item Can the fine-grained hardness results shown in this work also be applied to the approximate Shortest Vector Problem?
    \item Do fine-grained reductions to \gammacvp which utilize the ``full power'' of \gammacvp by using lattice coordinates with many different coefficients instead of just two lead to stronger lower bounds? Or maybe \gammacvp fully characterized by existing reductions, since for $1 < p< \infty$ any line formed by a basis vector can only be closest to the target at two points in the $\ell_p$-norm\footnote{\cite{ABGS21} give a good technical argument for this possibility.}? If the former is true, then hopefully we can use the proof can prove much higher fine-grained lower bounds on \gammacvp. If the latter is true, then we can use existing algorithms to match our current lower bounds. See \cref{sec:lattice-problems} for how \gammacvp problems with different lattice coefficients relate to each other. 
    \item The ``no-go'' results and reductions discussed in this paper suggest that \weightedmaxcut and \gammacvptwo belong together in a certain fine-grained complexity class, and that this class is separate from the much better understood fine-grained complexity class to which \ksat belongs. What is the true fine-grained complexity lower bound for this class? How should we formulate a fine-grained lower bound conjecture for this class, and which problem should the conjecture be formulated for? One candidate is that there is no $O(2^{n^{1-\delta}})$-time Max-Cut algorithm for $\delta > 0$; another candidate in use by some cryptographers is that there is no $O(2^{0.2075 n})$-time algorithm for \gammacvptwo. 
\end{enumerate}

%%remove for anonymous submissions
\subsection{Acknowledgements}
We thank Sean Hallgren for valuable discussions and feedback. JAH and CW were supported by a National Science Foundation grant CCF-2238766 (CAREER).

\section{Preliminaries}
\label{sec:prelim}
%In this section, we present the preliminaries and some background of this work.

\subsection{Notation}
Throughout this paper, we use bold-face lower-case Roman letters to denote column vectors, such as $\bm{v}$ and $\bm{t}$. For a vector $\bm{v}$, we use $\bm{v}_j$ to denote its $j$-th entry. The $\ell_p$-norm of a vector $\bm{v}$, denoted $\norm{\bm{v}}_p$, is defined as
\begin{align*}
  \norm{\bm{v}} \coloneqq (| \bm{v}_1 |^p + \cdots + | \bm{v}_d |^p)^{1/p}.
 \end{align*}
We use upper-case Roman letters such as $B$ to denote matrices. We use $B_{i, j}$ to denote the $(i, j)$-entry of $B$ and $B_{\cdot, i}$ to denote the $i$-th column of $B$. Unless otherwise noted, we use $\varepsilon$ to denote the small constant that is used to describe the fraction of unsatisfiable constraints for the YES instances of a CSP (e.g. $1-\varepsilon$ completeness), we use $\varsigma$ (sigma) to denote the small constant that is used to describe the fraction of unsatisfiable constraints for the NO instances of a CSP (e.g. $1-\varsigma$ soundness), and we use $\delta$ to denote constant fractions used in the exponent of exponential time complexities (e.g. $2^{n^\delta}$ or $2^{(1-\delta) n}$). We also sometimes use $c$ to denote a small constant used with $\varepsilon$ as $\varepsilon^c$ to describe $1-\varepsilon^c$ soundness. We reserve $\gamma$ for representing the approximation factor of approximate problems (e.g. \gammabinarycvp, \gammaANN). We use $[n] \coloneqq \{1, 2, \ldots, n\}$. 

Throughout this paper we will work directly with $\mathbb{R}^d$ for readability, without including the formal details of selecting and analyzing the complexity of a suitable finite representation.  %JH: rejected a comma from Chunhao here

\subsection{Fine-grained complexity}
\label{sec:fine-grained-complexity}

The ultimate goal of complexity research is to prove lower bounds on the computational resources needed to solve computational problems; however, proving lower bounds unconditionally has been difficult in practice. Fine-grained complexity approaches this difficulty by focusing on studying how the lower bounds of different problems relate to each other instead. The main tool of fine-grained complexity is the \emph{fine-grained reduction}, which, informally, is a reduction from a problem $P_\mathrm{from}$ to a problem $P_\mathrm{to}$ that is so small and fast that any improved upper-bound on $P_\mathrm{to}$ is also an improved upper-bound on $P_\mathrm{from}$ (see \cref{subsec:reduction-size-general} for what makes a reduction more or less fine-grained, and see the survey \cite{Vas15} for formal definitions). 

If one is confident about a conjectured lower bound on one problem, then fine-grained reductions from that problem allow one to conjecture lower bounds for other problems with similar confidence. One longstanding conjecture is that the naive search algorithm for \ksat is optimal; this conjecture has been formulated in different ways over the years (for example, in \cite{SH90} and \cite{IP99}) and is now known as the Strong Exponential Time Hypothesis (SETH). This is the go-to conjecture for showing that a problem requires strongly exponential time. Below we reproduce the conjecture as it is formulated in \cite{Vas15}, where the classical Word RAM model with $O(\log n)$ bit words is assumed. 

\begin{conjecture}[SETH]
    For every $\delta > 0$, there exists an integer $k$, such that Satisfiability of $k$-CNF formulas on $n$ variables cannot be solved in $O(2^{(1-\delta) n} \mathrm{poly}(n))$ time in expectation. 
\end{conjecture}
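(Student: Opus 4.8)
This final statement is the Strong Exponential Time Hypothesis, which is a \emph{conjecture} rather than a theorem, so any honest plan must begin by acknowledging that a proof is far beyond the reach of current techniques. SETH implies the Exponential Time Hypothesis, which in turn implies $\mathrm{P} \neq \mathrm{NP}$, so proving SETH would in particular resolve P versus NP. Moreover SETH is a \emph{quantitative} strengthening of $\mathrm{P} \neq \mathrm{NP}$: it asserts not merely that \ksat lacks a polynomial-time algorithm, but that brute-force $2^n$ search is essentially optimal up to the base of the exponent, uniformly over the family of $k$-CNF problems. The plan, therefore, is not to give a proof but to lay out what a proof would have to overcome and to explain why we instead treat SETH as an axiom from which the conditional lower bounds in this paper are derived.

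The natural route to a lower bound of this form would be to fix $\delta > 0$ and show that for every constant $k$ there is no family of $O(2^{(1-\delta)n}\mathrm{poly}(n))$-time randomized algorithms deciding $k$-CNF-SAT. One could try to (i) reduce from some ``master'' problem known unconditionally hard at this granularity — but no such problem is known; (ii) use diagonalization or the time-hierarchy theorem — but that only separates, say, $2^{(1-\delta)n}$ from $2^n$ time for \emph{some} contrived language, and there is no known way to transport such a separation onto $k$-SAT while keeping $k$ constant; or (iii) use algebraic or communication-complexity techniques in the style of recent fine-grained lower bounds (APSP, Orthogonal Vectors, etc.) — but those are inherently \emph{conditional} on SETH or similar hypotheses and cannot produce an unconditional exponential bound.

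The main obstacle is that all known lower-bound techniques for SAT-style problems run into the classical barriers: relativization (SAT relative to an oracle can be made easy or hard), natural proofs (a $2^{(1-\delta)n}$ lower bound against the relevant algorithm class would yield natural properties against strong pseudorandom function candidates), and algebrization. A proof of SETH would require a fundamentally non-relativizing, non-naturalizing argument that is simultaneously sharp enough to pin down the \emph{constant} in the exponent for every fixed $k$ — strictly harder than the already-open problem of proving $\mathrm{P} \neq \mathrm{NP}$. Consequently the ``plan'' for the remainder of this paper is the standard one in fine-grained complexity: we do not prove SETH, but assume it (together with its approximate and quantum variants), use reductions to propagate its consequences, and, conversely, in \cref{sec:SETH_barrier} and \cref{sec:qseth_barrier} show that certain reductions \emph{from} SETH to \weightedmaxcut cannot exist — evidence about the \emph{limits} of SETH-based hardness rather than about the truth of SETH itself.
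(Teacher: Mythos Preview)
Your assessment is correct and matches the paper's treatment: SETH is stated in the paper as a \emph{conjecture}, not a theorem, and the paper offers no proof whatsoever---it simply records the hypothesis (following \cite{Vas15}) for later use in conditional lower bounds and barrier results. Your discussion of why SETH is out of reach (implying $\mathrm{P}\neq\mathrm{NP}$, running into relativization/natural-proofs/algebrization barriers) is accurate commentary, though it goes well beyond anything the paper itself says; the paper is content to cite the conjecture and move on.
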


Note that SETH does not state that \ksat requires exactly $2^n$-time for any specific $k$; instead it states that the time complexity of \ksat approaches $\Omega(2^n)$ as $k$ increases. So showing that a problem has a $2^{n}$-time lower bound by fine-grained reduction from \ksat and assuming SETH requires the reduction to be from \ksat for all $k \in \mathbb{N}$. 

Naive search is known to have quadratic speedup in quantum models of computation, so a quantum variation of SETH was required. 

\begin{conjecture}[QSETH \cite{ACL+20}]
    For all $\delta > 0$, there exists some $k \in \mathbb{N}$ such that there is no quantum algorithm solving \ksat with $n$ variables in time $O(2^{n (1-\delta) / 2})$. 
\end{conjecture}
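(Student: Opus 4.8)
The final statement is the Quantum Strong Exponential Time Hypothesis (QSETH), which is a \emph{conjecture} rather than a theorem: there is no proof to reconstruct, and establishing it unconditionally is far beyond current techniques, since it refines the classical SETH, which already implies $\mathrm{P} \neq \mathrm{NP}$ and circuit lower bounds that have resisted decades of effort. What I can propose is a plan for assembling the evidence the conjecture codifies, and for pinpointing where that evidence stops short of a proof.

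First I would verify that the exponent $n/2$ is the right target by exhibiting the matching upper bound: Grover's algorithm over the $2^n$ truth assignments decides \ksat in $\widetilde{O}(2^{n/2})$ quantum time for every fixed $k$, so QSETH is precisely the assertion that this generic quadratic speedup over brute-force search cannot be improved by more than a subexponential factor, uniformly as $k \to \infty$ --- mirroring how SETH asserts that $2^n$ is essentially optimal classically. Second, I would attempt to reduce QSETH to SETH together with a ``no super-Grover speedup'' principle: a quantum algorithm running in time $O(2^{(1-\delta)n/2})$ that amounted, in effect, to a Grover search layered over a classical subroutine would yield a classical $O(2^{(1-\delta')n})$ algorithm and contradict SETH. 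Making this rigorous is the crux of the difficulty and the reason QSETH must be posited separately: quantum algorithms need not decompose as ``classical core plus Grover,'' so no such reduction is known.

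Third, for the \emph{conditional} and \emph{model-restricted} support that is actually attainable, I would assemble three ingredients: (i) quantum query lower bounds --- the $\Omega(2^{n/2})$ bound for unstructured search, via the hybrid/adversary and polynomial methods --- which rule out any black-box improvement; (ii) the observation that the strongest known quantum SAT algorithms (e.g.\ quantum backtracking and Grover-accelerated local-search or PPSZ-style arguments) all plateau at $2^{n/2 - o(n)}$ as $k$ grows; and (iii) consistency checks, noting that QSETH has already been used to derive tight quantum fine-grained lower bounds in \cite{ACL+20} and elsewhere without producing contradictions. The main obstacle is insurmountable with present tools: a genuine proof would require an unconditional quantum time lower bound for an explicit problem, which would entail circuit lower bounds (through the usual $\mathrm{BQP}$-versus-polynomial-hierarchy connections) that are themselves central open problems. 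So the realistic deliverable is a ``proof of plausibility'' --- tightness of the Grover upper bound, a heuristic reduction to SETH-flavored assumptions, and barrier results in restricted models --- rather than a derivation of QSETH from weaker axioms.
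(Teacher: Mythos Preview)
You are correct that this is a conjecture, not a theorem, and the paper treats it exactly that way: it simply records QSETH as a hypothesis imported from \cite{ACL+20}, with no proof or even heuristic justification beyond the one-line remark that Grover gives a quadratic speedup over naive search. Your extended discussion of supporting evidence (query lower bounds, the Grover upper bound, consistency with downstream results) goes well beyond anything the paper attempts, but it is accurate and appropriate context; there is nothing to compare against because the paper offers no argument at all.
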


This formulation of QSETH is directly comparable to SETH with just the coefficient in the exponent and the model of computation changed. There is another more general formulation of QSETH for a range of problems due to \cite{BPS19} which simplifies to the ``basic'' formulation given above for the satisfiability of CNF-formulas.

As its name suggests, there is also a weaker variant of SETH known as the Exponential Time Hypothesis (ETH). This conjecture merely states that \ksat requires exponential time for all $k$ without making any statements about particular constants in the exponent. 

\begin{conjecture}[ETH \cite{IP99}]
    Every algorithm solving the Satisfiability of $3$-CNF formulas on $n$ variables must need $\Omega(2^{\delta n})$ time in expectation for some $\delta>0$. 
\end{conjecture}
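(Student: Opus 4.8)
The first thing to observe is that this is a \emph{conjecture}, not a theorem: ETH asserts an unconditional exponential lower bound on a concrete $\mathrm{NP}$-complete problem, so any genuine proof would in particular imply $\mathrm{P} \neq \mathrm{NP}$ — and a great deal more, since it would rule out even sub-exponential ($2^{o(n)}$) algorithms for $3$-SAT. Lower bounds of this strength are far beyond current techniques (relativization and natural-proofs considerations suggest fundamentally new ideas would be needed), so the plan is \emph{not} to prove it, but to lay out the evidence and structural results that make it a robust and widely-accepted working hypothesis, which is how it is used throughout this paper.

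The first supporting ingredient I would invoke is the \emph{sparsification lemma} of Impagliazzo, Paturi, and Zane, which shows that ETH stated for $3$-SAT is equivalent to the analogous statement for $k$-SAT for every fixed $k$, and moreover that one may assume the number of clauses is linear in $n$. This makes the hypothesis insensitive to the precise encoding and to the clause-to-variable ratio, which is exactly why it serves as a reliable anchor for fine-grained reductions (including the ETH-based row of \cref{tab:bounds}). The second ingredient is empirical: despite decades of work on exact algorithms for $3$-SAT — branching/DPLL-style algorithms, randomized local search, and their refinements — the best known running time remains of the form $2^{\delta n}$ with $\delta$ bounded away from $0$, and no one has proposed a plausible route to $2^{o(n)}$.

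The step I expect to be the genuine obstacle — indeed the insurmountable one with present tools — is the only step that would actually \emph{prove} ETH: exhibiting, for some fixed $\delta > 0$, a family of $3$-CNF instances (or a diagonalization/hierarchy argument) that forces every algorithm to spend $\Omega(2^{\delta n})$ time. Known circuit and time-hierarchy lower bounds fall enormously short of this, and the standard barriers explain why. Accordingly, the honest ``proof plan'' is to treat ETH as a hypothesis from which conditional consequences are derived, and to point to the sparsification lemma and the algorithmic status quo as the reasons the community regards it as safe to assume.
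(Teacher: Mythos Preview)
Your assessment is correct: the statement is a \emph{conjecture}, and the paper makes no attempt to prove it --- it is simply stated (with attribution to \cite{IP99}) as a standard hypothesis in the preliminaries and then used as an assumption elsewhere. There is therefore no ``paper's own proof'' to compare against; your recognition that a genuine proof would resolve $\mathrm{P}$ vs.\ $\mathrm{NP}$ and is out of reach is exactly the right reading, and the supporting discussion you supply (sparsification, algorithmic status quo) is sensible context but already more than the paper itself provides.
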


\subsection{Lattice problems}
\label{sec:lattice-problems}

We give the formal definition of a lattice as well as the terms we use to work with lattices, such as the rank of a lattice and the distance of a lattice from a vector. 

\begin{definition}[Lattice, basis, coordinates, rank, and distance]
  \label{def:lattice}
  A \emph{lattice} $\mathcal{L} \subset \mathbb{R}^d$ is a set of all the linear combinations of linearly independent basis vectors $\bm{b}^{(1)}, \ldots, \bm{b}^{(n)} \in \mathbb{R}^d$ with integer coefficients,
  \begin{align*} 
    \mathcal{L}=\mathcal{L} (\bm{b}^{(1)}, \ldots, \bm{b}^{(n)}) \coloneqq
     \left\{ \sum_{i = 1}^n c_i  \bm{b}^{(i)} | c_i \in \mathbb{Z} \right\}. 
   \end{align*}
   Here $n$ is the \emph{rank} of the lattice $\mathcal{L}$ and $d$ is the \emph{ambient dimension} of $\mathcal{L}$.
  
   The \emph{basis} of a lattice can also be interpreted as a matrix $B \coloneqq \left(\begin{smallmatrix} | &  & |\\ \bm{b}^{(1)} & \cdots & \bm{b}^{(n)}\\ | &  & | \end{smallmatrix}\right)$ which relates every lattice point $\bm{v} \in \mathcal{L}$ to its coordinates $\bm{y} \in \mathbb{Z}^n$ by $\bm{v} = B \bm{y}$.  

   The \emph{distance} (with respect to the $\ell_p$-norm) between a vector $\bm{t}$ and a lattice $\mathcal{L}$, denoted by $\mathrm{dist}_p(\mathcal{L}, \bm{t})$, is defined as $\mathrm{dist}_p (\mathcal{L}, \bm{t}) = \min_{\bm{v} \in \mathcal{L}} \| \bm{v} - \bm{t} \|_p$. 
 \end{definition}

Next we present the formal definition of the closest vector problem.
\begin{definition}[The Closest Vector Problem]
  For $p \in [1, \infty)$ and $\gamma \geq 1$, the \emph{$\gamma$-approximate Closest Vector Problem} with respect to the $\ell_p$-norm, denoted \gammacvp, is defined as follows. Given a lattice $\mathcal{L}$ with $n$ (potentially sparse) basis vectors $\bm{b}^{(1)}, \ldots, \bm{b}^{(n)} \in \mathbb{R}^d$, a target vector $\bm{t} \in \mathbb{R}^d$, and a positive real number $r$, subject to the promise that either $\mathrm{dist}_p (\mathcal{L}, \bm{t}) \leq r$ or $\mathrm{dist}_p (\mathcal{L}, \bm{t}) > \gamma r$, decide if $\mathrm{dist}_p (\mathcal{L}, \bm{t}) \leq r$. 
\end{definition}

When $\gamma = 1$ we refer to the problem as \cvpsubp. Please note that an instance of the \gammacvp problem is fully defined by the triplet $(B, \bm{t}, r)$. 

Reductions to \cvpsubp from problems with binary variables like \ksat or \weightedmaxcut often create instances where the closest vector to the target always has binary coordinates. Since these instances are potentially easier than general instances, we define these instances as also belonging to the Binary Closest Vector Problem to better distinguish them.

\begin{definition}[Binary Closest Vector Problem]
    For $p \in [1, \infty)$, $\gamma \geq 1$, and distinct $a,b \in \mathbb{Z}$, the \emph{$\gamma$-approximate $\{a,b\}$-Binary Closest Vector Problem} with respect to the $\ell_p$-norm, which we denote by \gammabinarycvp[p], is a variant of the \gammacvp problem with an additional promise. The promise is as follows: for each instance $(B,\bm{t},r)$ of the problem with ambient dimension $d$, the coordinates $\bm{y}$ of the closest lattice vector $B \bm{y} \in \mathcal{L}$ to the target $\bm{t}$ are \emph{promised} to be in $\{a,b\}^d$. 
\end{definition}

Note that \gammabinarycvp is only defined for a certain subset of triplets $(B, \bm{t}, r)$ valid for \gammacvp. Since whether \gammabinarycvp captures the full power of \gammacvp is still an open problem, many reductions to \gammacvp are also reductions to \gammabinarycvp. For example, the reductions in \cite{BGS17} are to \gammabinarycvp with $\{a,b\}=\{0,1\}$\footnote{See for example, their discussion of their "identity gadget", as well as the last two paragraphs of the proof of Theorem 5.1.}, and the reduction in \cite{ABGS21} are to \gammabinarycvp with $\{a,b\}=\{-1,1\}$\footnote{See, for example, the introduction to section three ``We start by showing that it suﬃces to define isolating parallelepipeds over $\{-1,1\}$''.}. \gammabinarycvp is not to be confused with the Subset Closet Vector Problem defined below. 

\begin{definition}[Subset Closet Vector Problem]
    For $p \in [1, \infty)$ and $\gamma \geq 1$, \emph{$\gamma$-approximate Subset Closest Vector Problem} with respect to the $\ell_p$-norm, which we denote by $\gamma$-$(0,1)$-\cvpsubp, is a variant of the \gammacvp where only the distance from the target to the lattice points with coefficients in ${0,1}$ are considered. 
\end{definition}

\noindent Note any $(B, \bm{t}, r)$ triplets valid for \gammacvp is also valid for $\gamma$-$(0,1)$-\cvpsubp. 

We emphasize a crucial distinction between \gammabinarycvp and $\gamma$-$(0,1)$-\cvpsubp: \gammabinarycvp is strict subset of \gammacvp and thus lower bounds \gammacvp, while the time complexity of $\gamma$-$(0,1)$-\cvpsubp is currently incomparable to the time complexity of \gammacvp. In other words, every instance of \gammabinarycvp is also an instance of \gammacvp—an oracle for \gammacvp would return the same answer as an oracle for \gammabinarycvp on the on every $(B, \bm{t}, r)$ triple valid for \gammabinarycvp. An oracle for \gammacvp would return a \emph{different} answer than $\gamma$-$(0,1)$-\cvpsubp on certain lattices. When \gammabinarycvp has $\{a,b\}=\{0,1\}$, then it can also be solved by algorithms for $\gamma$-$(0,1)$-\cvpsubp.

\paragraph{Known Upper Bounds} 
The best known time upper bound for exact or near-exact \cvpsubp is $O(2^{n+o(n)})$ \cite{ADS15, AS18}. \cite{EV22} gives an $O(2^{0.802 n + \delta})$-time upper-bound for \gammacvp where $\gamma$ is some constant dependent on $\delta$. We do not know of any quantum algorithms with better time complexity in this regime. There are polynomial-time upper bounds for a subset of \gammacvp instances with $\gamma = O(2^{\sqrt{n}})$ \cite{EH22, ABCG22}. In this work we show some $O(2^{(1-\delta) n})$-time upper bounds for \gammabinarycvp[p] for some small positive values of $\delta$ and $p$ (see \cref{rmk:alg-for-binary-cvp-2}). See \cref{fig:fg-cvpvgamma} for a graph of known upper and lower bounds for \gammacvptwo. For a discussion of known classical lower bounds, see the introduction. 

\cite{KS20} showed a way to find upper bounds for $\gamma$-$(0,1)$-\cvpsubp and \gammabinarycvp for sufficiently large $\gamma$. This is discussed in detail in \cref{sec:faster-alg-max-cut}. After this paper was first publicized, \cite{AK25} showed an $O(2^{\omega n /3 + o(n)})$ time algorithm for $\gamma$-$(0,1)$-\cvpsubp dependent on the matrix multiplication constant $\omega$. This gives an $O(2^{0.79n})$ time algorithm at the time of their writeup. 

%\subsection{Unique Games} \label{sec:uniquegames}

\subsection{\maxtwolintwo and Minimum Uncut}

\label{sec:max-cut}

The minimum uncut problem (\weightedmaxcut) and its generalization \maxtwolintwo are famous NP-complete problems; however, the complexity of their approximate variant is not settled due to their connection to the Unique Games Conjecture (see \cref{conj:ugc} and the discussion of the lower bound by \cite{KKMO07}). The approximate variant is the simplest \uniquegames instance with the minimum alphabet size of 2. Since many aspects of the complexity of the maximum cut problem are inherited from the \uniquegames problem, we will start by defining and discussing the complexity of the more general \uniquegames problem first, and then we will define and discuss the complexity of \weightedmaxcut and \maxtwolintwo. In this section we will often refer to \weightedmaxcut as \maxcutforreal (Maximum Cut) to emphasize its similarities with and relationship to \maxtwolintwo and \uniquegames. 

\subsubsection{Unique Games}
Unique Games is a particularly interesting category of constraint satisfaction problems (CSPs), where the goal of the problem is to find an assignment of values to a fixed set of variables that satisfies the maximum number of constraints on the values of the variables (see the survey \cite{Kho10} for more information). For our purposes, it is sufficient to identify Unique Games with its complete problem \uniquelabel, which we now formally define. 

\begin{definition}[Unique Label Cover Value]
    Define a \emph{Unique Label Cover} instance $\mathcal{G} := (V,W, E \subset V \times W, \Sigma, \pi_{(v,w) \in E}: \Sigma \to \Sigma)$ where $((V,W),E)$ is a bipartite, undirected graph, $\Sigma$ is a set of labels to be applied to the vertices of the graph, and each $\pi_{(v,w)}$ is a permutation of $\Sigma$. Then the value of the \uniquelabel instance is defined as 
    \begin{equation*}
        \mathrm{val} ( {\mathcal{G}} ) := \max_{f: V \cup W \to \Sigma} \left\{ \Pr_{(v,w) \in E}  \left[ \pi_{(v,w)} (f(v)) = f(w) \right] \right\}.
    \end{equation*}
\end{definition}

Here is an intuitive explanation of the problem. We are given a bipartite graph $(V,W),E$ where each vertex can be assigned a label from an alphabet $\Sigma$ and each edge is assigned a permutation on $\Sigma$; the vertices are the variables of our problem, the labels are values, and the permutations on the edges are the constraints. So if $(v,w)\in V \times W$ is an edge in our graph, then given any assignment to $v$ there is a unique assignment to $w$ which satisfies the permutation on $(v,w)$ and vice versa. We would like to find an assignment $f$ over the vertices $V$ and $W$ which maximizes the fraction of satisfied edges.

\begin{definition}[The Unique Label Problem]
    The \emph{$(1-\varepsilon, 1-\varsigma)$-\uniquelabel Problem} is the problem of deciding, given a Unique Label Cover instance $\mathcal{G}$ subject to the promise that either $\mathrm{val}(\mathcal{G}) > 1-\varepsilon$ or $\mathrm{val}(\mathcal{G}) < 1-\varsigma$, if  $\mathrm{val}(\mathcal{G}) > 1-\varepsilon$.
\end{definition}

\begin{remark}
    This problem is sometimes also referred to as the \uniquelabel problem with \emph{completeness} $1-\varepsilon$ and \emph{soundness} $1-\varsigma$ due to a connection to probabilistically checkable proofs. The distance between the completeness and the soundness is sometimes called the \emph{gap}; we can also call this a \emph{$(1-\varepsilon, 1-\varsigma)$-gap} Unique Game problem. 
\end{remark}

One can check if $\mathcal{G}$ has value 1 in linear time by the following method: since any assignment to a vertex enforces assignments on its neighbors, one can simply start from some $v \in V$, choose a random assignment from $\Sigma$ and check if it propagates into an assignment that satisfies all the constraints. However, it is not clear what happens to the complexity if the value is slightly less than 1. In particular, the following conjecture remains open. 

\begin{conjecture}[Unique Games Conjecture] \label{conj:ugc}
    For every constant $1 > \varepsilon > 0$, there exists $\Sigma$ with $|\Sigma| = O(1)$ such that it is NP-hard to distinguish between two instances of \uniquelabel with the following values:

    \emph{YES instance}: $\mathrm{val} ( {\mathcal{G}} ) > 1- \varepsilon$; 
    
    \emph{NO instance}: $\mathrm{val} ( {\mathcal{G}} ) < \varepsilon$.
    
\end{conjecture}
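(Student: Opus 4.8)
Since this is the Unique Games Conjecture, which remains open, I can only outline the plan that has brought the field closest to it and identify where that plan gets stuck. As \cref{conj:ugc} asks for NP-hardness of distinguishing \uniquelabel instances of value $> 1-\varepsilon$ from those of value $< \delta$ for all constant $\varepsilon,\delta$, the natural route is to build a PCP by composition. I would start from a problem already NP-hard to approximate---most conveniently, Label Cover with projection constraints on a regular bipartite graph, whose gap-hardness follows from the PCP theorem together with the parallel repetition theorem---and attach to each outer variable an inner encoding-plus-test gadget so that any global assignment passing the composed test decodes to a consistent assignment of the outer instance. For the composed instance to be a \uniquelabel instance, every constraint of the inner test must be a bijection on a constant-size alphabet; arranging this while keeping the test sound (so that cheating decodes to a good outer assignment) is the delicate part.

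The most successful realization of this plan is the work on the $2$-to-$2$ Games Conjecture (Khot--Minzer--Safra; Dinur--Khot--Kindler--Minzer--Safra), whose inner gadget lives on the Grassmann graph and whose key ingredient is a pseudorandomness/expansion statement for small non-expanding sets in that graph. That program proves NP-hardness of \uniquelabel in the regime where YES instances have value only about $\frac12-\varepsilon$ while NO instances have value $< \delta$---that is, a weakened cousin of \cref{conj:ugc} in which completeness is bounded away from $1$---and it already yields new NP-hardness-of-approximation results (for instance, $\sqrt2$-hardness of Vertex Cover). The remaining step in the plan is to amplify the completeness from $\approx \frac12$ up to $1-\varepsilon$ while keeping the soundness bounded away from completeness and every constraint bijective.

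I expect this last step to be the main obstacle, and it is exactly where all current approaches stall. Every known inner test that is simultaneously sound, composable, and uses only bijective constraints has completeness bounded away from $1$: the $2$-to-$2$/Grassmann tests lose a constant factor in completeness when their $2$-to-$2$ constraints are split into unique ones, and the ``unique'' tests behind the consequences of \cref{conj:ugc} via \cite{KKMO07} presuppose the conjecture rather than establish it. Standard amplification does not obviously help: parallel repetition amplifies an existing completeness--soundness gap rather than creating one, so it needs a hard unique game to seed it---exactly what we lack---and in any case parallel repetition of unique games is known to be comparatively weak (Feige--Kindler--O'Donnell; Raz); starting instead from a projection game, which \emph{is} NP-hard with a gap, the difficulty reappears as the projection-to-unique conversion, which is precisely where the $2$-to-$2$ construction loses the constant in completeness. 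Finally, any purported proof must respect the subexponential-time algorithm of \cite{ABS15}: the construction has to be consistent with \uniquelabel---and hence \weightedmaxcut---admitting $2^{n^{o(1)}}$-time algorithms at every fixed gap, which in particular forbids obtaining \cref{conj:ugc} by a size-preserving reduction from an exponentially hard problem such as \ksat; ruling out such fine-grained reductions in closely related settings is one of the themes of this paper. Absent a genuinely new inner gadget with near-perfect completeness, the honest assessment is that \cref{conj:ugc} remains open, and the paper uses it only as a hypothesis (through \cite{KKMO07}).
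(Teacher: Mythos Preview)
Your assessment is exactly right: \cref{conj:ugc} is stated in the paper as an open conjecture, not a theorem, and the paper gives no proof. Immediately after stating it, the paper simply refers the reader to \cite{Kho10} for background, notes that \cite{KMS23} resolved the $1/2-\varepsilon$ completeness case, and says the conjecture ``still remains open.'' Your summary of the state of the art---the PCP-by-composition plan, the $2$-to-$2$/Grassmann breakthrough giving completeness $\approx 1/2$ rather than $1-\varepsilon$, and the tension with the \cite{ABS15} algorithm---is accurate and in fact goes well beyond what the paper itself says; it is consistent with the paper's one-line remark about \cite{KMS23}. There is nothing to compare or correct here: the paper does not attempt a proof, and you correctly identified that none exists.
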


We refer the reader to \cite{Kho10} for the consequences and significance of the conjecture. Also using standard techniques, one can assume that the underlying graph is a regular graph (therefore the $V$ side is $d$-regular). Recently \cite{KMS23} resolved the case where the completeness is $1/2 - \varepsilon$ and soundness is $\varepsilon$; however, \cref{conj:ugc} still remains open.

\subsubsection{Maximum Cut and \maxtwolintwo}
Here, we will define unweighted \maxtwolintwo and its associated maximum cut problem. 
First, we define unweighted \maxtwolintwo, which is the Unique Games instance with alphabet size 2.

\begin{definition}[gap unweighted \maxtwolintwo]
    For $\varepsilon, c \in (0, 1)$, the $(1-\varepsilon, 1- \varepsilon^c)$-gap unweighted \maxtwolintwo problem is defined as follows. Given $n$ variables, $x_1, \ldots, x_n$, a set of linear constraints of the form $x_{e_1} \oplus x_{e_2} = c_{e}$ where $c_{e} \in \{0,1\}$ called $E$ (of size $m$) 
    and weight function $w:E \rightarrow \mathbb{R}^+$ (normalized so that $\sum_{e \in E} w(e) = 1$) such that $w(e)=1/|E|$ for all $e \in E$
    , decide if there exists a Boolean assignments to $x_i$'s such that:

    \emph{YES instance}: $\sum_{e \in E} w ( e ) \cdot \mathbbm{1}_{ x_{e_1} \oplus x_{e_2} = c_{e} }  > 1- \varepsilon$; 
    
    \emph{NO instance}: $\sum_{e \in E} w ( e ) \cdot \mathbbm{1}_{ x_{e_1} \oplus x_{e_2} = c_{e} }  < 1- \varepsilon^c$.
\end{definition}

Intuitively, we are given $m$ many linear constraints (over $\mathbb{F}_2$), each with two variables. The goal is to distinguish between the case where there exists an assignment which satisfies {\em most} of the constraints, versus the case where all assignments satisfies slightly less constraints. Note that using standard Gaussian Elimination, one can easily check if all constraints can be satisfied simultaneously. The problem instead asks if there exists an assignment which satisfies {\em most} of the constraints. Note that we also defined (weighted) \maxtwolintwo in the introduction. 

Now we proceed to defining the Maximum Cut problem, which is a special case or sub-instances of (weighted) \maxtwolintwo.

\begin{definition}[cut value, maximum cut ratio]
  The \emph{cut value} of a partition $P$ of a weighted graph $G = ([n], E)$ with weights given by the function $w:E \rightarrow \mathbb{R}^+$, denoted by $\mathrm{cutvalue} (G,w,P)$, is the total weight of cut the edges. Then the \emph{maximum cut ratio} of a weighted graph is $\mathrm{maxcutratio} (G,w) \coloneqq \max_P \frac{\mathrm{cutvalue} (G,w,P)}{w_\mathrm{tot}}$, where $w_\mathrm{tot}=\sum_{e \in E} w(e)$ represents the total weight of all edges. 
\end{definition}

\begin{definition}[The approximate Max-Cut problem]
    For $\varepsilon, c \in (0, 1)$, the \gapweightedmaxcut problem is defined as follows. Given the adjacency list $E \subseteq [n] \times [n]$ for a weighted graph $G = ([n], E)$ with $n$ vertices, $m := |E|$ edges, and weight function $w:E \rightarrow \mathbb{R}^+$ for which it is promised that either $\mathrm{maxcutratio} (G,w) \geq 1 - \varepsilon$ or $\mathrm{maxcutratio} (G,w) < 1 - \varepsilon^c$, decide if $\mathrm{maxcutratio} (G,w) \geq 1 - \varepsilon$. 
\end{definition}

\begin{remark}
Observe that the Maximum Cut problem is essentially {\bf a special case} of \maxtwolintwo, where we set all the constraints to be of the form $x_i \oplus x_j = 1$. Therefore any reduction from \maxtwolintwo or its gap variant {\bf implies} a reduction from Max-Cut or its gap variant. To phrase our result as general as possible, we show a reduction from gap \maxtwolintwo in \cref{sec:main-reduction}. The theorem then would also immediately imply a reduction from gap Max-Cut.
\end{remark}

% We assume that every vertex in $G$ is a part of an edge in $G$ since vertices without edges do not affect the value of the maximum cut. This means that $n \leq m$ for all the graphs we are considering, and it means that we can fully define an instance of \gapweightedmaxcut by the triplet $(n, E, w)$. 

% \begin{definition}[The unweighted Max-Cut problem]
%     The \emph{\gapunweightedmaxcut} problem is similar to the \gapweightedmaxcut problem, but with a constant weight function $w(e)=1$ $\forall e \in E$. 
% \end{definition}

% An instance of \gapunweightedmaxcut is fully defined by the pair $(n, E)$. Note that in the unweighted case, the value of a cut is equal to the number of cut edges. 

Problems with smaller gaps are harder than problems with larger gaps. However there is a sense in which all $(1-\varepsilon, 1-\varepsilon^c)$-gap Unique Game instances with $c < 1/2$, including \maxcutforreal, are equally easy: using the parallel repetition technique~\cite{Raz98, Rao08, Hol09} (alas with a polynomial blow-up in the instance size), any algorithm that solves a $(1-\varepsilon', 1-\varepsilon'^c)$-gap Unique Game with $c \leq 1/2$ can solve any $(1-\varepsilon, 1-C'\varepsilon^{1/2})$-gap Unique Game for some specific $C'>0$. This ``parallel repetition'' regime is also where the previously known upper bounds on \gapweightedmaxcut apply. 

\paragraph{Known Landscape}

Since the known landscape for \maxcutforreal and \maxtwolintwo are equivalent from an algorithmic perspective, we only refer to \maxcutforreal throughout the remaining section to match with previous results.

Observe that \maxcutforreal can be solved by a naive brute force search over all possible assignments or partitions. Since each partition represents a binary choice for each and every vertex the classical naive search must try $O(2^n)$ many partitions in expectation and a quantum naive search \cite{Gro96} must try $O(2^{n/2})$ many partitions in expectation. Williams~\cite{Wil05} then shows an algorithm which finds the exact value of the cut (or total weights of satisfied constraints) in $1.8^n$-time.

But we are then interested in whether the approximation variant makes the problem any easier. Since the approximate variant of \maxcutforreal is a special case of \uniquegames, all \uniquegames upper bounds apply. These upper bounds are well-summarized in Khot's survey~\cite{Kho10}; here we will highlight the algorithms most relevant to our regime of interest. First we mention two polynomial-time algorithms: \cite{CMM06a} showed an algorithm which decides the $(1- \varepsilon, 1 - C \sqrt{ \varepsilon })$-gap problem for some sufficiently large constant $C$, and \cite{CMM06b} showed an algorithm which decides the $(1 - \varepsilon, 1 - \varepsilon  \sqrt{ \log n })$-gap problem. \cite{ABS15}'s breakthrough result showed a sub-exponential-time algorithm for deciding even smaller polynomial gaps between the YES and NO case; specifically, they showed a $2^{n^k}$-time algorithm which decides the  $(1 - \varepsilon, 1 - \sqrt{\varepsilon / k^3})$-gap problem for any $k$ larger than $\log \log n / \log n$ \cite{Ste10}. Their algorithm traces out a difficulty curve for Unique Games that increases as soundness approaches $1 - \sqrt{\varepsilon}$, where the parallel repetition regime ends  (see the solid line in \cref{fig:fg-maxcut}).

There is also a \maxcutforreal/\weightedmaxcut specific algorithm on the arXiv which solves $\alpha$-approximate \weightedmaxcut in $\exp(n/2^{\Omega(\alpha^2)})$-time \cite{MT18}\footnote{Note that a \weightedmaxcut algorithm is listed as an open question in the published version of this article and is present in the arXiv version which was submitted later.}. Here $\alpha = \varsigma/\varepsilon$ is the approximation ratio of a \esgap \weightedmaxcut instance. Note that even for fixed $c$, the ratio $\alpha$ for \gapweightedmaxcut still changes as a function of $\varepsilon$.

\begin{figure}[!ht]
  \centering
  \includegraphics[width=0.6\textwidth]{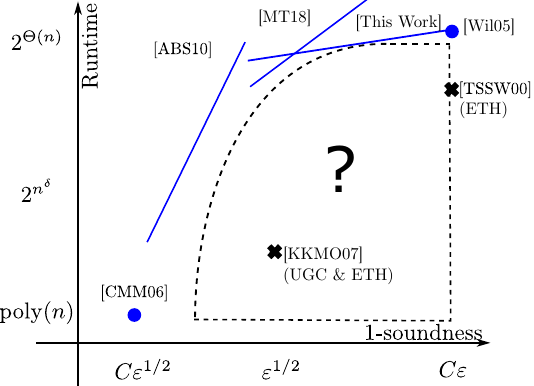}
  \caption{Known upper and lower bounds for \gapweightedmaxcut}
  \label{fig:fg-maxcut}
\end{figure}

%So while there are polynomial-time algorithms for $(1 - \varepsilon,1 - C \cdot \varepsilon^{1/2})$-gap \maxcutforreal with large enough constant $C$ and even a sub-exponential-time for smaller constants $C$, for small enough $C$ there are no known upper bounds (beyond the exact algorithms). The difficulty of \gapweightedmaxcut for $c > 1/2$ is an open question, with no known sub-exponential-time algorithms even for $c = 1/2 + 0.001$ (see the dashed line in \cref{fig:fg-maxcut}). In fact, any such algorithm would immediately be faster than \cite{ABS15} in the $c=1/2$ regime!

\paragraph{Ideal lower bounds}
The holy-grail lower bound is to show NP-hardness (i.e. a polynomial sized reduction from 3SAT) for $(1 - \varepsilon, 1 - C \cdot \varepsilon^{1/2})$-gap \maxcutforreal for some sufficiently large constant $C$. Such a result combined with Parallel Repetition techniques~\cite{Raz98, Rao08, Hol09} would prove the Unique Games Conjecture. While no such result is known, it is known that the Unique Games Conjecture implies that $(1 - \varepsilon, 1 - C \cdot \varepsilon^{1/2})$-gap \maxcutforreal is NP-hard due to \cite{KKMO07}, which showed a surprising mapping from any $(1 - \varepsilon, 1 - C \cdot \varepsilon^{1/2})$-gap \maxcutforreal instance to an easier \uniquelabel instance; this result gives a $2^{n^\delta}$-time lower bound for some $\delta>0$ assuming the Unique Games Conjecture and Exponential Time Hypothesis. Currently, the best known lower bound is \cite{TSSW00} which showed that it is NP-hard to distinguish instances of \maxcutforreal with cut ratio $1- \varepsilon$ and cut ratio $ 1 - C' \varepsilon$ (using a linear sized reduction for some small $C' > 1$).

\subsection{Approximate Nearest Neighbor} % \yk{todo}

Approximate Nearest Neighbor is a well-studied search problem in the context of sub-linear time algorithms and hashing algorithms. Given a set $A$ of $N$ points, Nearest Neighbor is the problem of finding the exact closest point in $A$ to each query point $\bm{x}$ received in an online fashion; this requires linear time for each query. However, sub-linear query times are possible through the use of preprocessing if the queries are relaxed to accept any point in $A$ close to $\bm{x}$ instead of only accepting the exact closest point. Approximate Nearest Neighbor is the variant of Nearest Neighbor with these more relaxed queries. We give a formal definition of the decision variant of Approximate Nearest Neighbor below:

\begin{definition} [Approximate Nearest Neighbor]
    The \emph{$\gamma$-Approximate Nearest Neighbor} problem (\emph{\gammaANN}) with ambient dimension $d$ in the $\ell_p$-norm for approximation factor $\gamma > 1$ is defined as follows. First, preprocess $N$ vectors $A \subset \mathbb{R}^d$ into some data structure given a parameter $r \in \mathbb{R}^+$. Then for each query vector $\bm{x} \in \mathbb{R}^d$ decide if $\min_{\bm{a} \in A} ||\bm{a} - \bm{x}||_p \leq r$ using the data structure. Every query vector $\bm{x}$ is subject to the promise that either $\min_{\bm{a} \in A} ||\bm{a} - \bm{x}||_p \leq r$ or $\min_{\bm{a} \in A} ||\bm{a} - \bm{x}||_p \geq \gamma r$. 
\end{definition}

Note that \gammacvp can be thought of as a variant of \gammaANN with only one query to a very structured set of vectors. A \gammaANN instance is fully defined by the set $A$, the promise distance $r$, and the set of query vectors $\{\bm{x}^{(1)},\dotsc\}$. See the survey \cite{AIR18} for more information about the Approximate Nearest Neighbor problem. 

The following near optimal state-of-the-art algorithm is known due to \cite{AR15}:
\begin{restatable}[{\cite[Theorem 1.1]{AR15}}]{theorem}{aralgorithm}
    \label{thm:ar15-alg}
    For any fixed $\gamma > 1$, there is an algorithm which solves the \gammaANN problem with ambient dimension $d$ in the $\ell_2$ norm on $N$ points using $O(d \cdot N^{\rho + o(1)})$ query time, $O(N^{1+\rho+o(1)} + d \cdot N)$ space, and $O(d \cdot N^{1+\rho+o(1)})$ preprocessing time, where $\rho = \frac{1}{2 \gamma^2 - 1}$. There is also an algorithm for the $\ell_1$ norm with the same parameters except $\rho = \frac{1}{2 \gamma - 1}$. 
\end{restatable}

\section{Main reduction}
\label{sec:main-reduction}
In this section we will prove our main theorem, which we restate below. We will first present our reduction and then use it to prove the theorem. 

\restatemaintheorem*

\subsection{Reduction}
\label{red:main-reduction}
We first give closed form definitions of the basis $B$ and target vector $\bm{t}$ produced by our reduction in terms of the \maxtwolintwo instance it receives. For the clarity of the closed form definitions, we represent the \maxtwolintwo instance as the triple $(n, (E^{(0)},E^{(1)}), w)$. $n \in \mathbb{N}$ is the number of variables in the instance. $E^{(0)},E^{(1)}$ are ordered sets of constraints representing the equality and inequality constraints respectively. We denote the set of all constraints as $E := E^{(0)} \cup E^{(1)}$ and the number of constraints as $m_0, m_1, m := |E^{(0)}|, |E^{(1)}|, |E|$. Each constraint is an unordered pair of variables which we write as $(i,j) \in E$ with $i<j$. $w:E \rightarrow [1,\infty)$ is a 1-normalized weight function\footnote{Our results can still be applied to general weight functions because any weight function can be normalized through scaling the weights by a factor of $w_\mathrm{min}^{-1}$.} with $w(e)=1$ for at least one $e \in E$. For convenience, we denote the minimum weight of a weight function as $w_\mathrm{min} \coloneqq \min_{e \in E} w(e)$ and the total weight of a weight function as $w_\mathrm{tot} \coloneqq \sum_{e \in E} w(e)$. 

We can think of $\bm{t}$ as a series of 2-row blocks with each of the first $m_1$ blocks dedicated to an inequality constraint and each of the remaining $m_0$ blocks dedicated to an equality constraint, as follows:

\[
\bm{t} \coloneqq \sum_{k = 1}^{m_1}  (0 \cdot \bm{e}_{2 k - 1} + \iota w'(E^{(1)}_k) \cdot \bm{e}_{2 k}) + 
    \sum_{k = m_1 + 1}^{m_1+m_0}  (0 \cdot \bm{e}_{2 k - 1} + 1 \cdot \bm{e}_{2 k}),
\]
where $w'(e) \coloneqq \sqrt[p]{w(e)}$ is an $\ell_p$-norm adjusted weight function, $\bm{e}_1, \ldots, \bm{e}_{2 m}$ is an orthonormal basis for $\mathbb{R}^{2m}$, and $\iota > 1$ is a parameter. 

Next we define each of the $n$ columns of $B$, one for each variable. Again we can think of it as a stack of 2-row blocks, one for each constraint. For convenience we think of $E$ in a similar way, with $E_{1:m_1}=E^{(1)}$ and $E_{m_1+1:m}=E^{(0)}$. Since the column $B_{\cdot, i}$ is a basis vector representing the variable $v_i$, it only has nonzero entries in the rows for constraints that $v_i$ participates in. We give a slightly different construction when $i$ is the second variable in a constraint to ensure that $B$ is linearly independent. 

\begin{align*}
B_{\cdot, i} \coloneqq &
    \sum_{k \in [m_1] |E_k = (i, \cdot)}  (-\bm{e}_{2 k - 1} + \iota w'(E_k) \cdot \bm{e}_{2 k}) + 
    \sum_{k \in [m_1]|E_k = (\cdot, i)}  (\bm{e}_{2 k - 1} + \iota w'(E_k) \cdot \bm{e}_{2k}) + 
    \\ &
    \sum_{k \in \{(m_1+1)..m\}|E_k = (i, \cdot)} (-\iota w'(E_k) \cdot\bm{e}_{2 k - 1} + \bm{e}_{2 k}) + 
    \sum_{k \in \{(m_1+1)..m\}|E_k = (\cdot, i)}  (\iota w'(E_k) \cdot\bm{e}_{2 k - 1} + \bm{e}_{2 k}) 
\end{align*}

For \minusesgap \maxtwolintwo instances, we set $r \coloneqq  (m (1 - \varepsilon) + \iota^p \varepsilon w_\mathrm{tot})^{1 / p}$. When $\iota \gg m$ is sufficiently large we claim that this gives the output instances an approximation factor of $\gamma = \left( \frac{m (1 - \varsigma) + \iota^p \varsigma w_\mathrm{tot}}{m (1 - \varepsilon) + \iota^p \varepsilon w_\mathrm{tot}} \right)^{1 / p}$. We can choose $\iota$ to be as large as we want as long as it is still small enough to compute on. For example, if we wanted $\iota$ to be representable in $O(\log n)$ bits we could choose $\iota := 2^{32} n^{32} \gg n^2 \geq m$.

\subsection{Proof of main theorem}
\begin{proof}[Proof of \cref{thm:weighted-main-theorem}]
    We claim that the reduction in \cref{red:main-reduction} is a linear-time quantum and classical reduction from \minusesgap \maxtwolintwo to \esgamma-\cvpsubp that uses one basis vector per variable as required. Since the reduction uses one basis vector per variable and is classical by definition, it suffices to show the following: that the reduction can be performed in linear time, that the reduction is valid quantum reduction, that the reduction always produces a valid \cvpsubp instance, and that the reduction produces instances which satisfy the approximation factor promise, also known as the completeness and soundness guarantee. 

    For some suitable sparse vector representation, it is clear that the reduction can be performed in linear ($O(m)$) time by adding a constant number of entries to the basis vectors for each constraint ($\bm{t}$ and $r$ can also be constructed by their definitions in linear time). 

    To call this reduction a quantum reduction we would ideally want to be able to treat it like a perfect quantum oracle which can take a coherent superpostion of valid input instances and return a coherent superposition of output instances where output instances that are the same constructively interfere with each other. It suffices to show that the reduction can be performed on a quantum computer, that the reduction always halts after predictable amount of time, and that the same output instances are always represented the same way. Since the reduction is deterministic and does not delete information, it is also reversible (meaning that it's possible to compute the exact representation of the input \minusesgap \maxtwolintwo instance from the output \gammacvp instance) and thus can be performed on a quantum computer. Since we've already defined an input representation that is unique and our reduction is deterministic, there is only one possible output for each input and for our purposes different inputs are supposed to have different outputs. Because this is a Karp reduction (meaning it has exactly one input and one output), this suffices to show that the same output instances always have the same representation. Since our reduction clearly halts, always makes progress, and we already established it runs in linear time, it always halts after predictable amount of time. 

    To show that $(B,\bm{t},r)$ is a valid \cvpsubp instance, it suffices to show that the columns of $B$ are linearly independent.  Consider an arbitrary lattice basis vector $B_{\cdot, i}$. Without loss of generality we can assume that the $i^\text{th}$ variable participates in a constraint\footnote{Otherwise we can safely drop the $i^\text{th}$ variable from the input instance.}, so $B_{\cdot, i}$ has a non-zero entry. Let $k$ be an index such that $i \in E_k$ and let $j$ be the other variable in $E_k$. Then $B_{\cdot, i}$ and $B_{\cdot, j}$ are the only lattice basis vectors with non-zero coefficients for $\bm{e}_{2 k - 1}$ and $\bm{e}_{2 k}$, so $B_{\cdot, i}$ is linearly independent from $\{ B_{\cdot, 1}, \ldots, B_{\cdot, n} \} - \{ B_{\cdot, i}, B_{\cdot, j} \}$.  Furthermore, $B_{\cdot, i}$ and $B_{\cdot, j}$ must be linearly independent because their subvectors in the $\bm{e}_{2 k - 1}$ and $\bm{e}_{2 k}$ dimensions are linearly independent. Thus we have established that $B_{\cdot, 1}, \ldots, B_{\cdot, n}$ are linearly independent. 

    In the remaining part of the proof, we will give upper and lower bounds on the distance of the lattice to the target when the reduction is given good and bad instances, then use those bounds to show that the instances satisfy the required approximation factor promise. To do so, we will assume for now that that the closest lattice point to the target {\bf always has binary coefficients} due to the structure of our lattice and the target $\bm{t}$. This assumption is to be proved in \cref{lemma:binary-coord-lemma}.

    For any assignment $v_1,\dots,v_n$ to the variables we can construct a lattice point with binary coefficients $\bm{u} = \sum_{i\in[n]}v_i B_{\cdot, i}$, and for any lattice point with binary coefficients $\bm{u} = \sum_{i\in[n]}v_i B_{\cdot, i}$ we can construct an assignment $v_1,\dots,v_n$ to the variables. The distance of such a lattice point to the target is given by 
    \begin{align*}
        \mathrm{dist}_p(\bm{u}, \bm{t})^p &= \sum_{k \in [m]} \mathrm{dist}_p(\bm{u}_{2k-1:2k}, \bm{t}_{2k-1:2k})^p 
        = \sum_{k \in E, (i,j) = E_k} \mathrm{dist}_p(v_i B_{\cdot, i} + v_i B_{\cdot, j}, \bm{t}_{2k-1:2k})^p
        \\ &= \sum_{k \in [m]} \begin{cases}
            1^p & k^{\mathrm{th}}~  \text{constraint is satisfied}\\
            \iota^p w'(E_k)^p = \iota^p w(E_k) & k^{\mathrm{th}}~  \text{constraint is unsatisfied}
        \end{cases}
        \\ &= (\text{\# of satisfied constraints}) + \iota^p(\text{total weight of unsatisfied constraints}).
    \end{align*}
    So when the reduction is given an instance of \maxtwolintwo with a variable assignment that satisfies a $(1-\varepsilon)$-fraction of the constraints there is a lattice point which has distance $(m (1 - \varepsilon) + \iota^p \varepsilon w_\mathrm{tot})^{1 / p}$ from the target. This gives an upper bound on the distance from the lattice to the target. 

    When the reduction is given an instance of \maxtwolintwo where every variable assignment satisfies less than a $(1-\varsigma)$-fraction then every lattice point with 0-1 coefficients is more than distance $(m (1 - \varsigma) + \iota^p \varsigma w_\mathrm{tot})^{1 / p}$ away from the target. By \cref{lemma:binary-coord-lemma} one of those points is the closest point in the lattice to the target, so this gives a lower bound on the distance to the target for every lattice point. 

    The above upper bound and lower bound give the promise gap (between $r$ and $\gamma r$) between the close and far \cvpsubp instances produced by the reduction. All that remains is to check that they give the required approximation factor. 
    \[
    \gamma = \frac{\gamma r}{r} = \frac{(m (1 - \varsigma) + \iota^p \varsigma w_\mathrm{tot})^{1 / p}}{(m (1 - \varepsilon) + \iota^p \varepsilon w_\mathrm{tot})^{1 / p}} \approx \left(\frac{\iota^p \varsigma w_{\mathrm{tot}}}{\iota^p \varepsilon w_{\mathrm{tot}}} \right)^{1/p} = \sqrt[p]{\varsigma/\varepsilon}
    \]
    as required, when $\iota^p \gg m$. 
\end{proof}

\begin{lemma}
    \label{lemma:binary-coord-lemma}
    The closest lattice point to the target $\bm{t}$ always has 0-1 coefficients on the basis vectors in the \cvpsubp instances produced by reduction in \cref{red:main-reduction}. 
\end{lemma}
\begin{proof}
  It suffices to show that for any lattice point with non-binary coordinates, there is a lattice point with binary coordinates that is at least as close to $\bm{t}$. Pick an arbitrary coordinate $\bm{y} \in \mathbb{Z}^n$ with at least one non-binary entry and let $\widetilde{\bm{y}} \in \{ 0, 1 \}^n$ denote the binarization of $\bm{y}$ given by 
  $\widetilde{\bm{y}}_i =
  \begin{cases}
    0 & \bm{y}_i \leq 0\\
    1 & \bm{y}_i \geq 1
  \end{cases}$; 
  we will show that $\| B \widetilde{\bm{y}} - \bm{t} \|_p < \| B \bm{y} - \bm{t} \|_p$, which implies the previous statement. Pick an arbitrary $k \in [m]$, and compare the distance between $B \widetilde{\bm{y}}$ and $\bm{t}$ to the distance between $B \bm{y}$ and $\bm{t}$ along the $\bm{e}_{2 k - 1}$ and $\bm{e}_{2 k}$ bases. Let $(i, j) = E_k$ denote the $k^{\mathrm{th}}$ constraint, and let $V(i) = \widetilde{\bm{y}}_i$ be the assignment to the variables according to $\widetilde{\bm{y}}$.
  
  If $E_k$ is an inequality constraint and $V$ satisfies it then $\{ \widetilde{\bm{y}}_i, \widetilde{\bm{y}}_j \} = \{ 0, 1 \}$. Without loss of generality let $\widetilde{\bm{y}}_i = 0$; then $\bm{y}_i \leq 0$ and $\bm{y}_j \geq 1$ so
  \begin{align*}
    | (B \widetilde{\bm{y}} - \bm{t})_{2 k - 1} | & =  | -
    \widetilde{\bm{y}}_i + \widetilde{\bm{y}}_j - 0 | = 1\\
    & \leq  \\
    | - \bm{y}_i + \bm{y}_j - 0 | & =  | ( \bm{y}_i
    B_{\cdot, i} + \bm{y}_j B_{\cdot, j} - \bm{t})_{2 k - 1} | = |
    (B \bm{y} - \bm{t})_{2 k - 1} |
  \end{align*}
  and
  \begin{align*}
    |  (B \widetilde{\bm{y}} - \bm{t})_{2 k} | = | \iota w'(E_k) \cdot (
    \widetilde{\bm{y}}_i + \widetilde{\bm{y}}_j - 1) | & =  0\\
    & \leq  | (B \bm{y} - \bm{t})_{2 k} |
  \end{align*}
  so $B \widetilde{\bm{y}}$ is at least as close to $\bm{t}$ as $B \bm{y}$ along the $\bm{e}_{2 k - 1}$ and $\bm{e}_{2 k}$ bases. Note that the first inequality is strict when at least one of $\bm{y}_i, \bm{y}_j$ is non-binary.
  
  If $E_k$ is an inequality constraint and $V$ does not satisfy it then $\widetilde{\bm{y}}_i$ and $\widetilde{\bm{y}}_j$ are either both 0 or both 1, so
  \begin{align*} 
    | (B \widetilde{\bm{y}} - \bm{t})_{2 k - 1} | = 0 \leq |
     (B \bm{y} - \bm{t})_{2 k - 1} | 
   \end{align*}
  and $\bm{y}_i$ and $\bm{y}_j$ are either both $\leq 0$ or both $\geq 1$ so
  \begin{align*} 
    |  (B \widetilde{\bm{y}} - \bm{t})_{2 k} | = \iota w'(E_k) \leq
     | \iota w'(E_k) \cdot (\bm{y}_i + \bm{y}_j - 1) | = | (B \bm{y}
     - \bm{t})_{2 k} | 
   \end{align*}
  so $B \widetilde{\bm{y}}$ is at least as close as $B \bm{y}$ to $\bm{t}$ along the $\bm{e}_{2 k - 1}$ and $\bm{e}_{2 k}$ bases. Note that the second inequality is strict when at least one of $\bm{y}_i, \bm{y}_j$ is non-binary.

  If $E_k$ is an equality constraint and $V$ satisfies it then $\widetilde{\bm{y}}_i$ and $\widetilde{\bm{y}}_j$ are either both 0 or both 1, so
  \begin{align*} 
    | (B \widetilde{\bm{y}} - \bm{t})_{2 k - 1} | = 0 \leq |
     (B \bm{y} - \bm{t})_{2 k - 1} | 
   \end{align*}
  and $\bm{y}_i$ and $\bm{y}_j$ are either both $\leq 0$ or both $\geq 1$ so
  \begin{align*} 
    |  (B \widetilde{\bm{y}} - \bm{t})_{2 k} | = 1 \leq
     | \bm{y}_i + \bm{y}_j - 1| = | (B \bm{y}
     - \bm{t})_{2 k} | 
   \end{align*}
  so $B \widetilde{\bm{y}}$ is at least as close as $B \bm{y}$ to $\bm{t}$ along the $\bm{e}_{2 k - 1}$ and $\bm{e}_{2 k}$ bases. Note that the second inequality is strict when at least one of $\bm{y}_i, \bm{y}_j$ is non-binary.

  If $E_k$ is an equality constraint and $V$ does not satisfy it then $\{ \widetilde{\bm{y}}_i, \widetilde{\bm{y}}_j \} = \{ 0, 1 \}$. Without loss of generality let $\widetilde{\bm{y}}_i = 0$; then $\bm{y}_i \leq 0$ and $\bm{y}_j \geq 1$ so
  \begin{align*}
    | (B \widetilde{\bm{y}} - \bm{t})_{2 k - 1} | &= \iota w'(E_k)\\
    & \leq 
    \\ | -\iota w'(E_k) \bm{y}_i + \iota w'(E_k) \bm{y}_j - 0 | 
    & =  | ( \bm{y}_iB_{\cdot, i} + \bm{y}_j B_{\cdot, j} - \bm{t})_{2 k - 1} | 
    = |(B \bm{y} - \bm{t})_{2 k - 1} |
  \end{align*}
  and
  \begin{align*}
    |  (B \widetilde{\bm{y}} - \bm{t})_{2 k} | = |\widetilde{\bm{y}}_i + \widetilde{\bm{y}}_j - 1| & =  0 \leq  | (B \bm{y} - \bm{t})_{2 k} |
  \end{align*}
  so $B \widetilde{\bm{y}}$ is at least as close to $\bm{t}$ as $B \bm{y}$ along the $\bm{e}_{2 k - 1}$ and $\bm{e}_{2 k}$ bases. Note that the first inequality is strict when at least one of $\bm{y}_i, \bm{y}_j$ is non-binary.
  
  Since $k$ was chosen arbitrarily, this shows that $\widetilde{\bm{y}}$ is always at least as close as $\bm{y}$ to $\bm{t}$ along each computational basis; and since in both cases there is a strict inequality when at least one of $\bm{y}_i, \bm{y}_j$ is non-binary, there is at least one computational basis where $B \widetilde{\bm{y}}$ is closer to $\bm{t}$ than $B \bm{y}$ is. Thus $\| B \widetilde{\bm{y}} - \bm{t} \|_p < \|  B \bm{y} - \bm{t} \|_p $.  
\end{proof}

\section{Fine-grained hardness of approximate CVP}
\label{sec:fg-hardness-of-cvp}

In this section we quantify which advancements in upper bounds for \gammacvp would result in interesting new algorithms for \maxtwolintwo due to our reduction. 

%\subsection{A faster algorithm for $O (\sqrt{\log n})$-CVP improves upon \cite{ABS15}}

\label{sec:classical-faster-cvp-faster-mc}

\paragraph{Classical CVP hardness and approximation factor}

We can give the following corollary directly from \cref{cor:equal-time} and the state of \minusesgap \maxtwolintwo and \weightedmaxcut algorithms:

\begin{corollary}
    Any sub-exponential-time classical algorithm for $O(1)$-CVP$_p$ (with $p\in [1, \infty)$) would also be an algorithm for \gapweightedmaxcut and \eecgap \maxtwolintwo that bests all known algorithms, including \cite{ABS15}, \cite{CMM06b}, \cite{Wil05}, \cite{MT18}, and the one in \cref{sec:faster-alg-max-cut}, in running time, applicable approximation regime, or both.

    Furthermore, any $O(2^{n/2})$-time classical algorithm for $c_1$-CVP$_p$ (with $p\in [1, \infty)$ and $c_1$ a constant dependent on $p$ and unspecified large constants in \cite{MT18}) would also be an algorithm for \gapweightedmaxcut and \eecgap \maxtwolintwo that bests all known algorithms, including \cite{ABS15}, \cite{CMM06b}, \cite{Wil05}, \cite{MT18}, and the one in \cref{sec:faster-alg-max-cut}, in running time, applicable approximation regime, or both.
\end{corollary}

Since our reduction extends across a large range of approximation parameters, in \cref{tab:multiple-classical-bounds} we give two different regimes for \minusesgap \maxtwolintwo, the maximum $\gamma$ given by our reduction from instances in those regimes, and the values $\varepsilon$ and $\varsigma$ such that a \minusesgap \maxtwolintwo is in the listed regime and our reduction produces a \gammacvp instance. Remember from \cref{sec:max-cut} that \minusesgap \maxtwolintwo is NP-hard for constant $\varepsilon$ and $\varsigma < c_h \varepsilon$ for some constant $c_h$ \cite{TSSW00} and NP-hard under the Unique Games Conjecture for constant $\varepsilon$ and $\varsigma \leqslant \sqrt{\varepsilon}$ \cite{KKMO07}.
%and does not have subexponential-time algorithm for $\varepsilon > 1 / \log n$ and $\varsigma \leqslant \sqrt{\varepsilon}$ after decades of research

\begin{table}[h]
    \centering
    \begin{tabular}{|c|c|c|c|c|}
    \hline
        \minusesgap \maxtwolintwo regime & Maximum $\gamma$ & $\varepsilon$ &  $\varsigma$ \\
        \hline
        NP-hard & $\sqrt[p]{c_h}$ & $O(1)$ & $c_h \varepsilon$ \\
        NP-hard (under UGC) & Any Constant & $\gamma^{-p/2}$ & $\sqrt{\varepsilon}$ \\
        %Currently exponential  & $ o(\sqrt[2p]{\log n})$ & $>1/\log n$ & $\sqrt{\varepsilon}$ \\
        \hline
    \end{tabular}
    \caption{By \cref{cor:equal-time}, a $2^{\Omega(n)}$-time lower bound on the regime on the first column implies a $2^{\Omega(n)}$-time lower-bound on \gammacvp with the $\gamma$ given by the second column. The third and fourth columns give a choice of values for $\varepsilon$ and $\varsigma$ such that \minusesgap \maxtwolintwo is in the listed hardness regime and our reduction reduces \minusesgap \maxtwolintwo instances to \gammacvp instances with the list value of $\gamma$.}
    \label{tab:multiple-classical-bounds}
\end{table}

%\subsection{A larger quantum advantage for $O(\sqrt{\log n})$-CVP also implies a larger quantum advantage for Max-Cut}
\paragraph{Quantum CVP hardness}
Since \cref{thm:weighted-main-theorem} applies to quantum algorithms, the analysis above also applies to quantum algorithms as well. An $O(\exp(n^{1-\delta}))$-time quantum algorithm with $\delta > 0$ for \gammabinarycvp implies an $O(\exp(n^{1-\delta}))$-time quantum algorithm with $\delta > 0$ for \gapweightedmaxcut. 

\begin{corollary}
    Any sub-exponential-time quantum algorithm for $O(1)$-CVP$_p$ (with $p\in [1, \infty)$ would also be an algorithm for \gapweightedmaxcut and \eecgap \maxtwolintwo that bests all known classical algorithms as well as the first quantum algorithm described in \cref{sec:faster-alg-max-cut}, in running time, applicable approximation regime, or both. 
    
    Futhermore, any $O(2^{n/3})$-time quantum algorithm for $c_1$-CVP$_p$ (with $p\in [1, \infty)$ and $c_1$ a constant dependent on $p$ and unspecified large constants in \cite{MT18}) would also be an algorithm for \gapweightedmaxcut and \eecgap \maxtwolintwo that bests all known classical algorithms as well as the first quantum algorithm described in \cref{sec:faster-alg-max-cut}, in running time, applicable approximation regime, or both. 
\end{corollary}

This provides some evidence against weak-exponential or sub-exponential quantum algorithms for \gammabinarycvp for $\gamma = O (1)$ since such an algorithm would be a huge breakthrough. This also provides some evidence against a super-polynomial quantum advantage for \gammacvp since \gapweightedmaxcut is not thought to be amenable to super-polynomial quantum advantage. 

\section{Barriers against Conditional Lower Bounds for Max-Cut}

\subsection{Barriers against SETH-based optimality of Max-Cut} \label{sec:SETH_barrier}

\subsubsection{Previous results}

A series of works culminating in in \cite{DV14} and \cite{Dru15} investigated the compression of Boolean Satisfiability problems, yielding results ruling out one-sided probabilistic Turing reductions and non-adaptive probabilistic Karp reductions from $k$-SAT to any decision problem with instance size $O(n^{k-\delta})$ for $\delta > 0$ unless the polynomial time hierarchy collapses. If a problem $P$ has a compression of fixed polynomial size then these results rule out a polynomial-time fine-grained reduction from $k_0$-SAT to $P$ for some $k_0>0$ since such a reduction, combined with the compression of $P$, would be a compression for $k_0$-SAT of sub-$O(n^{k_0})$ size. These results also provide substantial barriers against proving the SETH-hardness of $P$ since the main technique for invoking SETH requires proving the the existence of a fine-grained polynomial-time reduction of \ksat to $P$ for every $k$ and a compression is any reduction from one decision problem to another decision problem with a smaller input size. 

\cite{AK23} used this strategy to show barriers against proving the SETH-hardness of \cvpsubp for positive even $p$ by providing compression reductions of such \cvpsubp instances to instances of $\mathrm{CVP}_p^{\mathrm{IP}}$ or $\mathrm{CVP}_p^{\mathrm{mvp}}$ (CVP-type problems purpose-built for compression). Here it is worth emphasizing that for the purposes of invoking the theorems of \cite{DV14} and \cite{Dru15} it suffices to know that instances of \cvpsubp with positive even $p$ can be reduced to instances of some decision problems with input sizes that are fixed polynomials in $n$; neither the definition nor the difficulty of these problems matter to them. Below we reproduce two of the theorems from \cite{AK23} showing barriers against the SETH-hardness of \cvptwo:

\begin{theorem}
  [{\cite[Theorem 7.1]{AK23}}]
  \label{thm:AK23-7.1}
  For any constants $a, a_1 > 0$, there exists a constant $k_0$ such that for any $k > k_0$ there does not exist a polynomial time probabilistic reduction with no false negatives from $k$-SAT on $n$ variables to $\mathrm{CVP}_2$ on rank $n^{a_1}$ lattices that makes at most $n^a$ calls to the $\mathrm{CVP}_2$ oracle unless $\mathrm{coNP} \subseteq \mathrm{NP} / \mathrm{Poly}$. %%"makes" instead of "make" because the subject is singular. "there does not exist a polynomial time probabilistic reduction ... that makes at most $n^a$ calls ..."
\end{theorem}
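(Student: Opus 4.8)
The plan is to argue by contradiction and reduce to the incompressibility of \ksat proved by Dell and van Melkebeek \cite{DV14} (building on Drucker \cite{Dru15}): there is no probabilistic oracle communication protocol for \ksat on $n$ variables with no false negatives and total communication $O(n^{k-\delta})$ for some $\delta>0$, unless $\mathrm{coNP}\subseteq\mathrm{NP}/\mathrm{Poly}$. The idea is to compose a hypothetical reduction with a compression of \cvptwo whose output size depends only on the lattice \emph{rank}, not on its ambient dimension or bit-length; such a compression is available precisely because $p=2$ is even.

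First I would invoke the compression of \cvptwo from the earlier sections of \cite{AK23}. Its starting point is arithmetization: $\|B\bm y-\bm t\|_2^2=\sum_j\big(\sum_i B_{ji}\bm y_i-\bm t_j\big)^2$ is a quadratic polynomial $Q$ in the $r$ coordinate variables, so the distance decision depends on $B$ and $\bm t$ only through the $O(r^2)$ coefficients of $Q$. The nontrivial work — which I would take as a black box — is controlling the bit-lengths of those coefficients; \cite{AK23} do this by passing to an auxiliary problem ($\mathrm{CVP}_2^{\mathrm{IP}}$, or $\mathrm{CVP}_2^{\mathrm{mvp}}$) whose instances have size $\mathrm{poly}(r)$, and by exhibiting a polynomial-time reduction $C$ from rank-$r$ \cvptwo to it.

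Now suppose, toward a contradiction, that for some $a,a_1>0$ and arbitrarily large $k$ there is a polynomial-time probabilistic Turing reduction $R$ with no false negatives from \ksat on $n$ variables to \cvptwo on rank-$n^{a_1}$ lattices making at most $n^a$ oracle calls. I would build an oracle communication protocol for \ksat: Alice holds $\phi$ and runs $R$; each time $R$ queries the oracle with a \cvptwo instance $I$ of rank $\le n^{a_1}$, Alice computes the compressed instance $C(I)$ of size $\mathrm{poly}(n^{a_1})$ and sends it to Bob, who returns the correct answer to this $\mathrm{CVP}_2^{\mathrm{IP}}$ query; Alice then continues $R$ adaptively. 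Since $R$ makes at most $n^a$ queries, the total communication is $n^a\cdot\mathrm{poly}(n^{a_1})+n^a=O(n^{a'})$ for a constant $a'$ depending only on $a$ and $a_1$ — crucially, \emph{not} on $k$, because $C(I)$ depends only on the rank of $I$ (and the compression blow-up is a fixed polynomial). The protocol has no false negatives (on satisfiable $\phi$ the correct oracle answers make $R$, hence Alice, accept with probability $1$) and bounded error otherwise, so it is of the form forbidden by \cite{DV14}. Taking $k_0:=a'+1$, for every $k>k_0$ we get $a'<k$, so the protocol has cost $O(n^{k-\delta})$ with $\delta:=k-a'>0$; by \cite{DV14} this forces $\mathrm{coNP}\subseteq\mathrm{NP}/\mathrm{Poly}$.

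I expect the main obstacle to be the compression of \cvptwo itself: it is exactly where the evenness of $p$ is used, and without the polynomial-identity structure a rank-$n^{a_1}$ \cvptwo instance need not be compressible below its full bit-length — indeed, for odd $p$ there do exist reductions from \ksat to $\mathrm{CVP}_p$ of polynomial rank \cite{BGS17, ABGS21}. Two further points need care but are routine within the \cite{DV14, Dru15} machinery: the adaptivity of $R$ only makes the protocol multi-round (with the same total-communication bound), and the one-sided-error (``no false negatives'') hypothesis is precisely what yields the $\mathrm{coNP}\subseteq\mathrm{NP}/\mathrm{Poly}$ conclusion for Turing (adaptive) reductions — the two-sided analogue would instead require the reduction to be non-adaptive.
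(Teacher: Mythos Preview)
Your proposal is correct and matches the approach the paper attributes to \cite{AK23}: the paper does not give its own proof of this statement but simply cites it, noting in one sentence that it ``results from combining their compression of $\mathrm{CVP}_2$ with a result from \cite{DV14}''. Your sketch unpacks exactly that combination --- the rank-polynomial compression via arithmetization of $\|B\bm y-\bm t\|_2^2$ followed by the Dell--van~Melkebeek oracle-communication lower bound --- with the correct choice of $k_0$ depending only on $a,a_1$.
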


This theorem results from combining their compression of $\mathrm{CVP}_2$ with a result from \cite{DV14}; the next one results from combining the same compression with a result from \cite{Dru15}. Note that the existence of non-uniform, statistical zero-knowledge proofs for all languages in $\mathrm{NP}$ is a stronger condition than $\mathrm{NP} \subseteq \mathrm{NP}/\mathrm{Poly} \cap \mathrm{coNP}/\mathrm{Poly}$.

\begin{theorem}
  [{\cite[Theorem 7.7]{AK23}}]
  \label{thm:AK23-7.7}
  For any constants $a, a_1 > 0$, there exists a constant $k_0$ such that for any $k > k_0$ there does not exist a (non-adaptive) randomized polynomial time reduction from $k$-SAT on $n$ variables to $\mathrm{CVP}_2$ on rank $n^{a_1}$ lattices with constant error bound which makes at most $n^a$ calls to the $\mathrm{CVP}_2$ oracle unless there are non-uniform, statistical zero-knowledge proofs for all languages in $\mathrm{NP}$. 
\end{theorem}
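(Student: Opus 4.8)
The plan is to obtain this as a corollary of an instance compression for $\mathrm{CVP}_2$ together with the probabilistic instance-compression lower bound of \cite{Dru15}, mirroring the way \cref{thm:AK23-7.1} follows from the same compression together with \cite{DV14}.

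The first ingredient is the $\mathrm{CVP}_2$ compression of \cite{AK23}: a deterministic polynomial-time Karp reduction sending any $\mathrm{CVP}_2$ instance on a rank-$N$ lattice to an equivalent instance of a fixed decision problem (the purpose-built $\mathrm{CVP}_2^{\mathrm{IP}}$ or $\mathrm{CVP}_2^{\mathrm{mvp}}$) whose bit-length is at most $N^{c}$ for an absolute constant $c$; the key point is that this output size depends only on the rank $N$ and not on the bit-length of the original instance. Only the existence of such a polynomial size bound for \emph{some} decision problem will be used.

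Next I would argue by contradiction. Suppose that for some $a, a_1 > 0$ and arbitrarily large $k$ there is a non-adaptive randomized polynomial-time reduction $R$ from $k$-SAT on $n$ variables to $\mathrm{CVP}_2$ on rank-$n^{a_1}$ lattices, with constant two-sided error, making at most $n^a$ oracle calls. Compose $R$ with the $\mathrm{CVP}_2$ compression applied independently to each of its queries. Since the compression is a deterministic, answer-preserving Karp reduction, the composition stays non-adaptive, keeps the same constant two-sided error, and turns each query into an instance of the target decision problem of bit-length at most $(n^{a_1})^{c} = n^{c a_1}$; hence it produces at most $n^a$ such instances, of total bit-length at most $n^{a + c a_1}$. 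Setting $k_0 := a + c a_1 + 1$, for every integer $k > k_0$ this is precisely a probabilistic, non-adaptive, constant-error instance compression of $k$-SAT on $n$ variables into a decision problem, of size $O(n^{k - \delta})$ with $\delta := k - (a + c a_1) > 1$.

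Finally I would invoke Drucker's theorem \cite{Dru15}: for $k > k_0$ no such instance compression of $k$-SAT can exist unless every language in $\mathrm{NP}$ has non-uniform statistical zero-knowledge proofs, which is exactly the claimed dichotomy. The routine parts of this plan are the composition and the parameter book-keeping --- verifying that non-adaptivity and the error bound are preserved, and that the exponent $a + c a_1$ is eventually beaten by $k$. The genuinely hard ingredient is the $\mathrm{CVP}_2$ compression itself --- designing the $\mathrm{CVP}_2^{\mathrm{IP}}$/$\mathrm{CVP}_2^{\mathrm{mvp}}$ machinery that squeezes a rank-$N$ lattice instance down to $\mathrm{poly}(N)$ bits while preserving the answer --- which is the technical core of \cite{AK23}, and which I would treat as a black box here.
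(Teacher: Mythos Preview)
The paper does not prove this theorem; it is quoted verbatim from \cite{AK23} and only described in one sentence as ``combining the same compression with a result from \cite{Dru15}.'' Your proposal matches that description and is essentially the argument \cite{AK23} uses; you can also see the same skeleton spelled out in full in the paper's proof of the quantum analogue, \cref{thm:quantum-cvp2-incompressible}.

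One small point worth tightening: you write ``invoke Drucker's theorem \cite{Dru15}: for $k > k_0$ no such instance compression of $k$-SAT can exist\ldots,'' but Drucker's result as actually stated (see \cref{thm:Dru15-6.19} in the paper) is an incompressibility statement for \emph{OR-compression} of a language, not directly for $k$-SAT instance compression. The bridge is the reduction of \cite{DV14} (\cref{lemma:DVM14-or3sat-to-ksat}) from $\mathrm{OR}(\text{3-SAT})$ to $k$-SAT: your compressed $k$-SAT instance composed with that reduction gives an OR-compression of 3-SAT with the right size parameters, and only then does Drucker's theorem apply to yield the $\mathrm{SZK}$ consequence. The paper's quantum proof makes this intermediate step explicit, and the classical argument in \cite{AK23} goes through the same detour. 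Otherwise your decomposition, parameter bookkeeping, and identification of the $\mathrm{CVP}_2$ compression as the only nontrivial black box are all on target.
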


\subsubsection{Our corollaries}

%The Unique Games Conjecture states that there is a polynomial time reduction from 3SAT to \gapuniquelabel. If it were true, this would imply that there is no $2^{O( n^{\delta}) }$-time algorithm with $\delta < 1$ for Unique Games, in particular for \gapunweightedmaxcut with $c = 1/2$ if one assumes ETH. If we want to argue that there is no $2^{\delta n}$-time algorithm with some specific $\delta > 0$ for Unique Games we would need to assume a stronger conjecture like SETH instead. 

The main technique for applying the lower-bound from SETH or its variants to a problem is to provide a fine-grained reduction from \ksat to that problem. An interesting corollary of our reduction is that most fine-grained reductions from \ksat to \gapweightedmaxcut are unattainable as they would violate \cref{thm:AK23-7.1} or \cref{thm:AK23-7.7}, as illustrated in \cref{fig:fg-ksat-reduction-map}. 

\begin{figure}[!ht]
  \centering
  \includegraphics[width=0.8\textwidth]{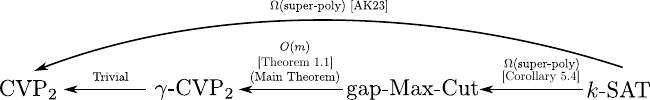}
  \caption{Map of reductions from $k$-SAT to CVP$_2$ with edges labeled by reduction sizes.}
  \label{fig:fg-ksat-reduction-map}
\end{figure}

Our reduction from \esgap \weightedmaxcut to \gammacvptwo is also a polynomial-time, polynomial-sized, and deterministic reduction from exact \weightedmaxcut to \cvptwo since \gammacvptwo is a strictly easier variant of \cvptwo; so our reduction also has the effect of applying the no-go conditions from the theorems of \cite{AK23} to \weightedmaxcut, yielding two results. The first is against reductions with one-sided error from $k$-$\mathrm{SAT}$ to \weightedmaxcut:

\begin{corollary}
  \label{thm:classical-mc-no-turing}
  For any constants $a, a_1 > 0$, there exists a constant $k_0$ such that for any $k > k_0$ there does not exist a polynomial time probabilistic reduction with no false negatives from \ksat on $n$ variables to \weightedmaxcut on $n^{a_1}$ vertices that makes at most $n^a$ calls to the \weightedmaxcut oracle unless $\mathrm{coNP} \subseteq \mathrm{NP} / \mathrm{Poly}$. 
\end{corollary}
\begin{proof}
  Suppose that for some constants $a,a_1>0$ there exists a reduction for any $k \in \mathbb{N}$ from \ksat on $n$ variables to \weightedmaxcut on $n^{a_1}$ vertices that makes at most $n^a$ calls to the \weightedmaxcut oracle. Then by \cref{thm:weighted-main-theorem} there is a polynomial time probabilistic reduction with no false negatives from \ksat on $n$ variables to \gammacvptwo on rank $O(n^{a_1})$ lattices that makes at most $n^a$ calls to the \gammacvptwo oracle. This implies that $\mathrm{coNP} \subseteq \mathrm{NP} / \mathrm{Poly}$ by \cref{thm:AK23-7.1}. %% "makes" instead of "make" for the same reason as AK23 - 7.1. 
\end{proof}

With an analogous argument using \cref{thm:AK23-7.7} instead of \cref{thm:AK23-7.1} we can make the following statement against Karp reductions (with two-sided error):
\begin{corollary}
  \label{thm:no-randomized-reductions-from-seth}
  For any constants $a , a_1 > 0$, there exists a constant $k_0$ such that for any $k > k_0$ there does not exist a (non-adaptive) randomized polynomial time reduction from \ksat on $n$ variables to \weightedmaxcut on $n^{a_1}$ vertices with constant error bound which makes at most $n^a$ calls to the \weightedmaxcut oracle unless there are non-uniform, statistical zero-knowledge proofs for all languages in $\mathrm{NP}$. 
\end{corollary}
\begin{proof}
    Follows from \cref{thm:weighted-main-theorem} and \cref{thm:AK23-7.7} in the same way as the previous theorem. 
\end{proof}

These theorems show a substantial barrier against proving the SETH-hardness of \weightedmaxcut since, assuming that the polynomial-time hierarchy does not collapse, they suggest that there are no fine-grained reductions from \ksat to \weightedmaxcut. 
In other words, these imply that one cannot hope to show $2^{\delta n}$-time lower bound against \weightedmaxcut for {\bf any constant} $\delta > 0$ assuming SETH. Note that, for the same reasons as with \cvptwo and \gammacvptwo, these theorems apply to all variations of unweighted/weighted and gap/exact Max-Cut since their reductions to (exact, weighted) \weightedmaxcut are trivial. 

\subsection{Barriers against QSETH-based optimality of Max-Cut}
\label{sec:qseth_barrier}

%% for Agarwal and Kumar, if you use Dru 6.19 instead of dru 5.1 in AK 7.7, you get 
%% xx quantum result. 

We can give a quantum version of \cref{thm:no-randomized-reductions-from-seth} by quantizing \cref{thm:AK23-7.7} using the compression in \cite{AK23} for $\mathrm{CVP}_2$ together with the quantum incompressibility theorem in \cite{Dru15}. We assume that the reader is familiar with quantum computing. 

\subsubsection{Preliminaries}

We reproduce the compression for \cvptwo below:  
\begin{theorem}
  [{\cite[Theorem 6.2]{AK23}}]
  \label{thm:AK23-6.2}
  For any positive integers $m,n$, given a $\mathrm{CVP}_2(B,\bm{t},r)$ instance on bithlength $\eta$ where $B \in \mathbb{Z}^{m \times n}$ is a basis of a lattice $\mathcal{L}$, target $\bm{t} \in \mathbb{Z}^m$ and $r>0$, there exists a $poly(n, m, \eta)$ time algorithm that reduces it to a $2^{n^2}$-$\mathrm{CVP}^{\mathrm{IP}}$ instance of size $O(n^8)$ bits. 
\end{theorem}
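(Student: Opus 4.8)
This statement is quoted verbatim as \cite[Theorem~6.2]{AK23}, so the cleanest route is to invoke it as a black box; the only point worth checking for our application is that the compression is a \emph{deterministic classical} polynomial-time procedure, hence in particular a quantum polynomial-time procedure that can be composed with the quantum incompressibility theorem of \cite{Dru15}. For completeness, here is the plan I would follow to prove it from scratch.

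The first step is to eliminate the ambient dimension. Because $p=2$ is even, $\|B\bm{y}-\bm{t}\|_2^2 = \bm{y}^{\top} G \bm{y} - 2\bm{c}^{\top}\bm{y} + \tau$ is an honest integer quadratic in $\bm{y}$, with $G \coloneqq B^{\top}B \in \mathbb{Z}^{n\times n}$ positive definite, $\bm{c} \coloneqq B^{\top}\bm{t} \in \mathbb{Z}^n$, and $\tau \coloneqq \|\bm{t}\|_2^2 \in \mathbb{Z}$; these are computable in $\mathrm{poly}(n,m,\eta)$ time and contain only $O(n^2)$ integers regardless of $m$. Deciding the $\mathrm{CVP}_2$ instance is now exactly deciding whether $\min_{\bm{y}\in\mathbb{Z}^n}(\bm{y}^{\top}G\bm{y} - 2\bm{c}^{\top}\bm{y} + \tau) \le r^2$. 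The second step is to shrink the remaining data to $\mathrm{poly}(n)$ bits. After an LLL reduction of $B$ and a Babai-type shift of $\bm{t}$ into a fundamental parallelepiped (neither changes $\mathcal{L}$ nor $\mathrm{dist}_2(\mathcal{L},\bm{t})$), one may restrict attention to minimizers — and, once the $2^{n^2}$ slack is granted, all near-minimizers — whose coordinates have bitlength $\mathrm{poly}(n)$, so the search lives in a box $\{\|\bm{y}\|_{\infty}\le 2^{\mathrm{poly}(n)}\}$. On that box one then re-encodes the coefficients $(G,\bm{c},\tau,r^2)$ by bounded-entry data agreeing with the original, on every point of the box, up to the allotted multiplicative factor $2^{n^2}$; packaging the box constraint with the re-encoded quadratic form is exactly a $2^{n^2}$-$\mathrm{CVP}^{\mathrm{IP}}$ instance (the bounded-entry CVP variant of \cite{AK23}), and a bit count of the re-encoded matrix, vector, scalars, and box gives total size $O(n^8)$.

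The main obstacle is the re-encoding in the second step: a lattice can be ill-conditioned in a way that depends on $\eta$ (for instance $\mathbb{Z}\times 2^{\eta}\mathbb{Z}$), so naively rounding the Gram matrix to $\mathrm{poly}(n)$ bits destroys the answer, and LLL by itself does not bound the spread of the Gram--Schmidt norms. Making the compressed size $O(n^8)$ — \emph{independent of $\eta$ and $m$} — is precisely where the real work is: one must split the instance along the ill-conditioned directions, where the $2^{n^2}$ relaxation collapses the decision to a trivial one, keep only the well-conditioned $\mathrm{poly}(n)$-bit core, and handle residual slack by adjusting $r^2$ (alternatively, one re-expresses feasibility modulo a carefully chosen $\mathrm{poly}(n)$-bit modulus, using the gap created by the relaxation). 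The Gram-matrix reformulation, the LLL/Babai bookkeeping, and the final bit count are all routine.
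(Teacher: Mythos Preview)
Your instinct is correct: the paper does not prove this statement at all but simply quotes it as \cite[Theorem~6.2]{AK23} and uses it as a black box in the proof of \cref{thm:quantum-cvp2-incompressible}. Your observation that the compression is a deterministic classical polynomial-time procedure, and hence composes with the quantum machinery of \cite{Dru15}, is exactly the only thing the paper needs from it; the from-scratch sketch you give is extra and not required here.
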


Now we reproduce the quantum incompressibility theorem and its corresponding definition of quantum compressibility. 

\begin{definition}
    [{\cite[Definition 6.2]{Dru15}}]
    \footnote{We have presented this definition with what we perceive to be typographical errors corrected.}
    Let $L$ be a language, and let $f:\{0,1\}^* \rightarrow \{0,1\}$ be a Boolean function. Let $t_1(n),t_2(n):\mathbb{N}^+ \rightarrow \mathbb{N}^+$ and $\xi(n):\mathbb{N}^+ \rightarrow [0,1]$ be given, with $t_1,t_2$ polynomially bounded. A \emph{quantum $f$-compression reduction for $L$} with parameters $t_1(n),t_2(n),\xi(n)$ is a mapping $R(x^1, \dotsc, x^{t_1(n)})$ outputting a mixed state $\rho$, along with a family of (not necessarily efficiently preformable) binary quantum measurements $\{\mathcal{M}_n\}_{n>0}$ on $t_2(n)$-qubit states. We require the following properties: for all $(x^1, \dotsc, x^{t_1(n)}) \in \{0,1\}^{t_1(n) \times n}$,
    \begin{enumerate}
        \item the state $\rho = R(x^1, \dotsc, x^{t_1(n)})$ is on $t_2(n)$ qubits;
        \item we have $\Pr[\mathcal{M}_m(\rho)=f \left( L(x^1), \dotsc, L(x^{t_1(n)})\right)] \geq 1 - \xi(n)$.
    \end{enumerate}
    If $R$ as above is computable in quantum polynomial time, we say that $L$ is {\normalfont QPT-$f$-compressible} with parameters $(t_1(n),t_2(n),\xi(n))$. 
\end{definition}

\begin{theorem}[{\cite[Theorem 6.19]{Dru15}}]
   \footnote{We have presented this theorem with what we perceive to be typographical errors corrected.}
  \label{thm:Dru15-6.19}
  Let $L$ be any language. Suppose there is a QPT-$\mathrm{OR}$-compression reduction $R(x^1,...,x^{t_1(n)}) : \{0,1\}^{t_1(n)} \times n \rightarrow \mathrm{MS}_{t_2(n)}$ for $L$ with polynomially bounded parameters $t_1(n), t_2(n) : \mathbb{N}^+ \rightarrow \mathbb{N}^+$, and error bound $\xi(n) < 0.5$. Let $\hat{\delta} := \min \{ \sqrt{\frac{\ln 2}{2} \cdot \frac{t_2(n)}{t_1(n)}}, 1-2^{-\frac{t_2(n)}{t_1(n)}-2} \}$. 
  
  1. If for some $c > 0$ we have $(1-2 \xi(n))-\hat{\delta} \geq \frac{1}{n^c}$, then there is a nonuniform (classical, deterministic) polynomial-time many-to-one reduction from $L$ to a problem in $\mathrm{pr}$-$\mathrm{QIP[2]}$. 
  
  2. If we have the stronger bound $(1-2 \xi(n))^2-\hat{\delta} \geq \frac{1}{n^c}$, then L has a nonuniform (classical, deterministic) polynomial-time many-to-one reduction to a problem in $\mathrm{pr}$-$\mathrm{QSZK}$.
\end{theorem}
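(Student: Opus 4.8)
The plan is to give the promised nonuniform, deterministic, polynomial-time many-to-one reduction from $L$ to the Quantum State Distinguishability problem ($\mathrm{QSD}$), i.e.\ the promise problem of distinguishing two efficiently preparable mixed states that are either trace-distance-far or trace-distance-close: with a $1/\mathrm{poly}$ promise gap $\mathrm{QSD}$ lies in $\mathrm{pr}$-$\mathrm{QIP}[2]$ (via the one-round ``guess which of the two states you were sent'' protocol, whose completeness/soundness gap can be amplified inside $\mathrm{QIP}[2]$), and with a quadratically larger gap it is $\mathrm{QSZK}$-complete via Watrous's quantum polarization lemma. The reduction maps an input $x^*$ of length $n$ to a pair of polynomial-size circuits preparing states $\rho_0,\rho_1$; the nonuniform advice, depending only on $n$, encodes a \emph{disguising distribution} $\mathcal{D}=\mathcal{D}_n$ supported entirely on no-instances of $L$. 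Writing $\sigma_{j,y}:=\mathbb{E}_{Y\sim\mathcal{D}^{t_1(n)}}\!\big[R(Y)\,\big|\,Y_j=y\big]$ for the output state when coordinate $j$ is pinned to $y$ and the rest are drawn from $\mathcal{D}$, set $\rho_0:=\mathbb{E}_{Y\sim\mathcal{D}^{t_1(n)}}[R(Y)]=\mathbb{E}_{j}\,\mathbb{E}_{y\sim\mathcal{D}}\,\sigma_{j,y}$ and $\rho_1:=\mathbb{E}_{j\sim[t_1(n)]}\,\sigma_{j,x^*}$; all expectations are internalized into ancilla registers so that the circuits are genuinely deterministic descriptions of these mixed states.

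The two parts of the statement correspond to opposite behaviours of $R$ on these states. When $x^*\in L$, every batch appearing in $\rho_1$ contains the yes-instance $x^*$ and otherwise only no-instances drawn from $\mathcal{D}$, so its $\mathrm{OR}$ value is $1$, while every batch appearing in $\rho_0$ has $\mathrm{OR}$ value $0$; the fixed measurement $\mathcal{M}_{t_2(n)}$ therefore accepts $\rho_1$ with probability $\ge 1-\xi(n)$ and $\rho_0$ with probability $\le\xi(n)$, whence $\|\rho_0-\rho_1\|_{\mathrm{tr}}\ge 1-2\xi(n)$. When $x^*\notin L$, the string $x^*$ is itself a no-instance, so pinning it into a random coordinate is ``almost the same'' as drawing that coordinate from $\mathcal{D}$, and the stability lemma below yields $\|\rho_0-\rho_1\|_{\mathrm{tr}}\le\hat\delta$. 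Thus $\mathrm{QSD}(\rho_0,\rho_1)$ is a yes-instance exactly when $x^*\in L$, with promise gap $(1-2\xi(n))-\hat\delta\ge 1/n^c$ — giving the reduction into $\mathrm{pr}$-$\mathrm{QIP}[2]$ in part~1 — and under the stronger hypothesis $(1-2\xi(n))^2-\hat\delta\ge 1/n^c$ one first applies the quantum polarization lemma to reach a negligible-versus-overwhelming gap, i.e.\ a $\mathrm{QSZK}$-complete instance, giving part~2.

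The engine is a \emph{quantum distributional stability lemma}: since $R$ outputs only $t_2(n)$ qubits, its output can depend only weakly on any single input coordinate. With $Q:=R(Y)$ for $Y=(Y_1,\dots,Y_{t_1})\sim\mathcal{D}^{t_1}$, the Holevo bound gives $I(Y_1\cdots Y_{t_1}:Q)\le t_2$, and because the $Y_j$ are independent the chain rule gives $\sum_j I(Y_j:Q)\le I(Y_1\cdots Y_{t_1}:Q)\le t_2$, hence $\mathbb{E}_{j}\,I(Y_j:Q)\le t_2/t_1$; quantum Pinsker (with trace distance normalized as $\tfrac12\|\cdot\|_1$) then converts this to $\mathbb{E}_{j}\,\mathbb{E}_{y\sim\mathcal{D}}\,\|\sigma_{j,y}-\rho_0\|_{\mathrm{tr}}\le\sqrt{\tfrac{\ln 2}{2}\cdot\tfrac{t_2}{t_1}}$, the first branch of $\hat\delta$, while a cruder direct argument covers the regime where $t_2/t_1$ is not small and yields the $1-2^{-t_2/t_1-2}$ branch. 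Since $\rho_1=\mathbb{E}_j\sigma_{j,x^*}$ and $\rho_0=\mathbb{E}_j\mathbb{E}_{y}\sigma_{j,y}$, the triangle inequality over $j$ turns any per-$x^*$ version of this bound into the $\|\rho_0-\rho_1\|_{\mathrm{tr}}\le\hat\delta$ needed in the soundness case.

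I expect the difficulty to lie in two places. First, the stability lemma must be carried out with quantum mutual information rather than Shannon information: one must be careful with conditional mutual information of classical-quantum states, with the exact constant in quantum Pinsker that produces the $\tfrac{\ln 2}{2}$ factor, and with splicing the two branches of $\hat\delta$. Second, and more delicate, the bound just derived is averaged over $y\sim\mathcal{D}$, whereas soundness needs it for the \emph{specific} embedded string $x^*$; the disguising distribution $\mathcal{D}_n$ must therefore be chosen — by the probabilistic method, as the empirical distribution of polynomially many independent random no-instances of length $n$ — so that, after a union bound over all candidate no-instances $x^*$, the per-$x^*$ quantity $\mathbb{E}_j\|\sigma_{j,x^*}-\rho_0\|_{\mathrm{tr}}$ is small. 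Keeping the support of $\mathcal{D}_n$ polynomially small is simultaneously what makes $\rho_0$ and $\rho_1$ efficiently preparable, so that $\mathrm{QSD}$ genuinely applies; verifying that such a small disguising distribution exists is, I think, the real crux of the proof.
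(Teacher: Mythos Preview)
This theorem is quoted from \cite{Dru15} and is not proved in the present paper, so there is no in-paper proof to compare your proposal against. For what it is worth, your sketch accurately reconstructs the architecture of Drucker's original argument---the reduction to $\mathrm{QSD}$ via a nonuniform disguising distribution supported on no-instances, the Holevo--Pinsker distributional-stability bound producing~$\hat\delta$, and the $\mathrm{QIP}[2]$/$\mathrm{QSZK}$ split via quantum polarization---and the two subtleties you flag (upgrading the average-over-$y$ bound to a per-$x^*$ bound, and keeping the support of $\mathcal{D}_n$ polynomial so that $\rho_0,\rho_1$ are efficiently preparable) are indeed the places where the real work lies in \cite{Dru15}.
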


We also need to use the reduction in \cite{DV14} from \orlang{\threesat} to \ksat. 

\begin{lemma}
    [\cite{DV14}]
    \label{lemma:DVM14-or3sat-to-ksat}
    For any integer $k \geq 3$, there is a polynomial-time reduction from \orlang{\threesat} to \ksat that maps $t$ tuples of 3-CNF formulas, each of bitlength $n$, to a $k$-CNF formula on $O\big(n \cdot \max\big(n,t^{1/k+o(1)}\big)\big)$ variables. 
\end{lemma}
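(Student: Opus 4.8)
The plan is to follow the cheap OR-composition of \cite{DV14}. First I would normalize the input: rename the variables of every $\phi_i$ into one shared pool $x_1,\dots,x_n$ (legitimate since a $3$-CNF of bitlength $n$ has at most $n$ variables and $O(n)$ clauses, and renaming preserves whether an individual $\phi_i$ is satisfiable), and pad the list with copies of a fixed unsatisfiable $3$-CNF so that the number of formulas becomes exactly $s^d$, for a dimension parameter $d$ (morally $d\approx k$) and side length $s=\lceil t^{1/d}\rceil$; this inflates $t$ by a $1+o(1)$ factor. Then I would build a single $k$-CNF formula $\Phi$ over the $n$ shared \emph{assignment} variables together with a small block of \emph{selector} variables that addresses one formula: index each (padded) formula by a tuple $(a_1,\dots,a_d)\in[s]^d$, introduce variables $y_{\ell,v}$ for $\ell\in[d]$ and $v\in[s]$ (hence $ds=O(d\,t^{1/d})$ selectors in all); for each clause $C=\ell_1\vee\ell_2\vee\ell_3$ of the formula at address $(a_1,\dots,a_d)$ add the guarded clause $\neg y_{1,a_1}\vee\cdots\vee\neg y_{d,a_d}\vee\ell_1\vee\ell_2\vee\ell_3$; and for each coordinate $\ell$ add a clause $\bigvee_{v\in[s]}y_{\ell,v}$ forcing at least one value per coordinate. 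The ``at least one'' clauses are wide, so I would width-reduce them by the standard chaining trick at a cost of $O(s)$ fresh variables per coordinate; with $d+3\le k$ the whole formula is a $k$-CNF.

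Correctness reduces to two short implications. If some $\phi_i$ is satisfied by $\alpha$, set $x=\alpha$ and switch on exactly the selectors naming $i$: every ``at least one per coordinate'' clause holds, and every guarded clause either contains a satisfied guard literal (its address differs from $i$ in some coordinate) or collapses to a clause of $\phi_i$ (its address equals $i$), which $\alpha$ satisfies. Conversely, any satisfying assignment of $\Phi$ turns on at least one selector in each coordinate; choosing one such value per coordinate names an address $i^\star$ all of whose guard literals are falsified, so the $x$-part must satisfy $\phi_{i^\star}$, and since every padding formula is unsatisfiable, $i^\star$ is a genuine input formula, so the \orlang{\threesat} instance is a YES instance. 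The construction reads each input clause a constant number of times and emits only $O(ds)$ auxiliary clauses, so it runs in polynomial (indeed near-linear) time, and $\Phi$ has $n+O(d\,t^{1/d})$ variables.

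The step I expect to be the real obstacle is pushing the exponent from the $1/d$ that this basic grid delivers down to the claimed $1/k+o(1)$. A guarded clause already commits three of its $k$ slots to the original literals, so the basic construction only allows $d\le k-3$, giving a selector block of size $t^{1/(k-3)}$ (and for $k=3$ it degenerates entirely). For the applications in this paper, where $t$ is polynomially bounded in $n$ and $k$ is taken large, even the cruder $t^{1/(k-3)}$ bound is dominated by the stated $n\cdot\max\big(n,t^{1/k+o(1)}\big)$, so the distinction is partly cosmetic — but it is exactly here, and at the smallest admissible widths, that the analysis is tightest. Matching the bound as stated requires the sharper encoding of \cite{DV14}, where the ``does the address match so far'' information is propagated through auxiliary structure shared across all clauses and all formulas, so that no per-clause or per-formula variable is ever introduced and the only new variables remain the $O(d\,t^{1/d})$ selectors and their width-reduction gadgets; checking that this sharing stays consistent with $\Phi$ being a $k$-CNF, and then optimizing $d$ against $s=t^{1/d}$, is the crux, with everything else being bookkeeping.
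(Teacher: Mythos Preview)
Your approach is genuinely different from the paper's. The paper gives no direct construction: its entire proof is the sentence ``Combine \cite[Lemma~3]{DV14} and \cite[Lemma~6]{DV14} with the trivial reduction of $k$-Clique to $k$-Vertex Cover.'' That is, \cite{DV14} routes the OR-composition through a graph problem --- reduce each $3$-CNF to a Clique/Vertex Cover instance, pack the $t$ resulting graph instances on a grid into a single Clique instance (this packing is where the exponent $1/k$ actually comes from), and only then convert back to $k$-SAT. Your selector/guarded-clause construction is the natural direct route, and your correctness argument for it is fine.

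The gap you flag, however, is real, and your suggested fix points in the wrong direction. Your construction yields exponent $1/(k-3)$, not $1/k+o(1)$, and for $k=3$ it degenerates. Your proposal to sharpen it --- propagating ``does the address match so far'' through shared auxiliary variables while staying within width $k$ and avoiding per-formula or per-clause variables --- is not how \cite{DV14} achieves the sharp exponent, and it is unclear it can be made to work: any such propagation seems to require either a width-$d$ clause somewhere or auxiliary variables indexed by address prefixes. The whole point of detouring through Clique is that the grid packing there spends none of the $k$ ``slots'' on original-clause literals, so the grid dimension can be pushed essentially to $k$ rather than $k-3$. You are right that the weaker exponent still suffices for the downstream use in this paper (it merely shifts the threshold $k_0$ by an additive constant), but as a proof of the lemma \emph{as stated} the argument is incomplete.
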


\begin{proof}
    Combine \cite[Lemma 3]{DV14} and \cite[Lemma 6]{DV14} with the trivial reduction of $k$-Clique to $k$-Vertex Cover. 
\end{proof}

\subsubsection{Our results}

Now we are ready to give the quantized version of \cref{thm:AK23-7.7} (\cite[Theorem 7.7]{AK23}), following a similar proof. 

\begin{theorem}
    \label{thm:quantum-cvp2-incompressible}
    For any constants $a, a_1 > 0$, there exists a constant $k_0$ such that for any $k > k_0$ there does not exist a (non-adaptive) quantum polynomial-time reduction from \ksat on $n$ variables to \cvptwo on rank $n^{a_1}$ lattices with constant error bound which makes at most $n^a$ calls to the \cvptwo oracle unless there are non-uniform, quantum statistical zero-knowledge proofs for all languages in $\mathrm{NP}$. 
\end{theorem}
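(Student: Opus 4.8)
The plan is to transcribe the proof of \cref{thm:AK23-7.7}, replacing the classical incompressibility machinery of \cite{DV14,Dru15} with the quantum incompressibility theorem \cref{thm:Dru15-6.19}. Concretely, suppose that for the fixed constants $a,a_1 > 0$ and for some $k$ larger than the threshold $k_0$ to be specified there is a non-adaptive quantum polynomial-time reduction $R_k$ from \ksat on $n$ variables to \cvptwo on rank-$n^{a_1}$ lattices with some constant error bound $\xi < 1/2$ and at most $n^a$ oracle calls; I will show that this implies non-uniform quantum statistical zero-knowledge proofs for all of $\mathrm{NP}$. The route is to build from $R_k$ a QPT-$\mathrm{OR}$-compression reduction for \threesat with polynomially bounded parameters $t_1(n), t_2(n)$ satisfying $t_2(n)/t_1(n) \to 0$, and then invoke part~2 of \cref{thm:Dru15-6.19}, which yields a non-uniform, deterministic, polynomial-time many-to-one reduction from \threesat---hence from every language in $\mathrm{NP}$---to a problem in $\mathrm{pr}\text{-}\mathrm{QSZK}$.

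To construct the compression, set $t_1(n) := n^d$ for a constant $d$ chosen below. Given $3$-CNF formulas $x^1, \dots, x^{t_1(n)}$, each of bitlength $n$, the compression reduction first applies \cref{lemma:DVM14-or3sat-to-ksat} to produce a single $k$-CNF $\Phi$ on $N = O\!\big(n \cdot \max(n, t_1(n)^{1/k + o(1)})\big)$ variables that is satisfiable if and only if some $x^i$ is. It then runs $R_k$ on $\Phi$; since $R_k$ is non-adaptive I put it in the normal form where it first produces (by internal measurements) a list of at most $N^a$ classical \cvptwo queries $q_1, \dots, q_{N^a}$ on rank-$N^{a_1}$ lattices together with a residual quantum state $\sigma$ on $\mathrm{poly}(N)$ qubits, and afterwards runs a fixed quantum post-processor on $\sigma$ and the $N^a$ answer bits. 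Applying \cref{thm:AK23-6.2} to each $q_i$ replaces it with a $\mathrm{CVP}^{\mathrm{IP}}$ instance $\tilde{q}_i$ of only $O(N^{8a_1})$ bits. The compression then outputs the mixed state $\rho$ built from the classical strings $\tilde{q}_1, \dots, \tilde{q}_{N^a}$ together with $\sigma$, and the (inefficient, non-uniform) measurement family $\{\mathcal{M}_n\}$ hard-codes $R_k$, solves each $\tilde{q}_i$ exactly to recover the true oracle bit $b_i$, runs the post-processor on $(\sigma, b_1, \dots, b_{N^a})$, and measures the output. By the correctness of $R_k$ and of \cref{thm:AK23-6.2}, this outputs $\mathrm{OR}\big(L(x^1), \dots, L(x^{t_1(n)})\big)$, with $L$ denoting \threesat, with probability at least $1 - \xi$; so this is a QPT-$\mathrm{OR}$-compression reduction for \threesat with constant error $\xi < 1/2$.

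Next I would fix the parameters. The output length is $t_2(n) = N^a \cdot O(N^{8a_1}) + \mathrm{poly}(N)$; choosing $k > d$ forces $N = O(n^2)$, so $t_2(n)$ is a fixed polynomial in $n$, and then taking $d$ larger than the degree of $t_2$ gives $t_2(n)/t_1(n) \to 0$ and hence $\hat{\delta} \to 0$ in the notation of \cref{thm:Dru15-6.19}. Since $\xi$ is a constant below $1/2$, the stronger bound $(1 - 2\xi(n))^2 - \hat{\delta} \geq n^{-c}$ holds with, say, $c = 1$ for all large $n$, so part~2 of \cref{thm:Dru15-6.19} applies and delivers the claimed non-uniform reduction of \threesat, and thus of $\mathrm{NP}$, into $\mathrm{pr}\text{-}\mathrm{QSZK}$; taking $k_0$ to be the threshold on $k$ forced by these constraints completes the argument.

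The hard part will be the parameter bookkeeping that keeps $t_2(n)$ a polynomial in $n$ of degree controlled independently of $k$. This is exactly the role of \cref{thm:AK23-6.2}: a raw \cvptwo query produced by $R_k$ may have bitlength an uncontrolled polynomial in $N$, whereas $\tilde{q}_i$ has length $O(N^{8a_1})$, with exponent depending only on $a_1$. One must also argue that the residual post-processing state $\sigma$, together with any random coins used to generate the queries, can be folded into $\rho$ without spoiling the $t_2/t_1 \to 0$ budget---equivalently, that the non-adaptive quantum reduction genuinely admits the ``generate-queries-then-post-process'' normal form used above with post-processing data of controlled size. This is the quantum analogue of the corresponding step for randomized non-adaptive reductions in the proof of \cref{thm:AK23-7.7}, and is where essentially all the care is needed; the remaining steps---checking that the composed map runs in quantum polynomial time, that $\{\mathcal{M}_n\}$ recovers the $\mathrm{OR}$ with the stated probability, and that the hypotheses of \cref{thm:Dru15-6.19} hold---are a routine transcription of that proof with \cref{thm:Dru15-6.19} substituted for its classical predecessor.
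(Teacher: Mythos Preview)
Your plan follows the same route as the paper: compose \cref{lemma:DVM14-or3sat-to-ksat}, the hypothesized $R_k$, and \cref{thm:AK23-6.2}, then feed the resulting OR-compression of \threesat into \cref{thm:Dru15-6.19}. The substantive divergence is that you put the residual post-processing state $\sigma$ into the compressed output $\rho$, whereas the paper outputs \emph{only} the compressed oracle queries and pushes all post-processing into the (inefficient) measurement $\mathcal{M}_n$.

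That divergence is the source of a real gap in your parameter bookkeeping. You need $k>d$ to force $N=O(n^{2})$, and then you want $d$ larger than the degree of $t_{2}(n)$; but your $t_{2}$ contains $|\sigma|$, which is $\mathrm{poly}(N)$ with exponent equal to the running-time degree of $R_{k}$. Nothing in the hypotheses bounds that degree by any function of $a,a_{1}$, or even of $k$, so $d$ ends up depending on $R_{k}$, $R_{k}$ depends on $k$, and $k$ must exceed $d$---a genuine circularity that prevents you from fixing $k_{0}$ as a function of $a,a_{1}$ alone, which is exactly what the theorem requires. The paper breaks this cycle by discarding $\sigma$: it sets $k_{0}=a+8a_{1}$ up front, so the output consists solely of the $N^{a}$ compressed queries of size $\widetilde{O}(N^{8a_{1}})$ each, i.e.\ $\widetilde{O}(N^{k_{0}})$ bits total, independent of $R_{k}$'s running time. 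For $k>k_{0}$ and $t$ a large enough polynomial in $n$ this gives $t_{2}=O(t^{\delta})$ with $\delta<1$, so $\hat{\delta}\to 0$ in \cref{thm:Dru15-6.19}. The measurement is then declared to be ``solve the compressed instances exactly, then run the post-processing portion of the reduction,'' with non-adaptivity invoked precisely so that the query list is fully determined before any oracle answers are needed---this is the step you correctly flagged as the crux, and the paper's resolution is to absorb it into $\mathcal{M}_{n}$ rather than into $\rho$.
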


\begin{proof}
    Let $c_2:=8 c_1$ and $k_0=c+c_2$. Assume that there exists a non-adaptive quantum polynomial time reduction $R$ from \ksat on $n$ variables to \cvptwo on rank $n^{c_1}$ lattices with constant error bound, and that the reduction makes at most $n^c$ calls to the \cvptwo oracle. By \cref{thm:AK23-6.2}, a \cvptwo instance with a rank $n^{c_1}$ lattice can be compressed to an instance of size $\widetilde{O}(n^{c_2})$. So we can achieve quantum instance compression of \ksat of size $\widetilde{O}(n^{c+8 c_1})=O(n^{k_0})$ by implementing the \cvptwo oracle in $R$ as a combination of the reduction in \cref{thm:AK23-6.2} and an oracle for \twontwocvpip. 
    
    By \cref{lemma:DVM14-or3sat-to-ksat}, we also get quantum instance compression of \orlang{\threesat} to $O(n \cdot \max\big(n,t^{1/k+o(1)}\big)$ bits. When $t$ is a sufficiently large polynomial in $n$, the compressed instance has bitlength $O(t^{\delta + o(1)}) = O(t \log t)$ for some $0<\delta<1$ since $k_0$ is a constant less than $k$. Now we can apply \cref{thm:Dru15-6.19} with $t_1=t$ and $t_2=O(t \log t)$ since our quantum instance compression of \orlang{\threesat} can also be converted into a QPT-$\mathrm{OR}$-compression of \threesat by returning the requests to the \twontwocvpip oracle and dropping the parts of the reduction that use the results from the oracle. This conversion is possible because the reduction is non-adaptive, and it works because any algorithm for \twontwocvpip (no matter how inefficient) combined with the parts of the reduction that use the results of the oracle calls is a binary quantum measurement which decides \orlang{\threesat} with constant error. Applying \cref{thm:Dru15-6.19} gives us that there is a non-uniform, classical, deterministic, polynomial-time many-to-one reduction from \threesat to a problem in $\mathrm{pr}$-$\mathrm{QSZK}$. This implies that there are non-uniform quantum statistical zero-knowledge proofs for all languages in $\mathrm{NP}$. 
\end{proof}

Now that we have a quantized version of \cref{thm:AK23-7.7}, we can finally give our quantized version of \cref{thm:no-randomized-reductions-from-seth}. 

\begin{corollary}
    \label{thm:quantum-no-ksat-to-ug-reduction}
    For any constants $a, a_1 > 0$, there exists a constant $k_0$ such that for any $k > k_0$ there does not exist a (non-adaptive) quantum polynomial time reduction from \ksat on $n$ variables to \weightedmaxcut on $n^{a_1}$ vertices with constant error bound which makes at most $n^a$ calls to the \weightedmaxcut oracle unless there are non-uniform, statistical quantum zero-knowledge proofs for all languages in $\mathrm{NP}$.
\end{corollary}
\begin{proof}
    By \cref{thm:weighted-main-theorem} and \cref{thm:quantum-cvp2-incompressible}.
\end{proof}

This corollary shows a substantial barrier against proving the QSETH-hardness of \weightedmaxcut since, assuming that there is a language in $\mathrm{NP}$ without a statistical quantum zero-knowledge proof, there are no quantum fine-grained reductions from \ksat to \weightedmaxcut. Although the complexity of $\mathrm{QSZK}$ is not as well understood as its classical counterpart $\mathrm{SZK}$ (where it is known that the polynomial hierarchy collapses if $\mathrm{NP}$ is contained in it), there is an oracle relative to which $\mathrm{QSZK}$ is not in $\mathrm{UP}$ \cite{MW18} ($\mathrm{UP}$ is a subset of $\mathrm{NP}$\footnote{See \url{https://complexityzoo.net/Complexity_Zoo:U\#up} for the relationship between UP and NP.}). Again, note that these theorems apply to all variants of unweighted/weighted gap/exact \weightedmaxcut.

% \cref{thm:quantum-no-ksat-to-ug-reduction} also gives us the following corollary against fine-grained reductions to \weightedmaxtwosat. 

% \begin{corollary}
%     \label{thm:quantum-no-ksat-to-ug-reduction}
%     For any constants $a, a_1 > 0$, there exists a constant $k_0$ such that for any $k > k_0$ there does not exist a (non-adaptive) quantum polynomial time reduction from \ksat on $n$ variables to \weightedmaxtwosat on $n^{a_1}$ vertices with constant error bound which makes at most $n^a$ calls to the \weightedmaxtwosat oracle unless there are non-uniform, statistical quantum zero-knowledge proofs for all languages in $\mathrm{NP}$.
% \end{corollary}
% \begin{proof}
%     By \cref{thm:weighted-main-theorem} and Theorem .
% \end{proof}

\section{Faster algorithms for \minusesgap \maxtwolintwo}
\label{sec:faster-alg-max-cut}

\subsection{Preliminaries}
Restricting a \cvpsubp instance to a binary choice of coefficients reduces the total number of possible lattice vectors to exactly $2^n$; \cite{KS20} discovered a reduction from \binarycvp to $\mathrm{Nearest}$ $\mathrm{Neighbor}$ that exploits the structure of \binarycvp to further restrict the search space. The main idea is as follows: let $0<a<1$ be a fraction; if the original lattice $\mathcal{L}$ with $n$ bases vectors was split into two lattices $\mathcal{L}_1$ and $\mathcal{L}_2$, where $\mathcal{L}_1$ had $a n$ of the basis vectors of $\mathcal{L}$ and $\mathcal{L}_2$ had the remaining $(1 - a) n$ basis vectors, then the closest vector $\bm{v} \in \mathcal{L}$ to $\bm{t}$ in the original lattice is the sum of two vectors from the split lattices $\bm{v}_1 \in \mathcal{L}_1$ and $\bm{v}_2 \in \mathcal{L}_2$; so $\bm{v}_2 - \bm{t}$ must be close to $\bm{v}_1$. By taking all the binary vectors of $\mathcal{L}_1$ to be the points in a $\mathrm{Nearest}$ $\mathrm{Neighbor}$ instance, one only needs to make $2^{(1-a) n}$ queries to $\mathrm{Nearest}$ $\mathrm{Neighbor}$ (one for each binary lattice vector in $\mathcal{L}_2$) to find a pair of vectors $\bm{v}_1 + \bm{v}_2$ close to $\bm{t}$ since the closest vector in the original lattice is promised to be binary. With a sufficiently fast algorithm for $\mathrm{Nearest}$ $\mathrm{Neighbor}$ this could yield an algorithm which solves \binarycvp in sub-$2^n$-time. 

\begin{theorem}
  [{\cite[Section 3]{KS20}}]
  \label{thm:ks20-reduction}
    For any $\rho > 0$, $C > 1$, and $0 < a < 1$, if there is an algorithm for \gammaANN with ambient dimension $d$ under the $\ell_p$ norm with $N^C$ preprocessing time and $N^{\rho}$ query time, then there is an  $O(2^{C a n} + 2^{(1 + \rho) (1-a) n})$-time algorithm for \gammabinarycvp with ambient dimension $d$, where $n$ denotes the rank of the input lattice. 
\end{theorem}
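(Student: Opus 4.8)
The plan is to turn the meet-in-the-middle search for a binary closest vector into a batch of nearest-neighbor queries. Given a \gammabinarycvp instance $(B,\bm{t},r)$ of rank $n$ and ambient dimension $d$, I would split the basis columns into a block $B^{(1)}$ containing $an$ of the columns and a block $B^{(2)}$ containing the remaining $(1-a)n$ columns. Since $B^{(1)}$ and $B^{(2)}$ are submatrices of a basis, their columns are linearly independent, so the $2^{an}$ binary combinations $B^{(1)}\bm{y}_1$ (with $\bm{y}_1$ ranging over the relevant binary coordinate set) are distinct, and likewise for $B^{(2)}$. Any binary lattice vector can be written $B\bm{y} = B^{(1)}\bm{y}_1 + B^{(2)}\bm{y}_2$, and the identity $\|B^{(1)}\bm{y}_1 + B^{(2)}\bm{y}_2 - \bm{t}\|_p = \|B^{(1)}\bm{y}_1 - (\bm{t}-B^{(2)}\bm{y}_2)\|_p$ shows that ``is there a binary lattice vector within distance $r$ of $\bm{t}$?'' is equivalent to ``is there a choice of $\bm{y}_2$ for which the shifted target $\bm{t}-B^{(2)}\bm{y}_2$ has one of the points $\{B^{(1)}\bm{y}_1\}$ within distance $r$?''.

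Concretely, the algorithm first enumerates $A := \{B^{(1)}\bm{y}_1\}$, a set of $N = 2^{an}$ points in $\mathbb{R}^d$, and preprocesses it with the \gammaANN data structure for the $\ell_p$ norm using the promise radius $r$; this costs $N^C = 2^{Can}$ time, which also dominates the $2^{an}$-time enumeration of $A$. It then loops over all $2^{(1-a)n}$ binary vectors $\bm{y}_2$, issuing one query on $\bm{t}-B^{(2)}\bm{y}_2$ for each, at cost $N^{\rho}$ per query, and outputs ``$\mathrm{dist}_p(\mathcal{L},\bm{t})\le r$'' if and only if at least one query reports a database point within distance $r$. Accounting for the $2^{Can}$ preprocessing together with the $2^{(1-a)n}$ queries at cost $N^{\rho}$ each gives the running time claimed in the statement (the ambient dimension and the norm are passed through unchanged; the $\{a,b\}$-binary case is handled by first applying the affine map sending $\{a,b\}$ to $\{0,1\}$, which only translates $\bm{t}$).

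Correctness is a promise-checking argument in two directions. For completeness, on a YES instance the promised binary closest vector $B\bm{y}^\ast = B^{(1)}\bm{y}^\ast_1 + B^{(2)}\bm{y}^\ast_2$ satisfies $\|B\bm{y}^\ast-\bm{t}\|_p = \mathrm{dist}_p(\mathcal{L},\bm{t})\le r$, so the query on $\bm{t}-B^{(2)}\bm{y}^\ast_2$ has the point $B^{(1)}\bm{y}^\ast_1\in A$ at distance $\le r$; this query satisfies the \gammaANN promise on its near side, so it is answered affirmatively and the algorithm outputs YES. For soundness, on a NO instance every lattice vector — in particular every binary one, which is a sub-collection of all lattice vectors — lies at distance $>\gamma r$ from $\bm{t}$, so for every $\bm{y}_2$ the nearest point of $A$ to $\bm{t}-B^{(2)}\bm{y}_2$ is at distance $>\gamma r$; each query then satisfies the \gammaANN promise on its far side, is answered negatively, and the algorithm outputs NO.

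The one subtlety to handle carefully — and the main point I would watch — is that the \gammaANN data structure is only guaranteed to behave correctly on queries whose nearest database distance is either $\le r$ or $\ge \gamma r$: on a YES instance a non-witness query $\bm{t}-B^{(2)}\bm{y}_2$ can have nearest database distance strictly inside the forbidden window $(r,\gamma r)$ and be answered arbitrarily. This never corrupts the output, however, because (i) a spurious affirmative answer is still correct on a YES instance, (ii) the witness query is guaranteed to be answered affirmatively so the disjunction over all queries is correct, and (iii) on a NO instance no query lands in the window at all. Everything else is bookkeeping: confirming $|A| = 2^{an}$ via linear independence, confirming the reduced instance is a legitimate \gammaANN instance (same $d$, same $\ell_p$, same approximation factor $\gamma$), and summing the preprocessing and query costs to obtain the stated time bound.
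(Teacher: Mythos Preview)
Your proposal is correct and follows exactly the approach the paper sketches in the paragraph preceding the theorem (the theorem itself is only cited, not proved, in the paper): split the basis into two blocks, load the $2^{an}$ binary combinations from one block as the \gammaANN database, and query with the $2^{(1-a)n}$ shifted targets from the other block; your handling of the promise on non-witness queries is also the right thing to check.

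One arithmetic point worth flagging: your cost accounting yields $2^{Can} + 2^{(1-a)n}\cdot N^{\rho} = 2^{Can} + 2^{(1-a+\rho a)n}$, whereas the theorem as printed states $2^{Can} + 2^{(1+\rho)(1-a)n}$. These two expressions agree only at $a=\tfrac12$, which is the sole value the paper actually uses downstream. Either the printed second exponent carries a typo or the block roles are parametrized differently in the original source; in any case your reduction is the intended one and the discrepancy does not affect any application in the paper.
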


Now we use the following algorithm for \gammaANN which is fast enough to solve \gammabinarycvp[2] with $\gamma > 1$ in sub-$2^n$ time. 

\aralgorithm*

Note that \cite{AR15} propose algorithms for the search variant of \gammaANN which can be trivially adapted to solve the decision variant.

\begin{remark}
    \label{rmk:alg-for-binary-cvp-2}
    By combining the previous two theorems and picking the parameter $a=\frac{1}{2}$, we know that there is an $O(2^{ \left( \frac{1}{2} + \frac{1}{4 \gamma^2 - 2} + o(1) \right) n})$-time algorithm for \gammabinarycvp[2] and an $O(2^{ \left( \frac{1}{2} + \frac{1}{4 \gamma - 2} + o(1) \right) n})$-time algorithm for \gammabinarycvp[1]. 
\end{remark}

\subsection{An $O(2^{n(\frac{1}{2} + \frac{\varepsilon}{4\varsigma} + o(1))})$-time classical algorithm for \minusesgap \maxtwolintwo}

Since our reduction from $(1-\varepsilon, 1-\varsigma)$-gap \maxtwolintwo to \gammabinarycvp[2] is linear in time and size, this naturally results in an algorithm of the same time complexity for $(1-\varepsilon, 1-\varsigma)$-gap \maxtwolintwo. 

\begin{theorem}
    \label{thm:classical-max-cut-alg}
    For any $\varepsilon,\varsigma \in (0,1)$, there is an $O(2^{n(\frac{1}{2} + \frac{1}{4\varsigma/\varepsilon - 2} + o(1))})$-time algorithm for $(1-\varepsilon, 1-\varsigma)$-gap \maxtwolintwo. 
\end{theorem}
\begin{proof}
    By \cref{thm:weighted-main-theorem} we can reduce any $(1-\varepsilon, 1-\varsigma)$-gap \maxtwolintwo instance to a \gammabinarycvp[2] instance which can then be reduced to a \gammaANN instance and solved in the specified time by \cref{rmk:alg-for-binary-cvp-2}. 
\end{proof}

\begin{remark}
    If $\varsigma / \varepsilon$ is an increasing function of $n$, then the above algorithm has time complexity $O(2^{ \left( \frac{1}{2} + o(1) \right) n})$. 
\end{remark}

\begin{remark}
    When $\varsigma / \varepsilon$ is greater than a sufficiently large constant, then the above algorithm has time complexity close to $O(2^{ \left( \frac{1}{2} + \frac{\varepsilon}{4\varsigma} + o(1) \right) n})$. 
\end{remark}

\subsection{An $O(2^{n(\frac{1}{3} + \frac{\varepsilon}{6\varsigma} + o(1))})$-time quantum algorithm for \minusesgap \maxtwolintwo}
%\cite{KS20} \cite{AR15} Need to appropriately apply Grover to an existing algorithm for ANN. Apply Grover by spitting the lattice into n/3 and 2n/3 vectors. Preprocess the n/3 vectors and use Grover to query the remaining 2n/3 vectors. (follows from section 4.3)
%\chw{todo}
In this section, we present a Grover-based quantum algorithm for gap Max-Cut that provides a polynomial speedup over the classical algorithm presented above. This quantum algorithm assumes that we can efficiently load data from a quantum memory, i.e., that the operation
\begin{align*}
  \ket{i, b, z_1,\ldots, z_m} \mapsto \ket{i, z_i, z_1,\ldots, z_{i-1},b, z_{i+1},\ldots, z_m}
\end{align*}
takes $O(\log m)$ time, where $\ket{z_1, \ldots, z_m}$ is some data in a quantum random access memory with $m$ qubits. This operation is needed when we use Grover's algorithm to query the data structure in superposition.

\begin{theorem}
    \label{rmk:qalg-for-binary-cvp-2}
    There exists an $O\Bigl(2^{ \bigl( \frac{1}{3} + \frac{2}{6 \gamma^2 - 3} + o(1) \bigr) n}\Bigr)$-time quantum algorithm for \gammabinarycvp[2]. 
\end{theorem}
\begin{proof}
  This quantum algorithm is based on using Grover's algorithm~\cite{Gro96} to speed up the search process of the reduction in \cref{thm:ks20-reduction}, while still using \cref{thm:ar15-alg} as the underlying data structure. More specifically, instead of partitioning the basis vectors into halves, we choose $a=1/3$ in \cref{thm:ks20-reduction}. Then we create a data structure for the lattice with $2^{n/3}$ binary points according to the preprocessing procedure in \cref{thm:ar15-alg}. We then query the data structure for each of the $2^{2n/3}$ binary points in the other lattice using the query procedure in \cref{thm:ar15-alg}, and it is straightforward to use Grover's algorithm to reduce the number of queries to $O(2^{n/3})$. Considering the time for each query and the time for constructing the data structure, the theorem follows.
\end{proof}

\begin{remark}
    When $\varsigma / \varepsilon$ is greater than a sufficiently large constant, then the above algorithm has time complexity close to $O(2^{ \left( \frac{1}{2} + \frac{\varepsilon}{6\varsigma} + o(1) \right) n})$. 
\end{remark}

Note that in the above theorem, when $\varsigma / \varepsilon$ is an increasing function of $n$, the dependence on $\gamma$ in the time complexity of the gap \maxtwolintwo will be hidden in $o(1)$. In addition, an intermediate of the reduction is a quantum algorithm for \gammabinarycvp[2], as noted in the following remark.

\begin{theorem}
    \label{thm:quantum-max-cut-alg}
    For any $\varepsilon,\varsigma \in (0,1)$, there is an $O(2^{n(\frac{1}{3} + o(\frac{1}{6\varsigma/\varepsilon - 3}) + o(1))})$-time quantum algorithm for $(1-\varepsilon, 1-\varsigma)$-gap \maxtwolintwo. 
\end{theorem}
\begin{proof}
    Apply \cref{thm:weighted-main-theorem} with $p=2$ and then use \cref{rmk:qalg-for-binary-cvp-2} to solve the resulting \binarycvp[2] instance.
\end{proof}

\bibliographystyle{alpha}
\bibliography{resources/cvpMaxCut}

\newpage

\appendix
\section{Other restrictions on natural fine-grained reductions to CVP}
\label{sec:other-difficulties}

\subsection{Further barriers against hardness of approximate lattice problems} % \yk{todo}

\label{sec:binary-cvp-upper-bound-contradicts-seth}

As described in \cref{sec:main-reduction}, it is quite natural to assume binary coefficients when creating reductions to lattice problems; all the fine-grained reduction gadgets used by \cite{BGS17, ABGS21} and this work do so. Lower bounds on \gammacvp shown using such reductions are then also bounds on \gammabinarycvp.

One of the main open problems raised in \cite{BGS17, ABGS21} is whether or not we can prove SETH-based fine-grained lower bounds for \gammacvp for larger approximation ratios $\gamma$ than those in \cite{BGS17, ABGS21}. We cannot hope for such results for some larger constant $\gamma$ for \gammacvptwo due to \cite{AK23} and for \gammacvp due to \cite{EV22}; this still leaves open the possibility of improving $\gamma$ beyond 2 or 3. We remark that even such small improvements are impossible using reductions similar to theirs, even for odd $p$, because those reductions work for at least one of \gammabinarycvp[1] and \gammabinarycvp[2]. Since there are $2^{ ( 1 - \Omega(1) ) n }$-time algorithms for those problems with $\gamma$ as small as 2 as shown in \cref{rmk:alg-for-binary-cvp-2}, any reduction from \ksat to those problems would refute SETH.

\subsection{Natural reductions from \threesat to \maxtwolintwo require $4 n / 3$ vertices}

A natural reduction is defined as belonging to the following set of reductions in \cite{ABGS21}. 

\begin{definition}[{\cite[Definition 6.5]{ABGS21}}]
  A \emph{natural reduction} from $k$-SAT to $\mathrm{CVP}_2$ is a mapping from $k$-SAT instances on $n$ variables to $\mathrm{CVP}_2$ instances $B \in \mathbb{R}^{d \times n'}, \bm{t} \in \mathbb{R}^d$ such that there exists a fixed function $f : \{ 0, 1 \}^n \rightarrow \mathbb{Z}^{n'}$ with the following property: if the input $k$-SAT instance is satisfiable, then $\bm{x} \in \{ 0, 1 \}^n$ is a satisfying assignment if and only if $f (\bm{x}) \in \mathrm{CVP}_2 (\bm{t}, B)$, where $\mathrm{CVP}_2 (\bm{t}, B) \coloneqq \{ \bm{z} \in \mathbb{Z}^n : \| B \bm{z} - \bm{t} \|_2 = \mathrm{dist}_2 (\bm{t}, \mathcal{L}) \}$ is the set of the coordinates of the closest lattice points to $\bm{t}$. The mapping and the function may not be efficiently computable. 
\end{definition}

The intuition behind the definition is that a natural reduction maps any $k$-SAT variable assignment to a point in the lattice. If the $k$-SAT instance is satisfiable, then the corresponding point in the lattice to the satisfying assignment is close to the target vector. If the instance is not satisfiable, any of the corresponding points in the lattice to the assignments are far from the target vector. Then they showed the following theorem.

\begin{theorem}[{\cite[Theorem 6.6 ]{ABGS21}}]
  Every natural reduction from 3-SAT on $n$ variables to $\mathrm{CVP}_2$ on rank $n'$ lattices must have $n' > 4 (n - 2) / 3$. 
\end{theorem}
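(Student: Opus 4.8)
The plan is to leverage the two rigid features of any natural reduction: the encoding $f:\{0,1\}^n\to\mathbb{Z}^{n'}$ is the \emph{same} function for every instance, while each $\mathrm{CVP}_2(\bm t_\phi,B_\phi)$ is the set of minimizers over $\mathbb{Z}^{n'}$ of a \emph{strictly} convex quadratic. First I would check that $f$ is injective: if $f(\bm x)=f(\bm y)$ for $\bm x\neq\bm y$, take the 3-CNF $\phi$ that conjoins, for each variable $i$, the clause $(x_i\vee x_i\vee x_i)$ or $(\bar x_i\vee\bar x_i\vee\bar x_i)$ according to $\bm x_i$, so that $\bm x$ is its unique satisfying assignment; then $f(\bm x)\in\mathrm{CVP}_2(\bm t_\phi,B_\phi)$ whereas $f(\bm y)=f(\bm x)$ must be excluded since $\bm y$ is unsatisfying, a contradiction. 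Hence the $2^n$ vectors $f(\bm x)$ are distinct.

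Next I would extract the basic geometric constraint. For a rank-$n'$ basis $B$ the function $\bm z\mapsto\|B\bm z-\bm t\|_2^2$ has Hessian $B^\top B\succ 0$, so if $\bm v_1\neq\bm v_2$ are both minimizers over $\mathbb{Z}^{n'}$ and $\bm v_1\equiv\bm v_2\pmod{2\mathbb{Z}^{n'}}$ then $(\bm v_1+\bm v_2)/2\in\mathbb{Z}^{n'}$ is \emph{strictly} closer to $\bm t$, which is impossible; thus reduction mod $2\mathbb{Z}^{n'}$ is injective on $\mathrm{CVP}_2(\bm t,B)$, so $|\mathrm{CVP}_2(\bm t,B)|\le 2^{n'}$ (in particular this set contains no three-term arithmetic progression). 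Applying this to the tautology, where all $2^n$ images lie in a single $\mathrm{CVP}_2$ set, already yields the warm-up bound $n'\ge n$; the theorem improves the constant to $4/3$.

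For the improvement I would set aside two variables, partition the remaining $n-2$ into $\ell=\lfloor(n-2)/3\rfloor$ triples $T_1,\dots,T_\ell$, and consider both the single-clause instances $\phi_i$ consisting of one 3-clause on $T_i$ (say $x_a\vee x_b\vee x_c$, which forbids exactly the all-false local pattern) and their conjunction, all satisfiable. The geometric ingredient is the elementary fact that if seven vertices of a parallelepiped are equidistant from a point then so is the eighth (the equidistance equations force the defining edge-vectors to be pairwise orthogonal, i.e. the parallelepiped to be a box). Applying it to the linear image under $B_{\phi_i}$ shows that if $f$ were affine on the cube spanned by the variables of $T_i$ (other coordinates fixed), then the all-false pattern, whose seven cube-neighbours all map into $\mathrm{CVP}_2(\bm t_{\phi_i},B_{\phi_i})$, would also map in, contradicting that it is unsatisfying. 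Hence $f$ is non-affine on every such triple cube; combining this with the mod-$2$ injectivity of the previous step across the conjunction instance, I would argue that realizing the forbidden-vertex pattern on a triple forces a \emph{fourth} independent lattice coordinate for that triple, and that, since disjoint triples constrain disjoint sub-cubes of the fixed $f$, these demands add, giving $n'\ge 4\ell$ and hence $n'>4(n-2)/3$ after handling the rounding.

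The main obstacle is exactly this last accounting. Because $f$ is fixed but $B_\phi,\bm t_\phi$ vary with $\phi$, the ``one extra coordinate per triple, additively'' claim has to be phrased as a purely combinatorial statement about the values of $f$ (and their residues mod $2$) on the triple sub-cubes that is forced simultaneously by the tautology, each $\phi_i$, and the conjunction, and that provably cannot be realized in fewer than $4\ell$ coordinates. Injectivity, the strict-convexity/mod-$2$ bound, and the parallelepiped-on-a-sphere lemma are the available tools, but welding them into a clean global rank lower bound that is robust to all admissible choices of $(B_\phi,\bm t_\phi)$, rather than a per-triple heuristic, is where the real work lies.
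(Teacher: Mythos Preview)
This theorem is not proved in the present paper; it is quoted from \cite{ABGS21} (their Theorem~6.6) and immediately used as a black box to derive the subsequent statement about natural reductions to Max-Cut. There is therefore no proof here to compare your proposal against.

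On the merits of your sketch itself: the ingredients you isolate --- injectivity of $f$ via single-solution formulas, mod-$2$ injectivity of closest-vector sets from strict convexity of $\|B\,\cdot - \bm t\|_2^2$, and the parallelepiped-on-a-sphere lemma --- are the right ones and match the toolkit of the original \cite{ABGS21} argument. But you yourself flag the gap: the passage from ``$f$ is non-affine on each triple sub-cube'' to ``each triple costs a fourth independent coordinate, additively across disjoint triples'' is not carried out, and that passage is exactly where the substance of the proof lies. One concrete weakness in what you have written: the conclusion you extract from the single-clause instance is too weak as phrased. A non-affine $f$ can still have an image that is the vertex set of a parallelepiped (take any non-affine permutation of $\{0,1\}^3$ into itself), and then the sphere lemma still forces the eighth image onto the minimum-distance sphere regardless of which seven you started with. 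The correct consequence is the stronger statement that the eight images over the triple are not the vertices of \emph{any} parallelepiped. Even granting that strengthening, ``disjoint triples contribute independent extra coordinates'' does not follow from mod-$2$ injectivity alone, and nothing in your outline bridges that gap; supplying that bridge is the actual content of the \cite{ABGS21} proof.
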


Combining this theorem with \cref{thm:weighted-main-theorem} implies the following theorem, as the rank of our reduction is exactly equal to the number of vertices in the underlying Max-Cut instance and our reduction is also a natural reduction.

\begin{theorem}
  \label{thm:natural-reductions-ksat-mc-four-thirds}
  Any reduction from 3-SAT on $n$ variables to \minusesgap \maxtwolintwo on $n'$ variables such that there is a fixed function $f : \{0,1\}^n \rightarrow \{0,1\}^{n'}$ such that $\bm{x} \in \{0,1\}^n$ is a satisfying assignment if and only if $f(\bm{x})$ satisfies at least $1-\varepsilon$ constraints, must have $n' > 4 (n - 2) / 3$.
\end{theorem}
% \section{SVP}
% \jh{check this}

\end{document}